\newcommand{\ket}[1]{| #1 \rangle}
\newcommand{\bra}[1]{\langle #1 |}
\newcommand{\braket}[2]{\langle #1 , #2 \rangle}
\newcommand{\proj}[2]{| #1 \rangle\!\langle #2 |}
\newcommand{\id}{\ensuremath{\mathds{1}}}
\newcommand{\cC}{\mathcal{C}}
\newcommand{\cH}{\mathcal{H}}
\newcommand{\cI}{\mathcal{I}}
\newcommand{\cK}{\mathcal{K}}
\newcommand{\cL}{\mathcal{L}}
\newcommand{\cM}{\mathcal{M}}
\newcommand{\cS}{\mathcal{S}}
\newcommand{\cV}{\mathcal{V}}
\newcommand{\cW}{\mathcal{W}}
\def\beq{\begin{equation}}
\def\eeq{\end{equation}}
\def\bq{\begin{quote}}
\def\eq{\end{quote}}
\def\ben{\begin{enumerate}}
\def\een{\end{enumerate}}
\def\bit{\begin{itemize}}
\def\eit{\end{itemize}}
\def\ra{\rightarrow}
\def\lb{\left(}
\def\rb{\right)}
\def\lset{\lbrace}
\def\rset{\rbrace}
\def\r|{\right|}
\def\lbr{\left[}
\def\rbr{\right]}
\def\ident{\textnormal{id}}
\def\one{\id}
\newcommand\C{\mathbbm{C}}
\newcommand\R{\mathbbm{R}}
\newcommand\N{\mathbbm{N}}
\newcommand{\Tr}{\text{Tr}}
\DeclareMathOperator{\Pos}{Pos}
\DeclareMathOperator{\EB}{EB}
\DeclareMathOperator{\LorFact}{LorFact}
\DeclareMathOperator{\LorEA}{LorEA}
\DeclareMathOperator{\maxEA}{maxEA}
\DeclareMathOperator{\LorEB}{LorEB}
\DeclareMathOperator{\hs}{hs}
\DeclareMathOperator{\Aut}{Aut}
\DeclareMathOperator{\Nuc}{Nuc}
\DeclareMathOperator{\conv}{conv}
\theoremstyle{plain}
\newtheorem{thm}{Theorem}[section]
\newtheorem{lem}[thm]{Lemma}
\newtheorem{cor}[thm]{Corollary}
\newtheorem{prop}[thm]{Proposition}
\newtheorem{defn}[thm]{Definition}
\theoremstyle{definition}
\newtheorem{example}{Example}
\begin{document}
\title{{Annihilating and breaking Lorentz cone entanglement}}

\author{Francesca La Piana}
\address{Department of Mathematics, University of Oslo}
\email{franla@math.uio.no}

\author{Alexander M\"uller-Hermes}
\address{Department of Mathematics, University of Oslo}
\email{muellerh@math.uio.no}

\date{\today}

\begin{abstract}
Linear maps between finite-dimensional ordered vector spaces with orders induced by proper cones $\cC_A$ and $\cC_B$ are called entanglement breaking if their partial application sends the maximal tensor product $\cK\otimes_{\max} \cC_A$ into the minimal tensor product $\cK\otimes_{\min} \cC_B$ for any proper cone $\cK$. We study the larger class of Lorentz-entanglement breaking maps where $\cK$ is restricted to be a Lorentz cone of any dimension, i.e., any cone over a Euclidean ball. This class of maps appeared recently in the study of asymptotic entanglement annihilation and it is dual to the linear maps factoring through Lorentz cones. Our main results establish connections between these classes of maps and operator ideals studied in the theory of Banach spaces. For operators $u:X\ra Y$ between finite-dimensional normed spaces $X$ and $Y$ we consider so-called central maps which are positive with respect to the cones $\cC_A=\cC_X$ and $\cC_B=\cC_Y$. We show how to characterize when such a map factors through a Lorentz cone and when it is Lorentz-entanglement breaking by using the Hilbert-space factorization norm $\gamma_2$ and its dual $\gamma^*_2$. We also study the class of Lorentz-entanglement annihilating maps whose local application sends the Lorentzian tensor product $\cC_A\otimes_{L} \cC_A$ into the minimal tensor product $\cC_B\otimes_{\min} \cC_B$. When $\cC_A$ is a cone over a finite-dimensional normed space and $\cC_B$ is a Lorentz cone itself, the central maps of this kind can be characterized by the $2$-summing norm $\pi_2$. Finally, we prove interesting connections between these classes of maps for general cones, and we identify examples with particular properties, e.g., cones with an analogue of the $2$-summing property.

\end{abstract}

\maketitle

\section{Introduction}

Convex cones naturally arise in quantum physics, when studying properties of entanglement~\cite{horodecki2009quantum}. Here, the most important examples are the cones of positive semidefinite $d\times d$ matrices $M_d(\C)^+$ inside the real vector space $M_d(\C)_{sa}$ of selfadjoint $d\times d$ matrices with entries in $\C$. Recently, the study of entanglement has been extended to other settings involving more general examples of convex cones (see, e.g.,~\cite{lami2018non,plavala2023general} and references therein). 

Consider proper cones $\cC_A\subseteq \cV_A$ and $\cC_B\subseteq \cV_B$ in finite-dimensional real vector spaces $\cV_A$ and $\cV_B$, respectively. A convex cone $\cC\subseteq \cV$ is called \emph{proper} if it is closed, pointed (i.e., $\cC\cap (-\cC)=\lset 0\rset$), and generating (i.e., $\cC+(-\cC)=\cV$). The cones of positive semidefinite matrices mentioned above are examples of proper cones. To study entanglement we introduce cones in the algebraic tensor product $\cV_A\otimes \cV_B$ derived from the cones $\cC_A$ and $\cC_B$. The \emph{minimal tensor product} is given by
\[
\cC_A\otimes_{\min} \cC_B = \conv\lset x\otimes y ~:~x\in \cC_A , y\in \cC_B\rset ,
\]
and the \emph{maximal tensor product} is defined as 
\[
\cC_A\otimes_{\max} \cC_B = (\cC^*_A\otimes_{\min} \cC^*_B)^* .
\]
Here, $\cC^*\subseteq \cV^*$ denotes the dual cone of $\cC\subseteq \cV$, i.e., the cone of functionals $f:\cV\ra \R$ satisfying $f(x)\geq 0$ for any $x\in\cC$. The terminology "minimal" and "maximal" used here, refers to the size of the respective set, and we have
\[
\cC_A\otimes_{\min} \cC_B\subseteq \cC_A\otimes_{\max} \cC_B ,
\]
for any pair of proper cones. Analogous to separable states in quantum physics, we think of $\cC_A\otimes_{\min} \cC_B$ as the set of separable tensors. All other tensors in $\cC_A\otimes_{\max} \cC_B$ are considered to be entangled. Entanglement exists for a given pair of cones if and only if neither of them has a simplicial base~\cite{aubrun2021entangleability}.

\subsection{Entanglement annihilating and entanglement breaking maps} 

Interesting questions arise from studying how positive maps can transform different types of entanglement. A linear map $P:\cV_A\ra \cV_B$ is called \emph{$(\cC_A,\cC_B)$-positive} if $P(\cC_A)\subseteq \cC_B$, and we will simply call it \emph{positive} if it is clear from context which cones are used. Motivated by the class of entanglement breaking maps in the context of quantum entanglement~\cite{horodecki2003entanglement}, a linear map $P:\cV_A\ra \cV_B$ is called \emph{$(\cC_A,\cC_B)$-entanglement breaking} if 
\begin{equation}\label{equ:EBDef}
(\ident_{\cW} \otimes P)(\cK\otimes_{\max}\cC_A)\subseteq \cK\otimes_{\min}\cC_B ,
\end{equation}
for any proper cone $\cK\subseteq \cW$. Again, we will simply call a map entanglement breaking if it is clear from context which cones are used. A linear map $P:\cV_A\ra \cV_B$ is $(\cC_A,\cC_B)$-entanglement breaking if and only if it can be written as a convex combination of rank-$1$ maps $x\cdot f$ with $x\in \cC_B$ and $f\in \cC^*_A$ (see, e.g., \cite[Proposition 2.2.]{aubrun2023annihilating}). We denote by $\Pos(\cC_A,\cC_B)$ and $\EB(\cC_A,\cC_B)$ the convex cones of positive linear maps and entanglement breaking maps, respectively. 

A $(\cC_A,\cC_B)$-positive map might destroy entanglement by acting locally on each factor of the tensor product $\cC_A\otimes_{\max} \cC_A$. Specifically, we call a linear map $P\in \Pos(\cC_A,\cC_B)$ \emph{$2$-max-entanglement annihilating} if 
\[
(P\otimes P)\lb \cC_A\otimes_{\max} \cC_A\rb\subseteq \cC_B\otimes_{\min} \cC_B .
\] 
Again, this definition is motivated by a similar class of maps studied in quantum information theory (see~\cite{moravvcikova2010entanglement}). The set $\maxEA_2(\cC_A,\cC_B)$ of all $2$-max-entanglement annihilating maps contains the convex cone $\EB(\cC_A,\cC_B)$ of entanglement breaking maps. However, in general $\maxEA_2(\cC_A,\cC_B)$ is not convex (see Appendix~\ref{app:maxEANotConv}), and $\EB(\cC_A,\cC_B)\subsetneq \maxEA_2(\cC_A,\cC_B)$. More generally, we can consider \emph{$k$-max-entanglement annihilating} maps $P\in \Pos(\cC_A,\cC_B)$ for any $k\in\N$ defined by 
\[
P^{\otimes k}\lb \cC^{\otimes_{\max} k}_A\rb\subseteq \cC^{\otimes_{\min} k}_B .
\]
In this article we will not consider $k$-max-entanglement annihilating maps for $k>2$, and in the following we will often say max-entanglement annihilating to denote the $k=2$ case.

\subsection{Breaking and annihilating Lorentz cone entanglement}

Entanglement breaking maps are characterized by \eqref{equ:EBDef}, i.e., the property that they break entanglement between the input cone $\cC_A$ and any reference cone $\cK$. Natural classes of positive maps arise by restricting the reference cones $\cK$ in \eqref{equ:EBDef} to cones from a specific family of proper cones. In this article, we will consider the family of Lorentz cones (also known as second-order cones): They are defined as
\[
\cL_n = \lset (t,x)\in \R\times \ell^n_2~:~t\geq \|x\|_2\rset ,
\]
for any $n\in\N$, where $\ell^n_2 = (\R^n,\|\cdot\|_2)$ is the $n$-dimensional Euclidean space. Lorentz cones are proper and have other nice properties (see Section \ref{sec:LorentzCones}). For proper cones $\cC_A\subseteq \cV_A$ and $\cC_B\subseteq \cV_B$, we call a linear map $P:\cV_A\ra \cV_B$ \emph{Lorentz-entanglement breaking} if 
\begin{equation}\label{equ:LorEB}
(\ident_{k+1}\otimes P)\lb \cL_k\otimes_{\max} \cC_A\rb\subseteq \cL_k\otimes_{\min} \cC_B ,
\end{equation}
holds for all $k\in\N$. It is clear that the Lorentz-entanglement breaking maps form a closed convex cone. We denote this cone by $\LorEB(\cC_A,\cC_B)$ and note that we have the inclusions
\[
\EB(\cC_A,\cC_B)\subseteq \LorEB(\cC_A,\cC_B)\subseteq \Pos(\cC_A,\cC_B)
\]
It is easy to show that the cone $\LorEB(\cC_A,\cC_B)$ is pointed and generating, and hence a proper cone in the vector space $\cL\lb \cV_A,\cV_B\rb$ of linear maps between $\cV_A$ and $\cV_B$. We also note that the Lorentz-entanglement breaking maps have an ideal property in the following sense: For proper cones $\cC_{A'}$ and $\cC_{B'}$ and positive maps $A\in \Pos(\cC_{A'},\cC_{A})$ and $B\in \Pos(\cC_{B},\cC_{B'})$ we have $BPA\in \LorEB(\cC_{A'},\cC_{B'})$ for any $P \in \LorEB(\cC_A,\cC_B)$. This shows that $\LorEB(\cC_A,\cC_B)$ is a mapping ideal, a notion we introduce below. 

To study the structure of the cone $\LorEB(\cC_A,\cC_B)\in \cL\lb \cV_A,\cV_B\rb$ it is helpful to look at its dual cone inside the space of linear maps $\cL\lb \cV_B,\cV_A\rb$. Throughout this article, we define duals of cones of linear maps using the trace functional: Given a cone $\cM\subseteq \Pos(\cC_A,\cC_B)$ of positive maps, we define its trace dual cone $\cM^*$ as the set of all linear maps $Q\in \cL(\cV_B,\cV_A)$ such that 
\[
\Tr\lbr PQ\rbr \geq 0 ,
\]
for all $P\in \cM$. Here, the trace is defined on $\cL(\cV_B,\cV_B)$. In Theorem~\ref{thm:traceDualLorEBAndClosure} we show the trace dual cone of $\LorEB(\cC_A,\cC_B)$ to be the cone of \emph{Lorentz factorizable maps} given by 
\begin{equation}\label{equ:LorFactDef}
\LorFact(\cC_B,\cC_A) := \text{conv}\lset AB ~:~ B\in \Pos(\cC_B,\cL_k), A\in \Pos(\cL_k,\cC_A) , k\in\N\rset .
\end{equation}
This cone is closed and proper, and the dimension $k$ can be restricted to the minimum of $\dim(\cV_A)$ and $\dim(\cV_B)$. For interesting examples of Lorentz factorizable maps between cones of positive semidefinite matrices see~\cite[Section 5.2.]{aubrun2023annihilating}.

It will be helpful to identify linear maps $\cL(\cV_A,\cV_B)$ with tensors in the algebraic tensor product $\cV^*_A\otimes \cV_B$. Proper cones of linear maps give rise to tensor products between cones via this correspondence. In the case of the Lorentz factorizable maps, this gives rise to the following tensor product:

\begin{defn}[Lorentzian tensor product]
For proper cones $\cC_A$ and $\cC_B$ we define the \emph{Lorentzian tensor product} as 
\[
\cC_A\otimes_{L} \cC_B := \conv\lset (A\otimes B)\lb \hat{I}_k\rb ~:~ A\in \Pos(\cL_k,\cC_A),B\in \Pos(\cL_k,\cC_B), k\in\N \rset ,
\]
where $\hat{I}_k=\sum^k_{i=0} e_i\otimes e_i$ denotes the identity tensor in $\cL_k\otimes_{\max} \cL_k$.
\end{defn}

The Lorentzian tensor product defines a tensor product in the category of ordered vector spaces since 
\[
\cC_A\otimes_{\min} \cC_B\subseteq \cC_A\otimes_{L} \cC_B \subseteq \cC_A\otimes_{\max} \cC_B .
\]
It associates to any pair of proper cones $\cC_A\subseteq \cV_A$ and $\cC_B\subseteq \cV_B$ a reasonable cross cone (see~\cite{de2020tensor}) and it is an example of a tensor cone (see~\cite[Section 8]{de2020tensor}), i.e., the ordered vector space analogue of a tensor norm in the category of Banach spaces. Like similar constructions for Banach spaces, the Lorentzian tensor product is not associative (see Appendix~\ref{sec:NonAssoc}).   

Similar to the $2$-max-entanglement annihilating maps introduced above, we can consider linear maps annihilating the entanglement described by the Lorentzian tensor product. A linear map $P\in \Pos(\cC_A,\cC_B)$ is called \emph{Lorenz-entanglement annihilating} if 
\[
(P\otimes P)\lb \cC_A\otimes_{L} \cC_A\rb\subseteq \cC_B\otimes_{\min} \cC_B .
\]
The set of these maps will be denoted by $\LorEA_2(\cC_A,\cC_B)$, and we do not know whether it is convex in general. However, in the special case where $\cC_B=\cL_n$ is a Lorentz cone we show that the set $\LorEA_2(\cC_A,\cL_n)$ is a convex cone.

\subsection{Main results}

For a finite-dimensional normed space $X$ over $\R$ we define the convex cone 
\[
\cC_X = \lset (t,x)\in \R\times X ~:~t\geq \|x\|\rset .
\]
The cones $\cC_X$ are proper and for $X=\ell^n_2$ we recover the Lorentz cones introduced above. An important subset of positive maps with respect to cones $\cC_X$ and $\cC_Y$ over finite-dimensional normed spaces $X$ and $Y$ is given by the so-called \emph{central maps} preserving the direct sum structure in the definition of $\cC_X$. Specifically, a linear map $P:\R\times X\ra \R\times Y$ is called central if there are $\lambda\in\R$ and $u:X\ra Y$ such that $P((t,x))=(\lambda t,u(x))$. We write $P=\lambda\oplus u$ in this case. It was observed in~\cite[Proposition 2.25.]{lami2018non} that properties of a central map $P=\lambda\oplus u$ can be studied using norms of the map $u$: Specifically, the map $P$ is positive if and only if the operator norm satisfies $\|u\|\leq \lambda$ and $P$ is entanglement breaking if and only if $\text{Nuc}(u)\leq \lambda$. Here, $\text{Nuc}(u)$ denotes the nuclear norm of the operator $u$. It is a natural question, whether other properties of positive maps can be characterized in a similar way. 

For a linear map $u:X\ra Y$ the $2$-summing norm $\pi_2(u)$ is defined as the smallest $C\geq 0$ satisfying 
\[
\lb\sum_{i} \|u(x_i)\|^2\rb^{1/2} \leq C \max_{f\in B_{X^*}} \lb \sum_i |f(x_i)|^2\rb^{1/2} ,
\] 
for every finite subset $\lset x_i\rset$ of $X$. When $X$ and $Y$ are Euclidean spaces, the $2$-summing norm $\pi_2(u:X\ra Y)$ equals the Hilbert-Schmidt norm $\text{hs}(u)$, see~\cite[Proposition 10.1.]{tomczak1989banach}. Our first main result relates the $2$-summing norm of maps into Euclidean spaces to the annihilation of Lorentz-cone entanglement. 

\begin{thm}\label{thm:main2}
Let $u:X\ra \ell^n_2$ be a linear map and $\lambda\in \R$. We have $\lambda\oplus u\in\LorEA_2(\cC_X,\cL_n)$ if and only if $\pi_2(u) \leq \lambda$.
\end{thm}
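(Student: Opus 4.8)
The plan is to characterize membership in $\LorEA_2(\cC_X,\cL_n)$ by unwinding the definition of the Lorentzian tensor product and reducing everything to a statement about the map $u\otimes u$ acting on identity tensors. First I would record the following reformulation: since $\cC_X\otimes_L \cC_X$ is, by definition, the convex hull of tensors $(A\otimes B)(\hat I_k)$ with $A,B\in\Pos(\cL_k,\cC_X)$, the map $\lambda\oplus u$ lies in $\LorEA_2(\cC_X,\cL_n)$ if and only if for all $k$ and all such $A,B$ we have $\big((\lambda\oplus u)A\otimes(\lambda\oplus u)B\big)(\hat I_k)\in \cL_n\otimes_{\min}\cL_n$. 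Now $(\lambda\oplus u)A$ is a positive map $\cL_k\to\cL_n$, and conversely every positive map $\cL_k\to\cL_n$ need not be of this central form — but the key point is that it suffices to test on the extreme rays, and moreover the identity tensor $\hat I_k$ is invariant under the self-duality of the Lorentz cone. So the real content is: $\lambda\oplus u$ Lorentz-annihilates iff for every pair of positive maps $S,T\colon\cL_k\to\cL_n$ of the form $S=(\lambda\oplus u)A$, $T=(\lambda\oplus u)B$, the tensor $(S\otimes T)(\hat I_k)$ is separable in $\cL_n\otimes_{\min}\cL_n$.

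Next I would use the known description of $\cL_n\otimes_{\min}\cL_n$ and, crucially, of positive maps between Lorentz cones. A central map $\mu\oplus v\colon\cL_k\to\cL_m$ is positive iff $\|v\|\le\mu$ (operator norm), and a general positive map $\cL_k\to\cL_m$ can be brought to a normal form after composing with cone automorphisms; the automorphism group of $\cL_k$ acts transitively enough that one reduces to central maps up to Lorentz boosts. The composition $(\lambda\oplus u)\circ A$ where $A\in\Pos(\cL_k,\cC_X)$: here $A$ maps into $\cC_X$, and $\lambda\oplus u$ maps $\cC_X\to\cL_n$, so the composite is a positive map $\cL_k\to\cL_n$ that factors through $\cC_X$. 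The plan is to show that $(S\otimes T)(\hat I_k)$ being separable for \emph{all} admissible $S,T$ is equivalent to a norm inequality on $u$. Writing $A=a\oplus \alpha$ and $B=b\oplus\beta$ in central normal form (reducing the general case to this by boosts, which preserve $\hat I_k$ up to the relevant symmetry), the composite central maps are $\lambda a\oplus u\alpha$ and $\lambda b\oplus u\beta$ with $\|\alpha\|\le a$, $\|\beta\|\le b$. After normalizing $a=b=1$, the separability condition on $(\lambda\oplus u\alpha)\otimes(\lambda\oplus u\beta)$ applied to $\hat I_k$ should reduce, using the explicit structure of $\cL_n\otimes_{\min}\cL_n$ (tensors $(s,\cdot)$ with a positive-semidefinite/norm condition governed by a $\gamma_2$- or trace-type estimate), to requiring $\big\|(u\alpha)(u\beta)^T\big\|$-type quantity be bounded — and optimizing over contractions $\alpha,\beta$ with $\|\alpha\|,\|\beta\|\le 1$ recovers exactly $\pi_2(u)^2\le\lambda^2$ via the standard identity $\pi_2(u)=\sup\{\|u\alpha\|_{\mathrm{hs}} : \|\alpha\|\le 1\}$ (Pietsch-type characterization of the $2$-summing norm as a factorization quantity), and $\pi_2(u\colon X\to\ell_2^n)=\mathrm{hs}$ when the target is Euclidean as quoted from~\cite{tomczak1989banach}.

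For the converse direction I would start from $\pi_2(u)\le\lambda$, use the Pietsch factorization of $u$ (through a subspace of an $L_2$ space, with the factorization constant $\pi_2(u)$), and verify directly that this factorization forces $(\lambda\oplus u)$ to send every element of $\cC_X\otimes_L\cC_X$ into $\cL_n\otimes_{\min}\cL_n$: concretely, the Pietsch measure yields an operator $\lambda\oplus w$ from $\cL_k$ into another Lorentz cone $\cL_N$ that is positive and through which the relevant composites factor, and since the Lorentzian tensor product composed with a map into a Lorentz cone lands in the minimal tensor product (because $\cL_N\otimes_{\min}\cL_n$ absorbs the Lorentzian structure appropriately — this uses self-duality of Lorentz cones and the fact that $\cL\otimes_L\cL=\cL\otimes_{\max}\cL$ has a clean description), one concludes separability. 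The main obstacle I anticipate is the reduction of an \emph{arbitrary} positive map $\cL_k\to\cC_X$ (appearing as the "$A$" and "$B$" in the definition of $\otimes_L$) to central/normal form: one must argue that, because $\hat I_k$ is fixed by $\Aut(\cL_k)$ acting diagonally (up to transpose), testing on all $A,B$ is the same as testing on central $A,B$ after absorbing a Lorentz boost, and then that the boost can be pushed through $u$ without changing the Euclidean target norm — equivalently, one needs the extreme positive maps $\cL_k\to\cL_n$ to be understood well enough. I would handle this via the explicit generators of $\Aut(\cL_k)$ (the orthogonal group $O(k)$ together with the one-parameter boosts) and the fact that on the Euclidean target the $2$-summing/Hilbert--Schmidt norm is $O(n)$-invariant, so only the boost on the \emph{source} matters, and that is exactly the degree of freedom captured by the sup over contractions defining $\pi_2$.
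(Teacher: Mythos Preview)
Your ``only if'' direction is essentially right and matches the paper: restricting to central test maps $A=1\oplus\alpha$ with $\|\alpha\|\le 1$ and using $\pi_2(u)=\sup\{\hs(u\alpha):\|\alpha\|\le 1\}$ is exactly how the paper proceeds. What you gloss over is the intermediate fact that a central map $t\oplus v$ between \emph{Lorentz} cones is max-entanglement annihilating iff $\hs(v)\le t$; this is not automatic and in the paper is a separate theorem (proved by an explicit computation with Lorentz boosts, since those generate the non-rank-one extremal positive maps between Lorentz cones).

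The ``if'' direction has a genuine gap. You propose to reduce an arbitrary $A\in\Pos(\cL_k,\cC_X)$ to central form by absorbing a Lorentz automorphism on the source side. That does not work: the target $\cC_X$ has no useful automorphisms in general, so composing with $\Aut(\cL_k)$ on the right only lets you \emph{dual-normalize} $A$ (i.e.\ force $A^*(e_0^{X^*})=e_0$), which is far from making $A$ central. A dual-normalized $A\colon\cL_k\to\cC_X$ can mix the ``time'' and ``space'' coordinates of $\cC_X$ in arbitrary ways, and the diagonal action of $\Aut(\cL_k)$ on $\hat I_k$ does not let you simultaneously centralize both $A$ and $B$. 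So the claimed reduction to the inequality $\hs(u\alpha)\le\lambda$ for all contractions $\alpha$ only gives the necessary condition, not sufficiency.

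The paper avoids this obstacle by a different route: instead of analysing all $A$, it factors the map $u$ itself. Using the $2$-nuclear factorization $u=u_2\Delta u_1$ with $u_1:X\to\ell^k_\infty$, $u_2:\ell^k_2\to\ell^n_2$ contractions and $\Delta$ a diagonal contraction, one writes $1\oplus u=(1\oplus u_2)(1\oplus\Delta)(1\oplus u_1)$ with the outer factors positive. The entire burden then shifts to showing that $1\oplus\Delta\in\LorEA_2(\cC_{\ell^k_\infty},\cL_k)$. This is a nontrivial technical lemma: one composes with an arbitrary $Q\in\Pos(\cL_m,\cC_{\ell^k_\infty})$, dual-normalizes $Q$ via a source automorphism, and then exploits the explicit description of the extremal dual-normalized maps $\cL_m\to\cC_{\ell^k_\infty}$ (they have a block structure with a sign column and a contraction block) to exhibit $(1\oplus\Delta)Q$ as a positive map times a central map with Hilbert--Schmidt norm $\le 1$. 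Your Pietsch-factorization idea is morally the same first step, but the vague phrase about $\cL_N\otimes_{\min}\cL_n$ ``absorbing the Lorentzian structure'' hides exactly this lemma, and without it the argument does not close.
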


Theorem \ref{thm:main2} characterizes the central Lorentz-entanglement annihilating maps when the output cone is a Lorentz cone in terms of the $2$-summing norm. Our second main result relates the $2$-summing norm of a linear map $u:X\ra Y$ for general finite-dimensional normed spaces $X$ and $Y$ to a particular factorization property of the associated central map. 

\begin{thm}
For $u:X\ra Y$ and $\lambda\in \R$, the following are equivalent:
\begin{enumerate}
\item We have $\pi_2(u) \leq \lambda$.
\item For any positive map $S\in \Pos(\cL_m,\cC_X)$ there exists a positive map $P\in \Pos(\cL_k,\cC_Y)$ and $Q\in \maxEA_2(\cL_m,\cL_k)$ such that $(\lambda\oplus u)S = PQ$.
\end{enumerate}
\end{thm}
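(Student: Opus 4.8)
The plan is to prove the two implications separately, after one reduction. First I would record that for any proper cone $\cC$ one has $\maxEA_2(\cL_m,\cC)=\LorEA_2(\cL_m,\cC)$: by self-duality $\cL_m^{*}=\cL_m$, every tensor in $\cL_m\otimes_{\max}\cL_m$ corresponds to a positive map $R\in\Pos(\cL_m,\cL_m)$ and hence equals $(R\otimes\ident)(\hat{I}_m)$, so $\cL_m\otimes_{\max}\cL_m=\cL_m\otimes_{L}\cL_m$. I would also use the routine facts that $\LorEA_2$ is stable under pre- and post-composition with positive maps, that a central map $\mu\oplus W\colon\cC_{Z^{*}}\to\cC_{Z}$ is entanglement breaking iff $\Nuc(W)\le\mu$, that $\mu\oplus a\in\LorEA_2(\cC_X,\cL_n)$ iff $\pi_2(a)\le\mu$ (Theorem~\ref{thm:main2}), and that $\pi_2(u)=\sup\{\pi_2(uv\colon\ell^m_2\to Y):v\colon\ell^m_2\to X,\ \|v\|\le1,\ m\in\N\}$ — the last directly from the definition of $\pi_2$, since a weak-$\ell_2$ family in $X$ is exactly the image of the standard basis under a contraction $\ell^m_2\to X$.

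For (1)$\Rightarrow$(2): Pietsch's factorisation theorem turns $\pi_2(u)\le\lambda$ into $\lambda\oplus u=(\lambda\oplus w_0)\circ(1\oplus a_0)$ with $a_0\colon X\to\ell^N_2$, $\pi_2(a_0)\le1$ (hence $\|a_0\|\le1$), and $w_0\colon\ell^N_2\to Y$, $\|w_0\|\le\lambda$; thus $1\oplus a_0\in\Pos(\cC_X,\cL_N)$ and $\lambda\oplus w_0\in\Pos(\cL_N,\cC_Y)$. Given $S\in\Pos(\cL_m,\cC_X)$, set $P:=\lambda\oplus w_0$ and $Q:=(1\oplus a_0)\circ S$, so $PQ=(\lambda\oplus u)S$ and $P\in\Pos(\cL_N,\cC_Y)$. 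Theorem~\ref{thm:main2} gives $1\oplus a_0\in\LorEA_2(\cC_X,\cL_N)$, and precomposing with the positive map $S$ keeps this in $\LorEA_2$, so $Q\in\LorEA_2(\cL_m,\cL_N)=\maxEA_2(\cL_m,\cL_N)$, as needed.

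For (2)$\Rightarrow$(1): applying (2) to $S=1\oplus v$ for a contraction $v\colon\ell^m_2\to X$ yields $\lambda\oplus(uv)=PQ$ with $P\in\Pos(\cL_k,\cC_Y)$ and $Q\in\maxEA_2(\cL_m,\cL_k)$, and composing $(Q\otimes Q)(\cL_m\otimes_{\max}\cL_m)\subseteq\cL_k\otimes_{\min}\cL_k$ with $(P\otimes P)(\cL_k\otimes_{\min}\cL_k)\subseteq\cC_Y\otimes_{\min}\cC_Y$ shows $\lambda\oplus(uv)\in\maxEA_2(\cL_m,\cC_Y)=\LorEA_2(\cL_m,\cC_Y)$. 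Since $\pi_2(u)=\sup_v\pi_2(uv)$, it suffices to prove the key lemma: if $w\colon\ell^m_2\to Y$ and $\lambda\oplus w\in\LorEA_2(\cL_m,\cC_Y)$, then $\pi_2(w)\le\lambda$. Dualising the defining inclusion — using $\cL_m\otimes_{L}\cL_m=\cL_m\otimes_{\max}\cL_m$, $(\cL_m\otimes_{\max}\cL_m)^{*}=\cL_m\otimes_{\min}\cL_m$, $(\cC_Y\otimes_{\min}\cC_Y)^{*}=\cC_{Y^{*}}\otimes_{\max}\cC_{Y^{*}}$, and $(\lambda\oplus w)^{*}=\lambda\oplus w^{*}$ — rewrites the hypothesis as $\lambda\oplus w^{*}\in\maxEA_2(\cC_{Y^{*}},\cL_m)$; equivalently, $(\lambda\oplus w^{*})\,W\,(\lambda\oplus w)\in\EB(\cL_m,\cL_m)$ for every $W\in\Pos(\cC_Y,\cC_{Y^{*}})$.

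Reading off $\pi_2(w)\le\lambda$ from this last condition is, I expect, the main obstacle. Testing only against central $W=\nu\oplus\zeta$ yields merely $\Nuc(w^{*}\zeta w)\le\lambda^{2}\|\zeta\|$ for all $\zeta\colon Y\to Y^{*}$, which is strictly weaker than $\pi_2(w)\le\lambda$ because the nuclear norm permits cancellation among the directions $w e_1,\dots,w e_m$; so one must construct suitable non-central positive maps $W\colon\cC_Y\to\cC_{Y^{*}}$ that keep these directions apart, and extract the bound from the entanglement-breaking condition. Equivalently, one is led to prove the companion characterisation that a central map lies in $\maxEA_2(\cC_Z,\cL_n)$ precisely when the $2$-summing norm of its adjoint is at most the scalar — a genuine strengthening of Theorem~\ref{thm:main2}, the $\otimes_{\max}$-condition being strictly stronger than the $\otimes_{L}$-one — and then apply it to $w^{*}$ to obtain $\pi_2(w^{**})=\pi_2(w)\le\lambda$. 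Everything else is bookkeeping in the tensor-cone calculus.
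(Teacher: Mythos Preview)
Your argument for $(1)\Rightarrow(2)$ is correct and essentially the same as the paper's (the paper uses the $2$-nuclear factorisation through $\ell^k_\infty$ rather than Pietsch's factorisation, but this is immaterial).

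For $(2)\Rightarrow(1)$, your approach also works, but you have talked yourself into an obstacle that is not there. You correctly arrive at $\lambda\oplus w^{*}\in\maxEA_2(\cC_{Y^{*}},\cL_m)$, and you correctly note that the $\otimes_{\max}$-condition is stronger than the $\otimes_{L}$-one, i.e., $\maxEA_2(\cC_{Y^{*}},\cL_m)\subseteq\LorEA_2(\cC_{Y^{*}},\cL_m)$. But then the implication you need, namely $\lambda\oplus w^{*}\in\maxEA_2(\cC_{Y^{*}},\cL_m)\Rightarrow\pi_2(w^{*})\le\lambda$, is \emph{immediate} from this inclusion together with Theorem~\ref{thm:main2}. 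The ``genuine strengthening'' you worry about would be the \emph{reverse} implication $\pi_2\le\lambda\Rightarrow\maxEA_2$, which is indeed not covered by Theorem~\ref{thm:main2}, but which you do not need. So your proof is in fact complete once you delete the final paragraph and replace it by this one-line observation.

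The paper's route for $(2)\Rightarrow(1)$ is slightly more direct: rather than passing from the factorisation $\lambda\oplus(uv)=PQ$ to the weaker statement $\lambda\oplus(uv)\in\maxEA_2(\cL_m,\cC_Y)$ and then dualising, it keeps the factorisation and invokes Proposition~\ref{prop:EquivalencesPi22} (whose implication $(2)\Rightarrow(1)$ says precisely that a central map $t\oplus v\colon\cL_m\to\cC_Y$ admitting such a factorisation has $\pi_2(v)\le t$). That proposition is in turn proved via trace-duality against Theorem~\ref{thm:main2} and self-duality of $\pi_2$, so the underlying ingredients are the same as in your corrected argument.
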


Finally, we will consider the classes of Lorentz factorizable maps and maps that break entanglement with Lorentz cones. For a linear map $u:X\ra Y$ the \emph{Hilbert-space factorization norm}~\cite[Theorem 13.9.]{tomczak1989banach} is given by
\[
\gamma_2(u) = \inf \lset \|v_1\|\|v_2\|~:~u=v_2v_1, v_1:X\ra \ell^k_2, v_2:\ell^k_2\ra Y, k=\text{rk}(u)\rset .
\] 
Its trace-dual is the so-called \emph{2-dominated norm} $\gamma^*_2$. Remarkably, the 2-dominated norm can be characterized itself via a Hilbert-space factorization (see~\cite[Theorem 13.9.]{tomczak1989banach}) and we have 
\begin{equation}\label{equ:gamma2Star}
\gamma^*_2(u) = \inf \lset \pi_2(v_1)\pi_2(v^*_2)~:~u=v_2v_1, v_1:X\ra \ell^k_2, v_2:\ell^k_2\ra Y\rset .
\end{equation}
Following, the strategy from above, we can ask whether these norms characterize properties of central maps. We show the following characterization:

\begin{thm}\label{thm:Main1}
For any $u:X\ra Y$ and any $\lambda\in \R$ we have the following:
\begin{enumerate}
\item We have $\lambda\oplus u\in \LorFact(\cC_X,\cC_Y)$ if and only if $\gamma_2(u)\leq \lambda$.
\item We have $\lambda\oplus u\in \LorEB(\cC_X,\cC_Y)$ if and only if $\gamma^*_2(u)\leq \lambda$.
\end{enumerate}
\end{thm}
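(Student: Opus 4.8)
The plan is to prove part~(1) directly from the description of $\LorFact$ as a convex hull of two-step compositions through Lorentz cones, and then to deduce part~(2) from part~(1) by trace duality.

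For the ``if'' direction of~(1) I would argue as follows. Assuming $\gamma_2(u)\le\lambda$ and using that in finite dimensions the infimum defining $\gamma_2$ is attained, write $u=v_2v_1$ with $v_1\colon X\to\ell^k_2$, $v_2\colon\ell^k_2\to Y$ and $\|v_1\|\,\|v_2\|\le\lambda$ (the case $u=0$, where $\lambda\ge 0$ and $\lambda\oplus 0$ factors through $\cL_1$, is trivial, so take $v_1\ne 0$). Then $B:=\|v_1\|\oplus v_1$ and $A:=(\lambda/\|v_1\|)\oplus v_2$ are positive, because a central map $s\oplus w$ between norm cones is positive exactly when $\|w\|\le s$ and $\lambda/\|v_1\|\ge\|v_2\|$; since $AB=\lambda\oplus u$, this gives $\lambda\oplus u\in\LorFact(\cC_X,\cC_Y)$.

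For the ``only if'' direction of~(1), suppose $\lambda\oplus u=\sum_i p_i A_iB_i$ with $p_i\ge 0$, $\sum_i p_i=1$, $B_i\in\Pos(\cC_X,\cL_{k_i})$ and $A_i\in\Pos(\cL_{k_i},\cC_Y)$. The first step is to pass to central parts: writing a linear map $C\colon\R\oplus X\to\R\oplus Y$ in block form and letting $\kappa(C)$ be its block-diagonal part, I would note that $\kappa(C)=\frac12\big(C+\Theta_Y C\,\Theta_X\big)$, where $\Theta_X=1\oplus(-\ident_X)$ and $\Theta_Y=1\oplus(-\ident_Y)$ are automorphisms of $\cC_X$ and $\cC_Y$; hence $\kappa$ is linear, preserves positivity, and fixes central maps, and applying it gives $\lambda\oplus u=\sum_i p_i\,\kappa(A_iB_i)$. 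Writing $\kappa(A_iB_i)=\lambda_i'\oplus u_i'$, one has $\sum_i p_i\lambda_i'=\lambda$ and $\sum_i p_i u_i'=u$, so by subadditivity of $\gamma_2$ it is enough to prove the single-step claim: if $A\in\Pos(\cL_k,\cC_Y)$, $B\in\Pos(\cC_X,\cL_k)$ and $\kappa(AB)=\lambda'\oplus u'$, then $\gamma_2(u')\le\lambda'$. To prove this I would write $A=\left(\begin{smallmatrix}a & c^*\\ d & D\end{smallmatrix}\right)$ and $B=\left(\begin{smallmatrix}\alpha & \beta^*\\ e & E\end{smallmatrix}\right)$, so that $\lambda'=a\alpha+\langle c,e\rangle$ and $u'=DE+d\,\beta^*$; the positivity of $A$ and $B$ readily gives $\|D\|\le a$, $\|E\|\le\alpha$, $\|d\|\le a$, $\|\beta^*\|\le\alpha$, $\|c\|\le a$ and $\|e\|\le\alpha$ by testing on suitable rays and averaging over the central involutions. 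It then remains to construct $v_1\colon X\to\ell^{k+1}_2$ and $v_2\colon\ell^{k+1}_2\to Y$ with $v_2v_1=u'$ and $\|v_1\|\,\|v_2\|\le a\alpha+\langle c,e\rangle$, which I would do using the full positivity of $A$ and $B$, conveniently phrased via the semidefinite description $(s,y)\in\cL_k\iff\left(\begin{smallmatrix}sI & y\\ y^* & s\end{smallmatrix}\right)\succeq 0$ and the self-duality of the Lorentz cone. This last estimate is where I expect the main difficulty to lie: a direct triangle-inequality bound only gives $\|v_1\|\,\|v_2\|\le 2a\alpha$, so one must show that positivity genuinely couples the size of the rank-one correction $d\,\beta^*$ to the gap $\lambda'-\|D\|\,\|E\|$. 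The degenerate case $\langle c,e\rangle=-a\alpha$, where positivity forces $D$ and $E$ to be rank one and $u'=0$, illustrates why the claim should be true.

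For part~(2) I would use trace duality. By Theorem~\ref{thm:traceDualLorEBAndClosure}, $\lambda\oplus u\in\LorEB(\cC_X,\cC_Y)$ if and only if $\Tr[(\lambda\oplus u)Q]\ge 0$ for all $Q\in\LorFact(\cC_Y,\cC_X)$. A block computation shows that, because $\lambda\oplus u$ is central, $\Tr[(\lambda\oplus u)Q]$ depends only on the central part $\mu\oplus v$ of $Q$ and equals $\lambda\mu+\Tr[uv]$. As in part~(1), the twirl (now for maps $\R\oplus Y\to\R\oplus X$) sends $\LorFact(\cC_Y,\cC_X)$ into itself, so the set of central parts of its elements equals the set of central maps it contains, which by part~(1) (with $X$ and $Y$ interchanged) is $\{\mu\oplus v:\gamma_2(v)\le\mu\}$. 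Therefore $\lambda\oplus u\in\LorEB(\cC_X,\cC_Y)$ if and only if $\lambda\mu+\Tr[uv]\ge 0$ for all $v$ and all $\mu\ge\gamma_2(v)$, equivalently $-\Tr[uv]\le\lambda\,\gamma_2(v)$ for every $v$. Since $\gamma^*_2$ is the trace-dual norm of $\gamma_2$ and the unit ball of $\gamma_2$ is symmetric, $\sup\{-\Tr[uv]:\gamma_2(v)\le 1\}=\gamma^*_2(u)$, so this condition is exactly $\gamma^*_2(u)\le\lambda$.
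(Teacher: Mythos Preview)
Your overall architecture is sound, and your deduction of part~(2) from part~(1) via trace duality and the central twirl $\kappa$ is correct; it is in fact the mirror image of the paper's route, which proves~(2) first and then deduces~(1) by duality (Corollary~\ref{cor:gamma2Lor}). The genuine gap is in the ``only if'' direction of part~(1): you correctly isolate the single-step claim --- that if $A\in\Pos(\cL_k,\cC_Y)$, $B\in\Pos(\cC_X,\cL_k)$ and $\kappa(AB)=\lambda'\oplus u'$, then $\gamma_2(u')\le\lambda'$ --- but you do not prove it. The semidefinite description of $\cL_k$ you invoke does not by itself produce the factorization with the sharp constant $a\alpha+\langle c,e\rangle$; as you yourself note, the na\"ive estimate overshoots by a factor of~$2$, and you give no concrete mechanism for closing this gap.

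The paper supplies exactly this missing ingredient, though organized differently. It first reduces (inside the proof of Theorem~\ref{thm:LorEBviaGamma2Star}) to the case $X=\ell^n_\infty$, $Y=\ell^m_1$, and then proves what amounts to your single-step claim for those spaces and for \emph{extremal} maps: Lemma~\ref{lem:LorentzFactorizableDiag} shows that for extremal dual-normalized $Q\in\Pos(\cL_k,\cC_{\ell^n_\infty})$ and extremal $P\in\Pos(\cL_k,\cC_{\ell^m_\infty})$, the product $QP^*$ has diagonal blocks $(t,v)$ with $\gamma_2(v)\le t$. The proof is an explicit construction of Gram vectors in $\ell^{k+1}_2$, case-by-case, and it rests on a full classification of the extreme points of $\Pos_w(\cL_k,\cC_{\ell^n_\infty})$ (Theorem~\ref{thm:ExtrPosLkClinfty}), which in turn analyzes the intersection of two affine Lorentz cones. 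Your programme could be completed along exactly these lines --- reduce to $\ell_1/\ell_\infty$ spaces via the ideal property of $\gamma_2$, reduce to extremal $A$ and $B$ by convexity and your twirl, and then construct the Gram vectors --- but that last step is the technical heart of the argument, not a routine estimate, and nothing in your proposal indicates how to carry it out.
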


Since the property of a central map being Lorentz-entanglement breaking is characterized by the $2$-dominated norm it is natural to ask whether a factorization similar to \eqref{equ:gamma2Star} holds for all Lorentz-entanglement breaking maps. We leave this question for future research, but we note the following partial result:  

\begin{thm}
For proper cones $\cC_A\subseteq \cV_A$ and $\cC_B\subseteq \cV_B$ let $A\in \LorEA_2\lb \cC_A,L_m\rb$ and $B\in \LorEA_2\lb \cC^*_B,L_m\rb$ for some $m\in\N$. Then, we have $B^*A\in\LorEB\lb \cC_A,\cC_B\rb$. 
\end{thm}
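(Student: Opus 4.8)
The plan is to unpack the three classes involved and then chain together the definitions. Recall that $A\in\LorEA_2(\cC_A,\cL_m)$ means $(A\otimes A)(\cC_A\otimes_L\cC_A)\subseteq\cL_m\otimes_{\min}\cL_m$, and similarly for $B$ acting on $\cC_B^*$. The target statement is that $B^*A\in\LorEB(\cC_A,\cC_B)$, i.e. for every $k\in\N$,
\[
(\ident_{k+1}\otimes B^*A)\lb\cL_k\otimes_{\max}\cC_A\rb\subseteq\cL_k\otimes_{\min}\cC_B .
\]
First I would record a ``transpose/duality'' reformulation: $B^*A$ is Lorentz-entanglement breaking if and only if its trace dual pairs nonnegatively with all Lorentz factorizable maps, by Theorem~\ref{thm:traceDualLorEBAndClosure}; equivalently, using $\LorFact(\cC_B,\cC_A)=\LorEB(\cC_A,\cC_B)^*$, it suffices to show $\Tr[(B^*A)F]\geq 0$ for every $F\in\LorFact(\cC_B,\cC_A)$, i.e. for every $F=A'B'$ with $B'\in\Pos(\cC_B,\cL_k)$, $A'\in\Pos(\cL_k,\cC_A)$. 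This reduces the claim to a statement about the composition $B'B^*AA':\cL_k\to\cL_k$ and its trace.

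The key step is the following transitivity-type observation. Consider the composite map $AA'\colon\cL_k\to\cL_m$ obtained from $A'\in\Pos(\cL_k,\cC_A)$ and $A\in\Pos(\cC_A,\cL_m)$. Because $A\in\LorEA_2(\cC_A,\cL_m)$ and $A'$ witnesses that $(A'\otimes A')(\hat I_k)\in\cC_A\otimes_L\cC_A$ (this is exactly how the Lorentzian tensor product is built from positive maps out of Lorentz cones), applying $A\otimes A$ gives
\[
\big((AA')\otimes(AA')\big)(\hat I_k)=(A\otimes A)\big((A'\otimes A')(\hat I_k)\big)\in\cL_m\otimes_{\min}\cL_m .
\]
Identifying a linear map $\cL_k\to\cL_m$ with a tensor in $\cL_k^*\otimes\cL_m$ via $\hat I_k$ (note $\cL_k$ is self-dual), this says precisely that the map $AA'$ lies in the minimal tensor cone $\cL_k\otimes_{\min}\cL_m$, hence is entanglement breaking as a map between Lorentz cones — equivalently, $AA'$ is a convex combination of rank-one positive maps $x\cdot f$ with $x\in\cL_m$, $f\in\cL_k^*=\cL_k$. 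The same argument applied to $B\in\LorEA_2(\cC_B^*,\cL_m)$ and $B'\in\Pos(\cC_B,\cL_k)$, after dualizing (so that $(B')^*\in\Pos(\cL_k,\cC_B^*)$ plays the role of the inner map), shows that $B(B')^*\colon\cL_k\to\cL_m$ is likewise a convex combination of rank-one positive maps, hence so is its transpose $B'B^*\colon\cL_m\to\cL_k$. Then $B'B^*AA'$ is a composition of two entanglement-breaking maps between Lorentz cones, so it is itself a positive (indeed entanglement-breaking) map $\cL_k\to\cL_k$, and a positive map from a proper cone to itself has nonnegative trace against the identity tensor; unwinding, $\Tr[(B^*A)(A'B')]=\Tr[B'B^*AA']\ge 0$, which is what we needed.

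I expect the main obstacle to be bookkeeping the identifications of maps with tensors and the associated duality of cones — in particular verifying carefully that ``$AA'$ lies in $\cL_k\otimes_{\min}\cL_m$ as a tensor'' is the same as ``$AA'$ is entanglement breaking between the cones $\cL_k$ and $\cL_m$,'' and that the appropriate transposes/duals line up so that $B$'s hypothesis about $\cC_B^*$ (rather than $\cC_B$) is exactly what feeds into the pairing with $F=A'B'$. A secondary technical point is justifying that it suffices to test against $F$ of the extremal product form $A'B'$ rather than general convex combinations, which follows from linearity of the trace and closedness of $\LorEB$, together with Theorem~\ref{thm:traceDualLorEBAndClosure}. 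Once these identifications are set up cleanly, the argument is just the two applications of the $\LorEA_2$ hypothesis followed by composition of entanglement-breaking maps.
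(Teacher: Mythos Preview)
Your overall strategy---reduce via Theorem~\ref{thm:traceDualLorEBAndClosure} to showing $\Tr[B'B^*AA']\geq 0$ for all $A'\in\Pos(\cL_k,\cC_A)$ and $B'\in\Pos(\cC_B,\cL_k)$---matches the paper's exactly. The gap is in the ``key step.'' You correctly derive
\[
\big((AA')\otimes(AA')\big)(\hat I_k)\in\cL_m\otimes_{\min}\cL_m,
\]
but this tensor lives in $\R^{m+1}\otimes\R^{m+1}$, \emph{not} in $\R^{k+1}\otimes\R^{m+1}$. The Choi tensor of $AA'$ is $(\ident_{k+1}\otimes AA')(\hat I_k)$, which is a different object, and the Choi--Jamiolkowski identification you invoke does not send $AA'$ to $(AA'\otimes AA')(\hat I_k)$. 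So the claim ``this says precisely that the map $AA'$ lies in the minimal tensor cone $\cL_k\otimes_{\min}\cL_m$, hence is entanglement breaking'' is false. Indeed, any $P=1\oplus v$ with $\hs(v)\leq 1$ but $\Nuc(v)>1$ satisfies $(P\otimes P)(\hat I_k)\in\cL_m\otimes_{\min}\cL_m$ (by Theorem~\ref{thm:HSAndEALor}) while failing to be entanglement breaking; such $P$ arises as $AA'$ for suitable choices. Hence the composition $B'B^*AA'$ need not be entanglement breaking, and the final ``EB$\circ$EB'' argument does not go through.

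The fix is exactly what the paper does: what your computation actually establishes (once promoted via Proposition~\ref{prop:CharactLorEAIntoLk}) is that $AA'\in\maxEA_2(\cL_k,\cL_m)$ and, by the same reasoning applied to $B$ and $(B')^*$, that $B(B')^*\in\maxEA_2(\cL_k,\cL_m)$. Then Theorem~\ref{thm:PropsEALor}(2) gives $(B(B')^*)^T\in\maxEA_2(\cL_m,\cL_k)$, and the self-duality $\maxEA_2(\cL_k,\cL_m)^*=\maxEA_2(\cL_m,\cL_k)$ from Theorem~\ref{thm:PropsEALor}(3) yields $\Tr[B'B^*AA']=\Tr[(B(B')^*)^T(AA')]\geq 0$. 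In other words, replace ``entanglement breaking'' by ``max-entanglement annihilating'' throughout and trade the composition argument for the self-duality of $\maxEA_2$ between Lorentz cones.
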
 

It is a natural problem to identify the pairs of proper cones $\cC_A$ and $\cC_B$ for which the inclusions 
\begin{equation}\label{equ:InclPosLorFact}
\Pos\lb \cC_A,\cC_B\rb \subseteq \LorFact\lb \cC_A,\cC_B\rb,
\end{equation}
or
\begin{equation}\label{equ:InclLorEBEB}
\LorEB\lb \cC_A,\cC_B\rb \subseteq \EB\lb \cC_A,\cC_B\rb,
\end{equation}
are strict. Combining Theorem \ref{thm:Main1} with results from~\cite{aubrun2021asymptotic} (together with the main result from \cite{aubrun2024limit}) and \cite{aubrun2024limit}, we can identify many examples of finite-dimensional normed spaces $X$ and $Y$ for which this is the case when $\cC_A=\cC_X$ and $\cC_B=\cC_Y$. Specifically, we see that both inclusions are strict when $X=Y$ is not Euclidean, and from a conjecture in~\cite{aubrun2024limit} it would follow that both inclusions are strict if and only if neither $X$ nor $Y$ is Euclidean. Going beyond cones over normed spaces, we consider cones of positive semidefinite matrices in Section~\ref{sec:PSDcones}. Our Theorem~\ref{thm:traceDualLorEBAndClosure} shows that
\[
\LorEB(M_3(\C)^+,M_3(\C)^+)^*=\LorFact(M_3(\C)^+,M_3(\C)^+) ,
\]
and in particular no closure is needed in~\eqref{equ:LorFactDef}. Combining this with the main result from~\cite{fawzi2019representing} shows that the inclusions \eqref{equ:InclPosLorFact} and \eqref{equ:InclLorEBEB} are strict in this case. However, it turns out to be more challenging to construct explicit examples of maps in $\LorEB(M_3(\C)^+,M_3(\C)^+)$ that are not entanglement breaking. In Theorem \ref{thm:PSDFactorizationEB} we show that such examples cannot be obtained by factoring through Lorentz-entanglement breaking central maps between cones $\cC_X$ and $\cC_Y$ for any choice of finite-dimensional normed spaces $X$ and $Y$. Using the counterexample to Peres conjecture presented in~\cite{vertesi2014disproving}, we construct an explicit example of a Lorentz-entanglement breaking map between cones of $3\times 3$ positive semidefinite matrices $M_3(\C)^+$ that is not entanglement breaking.

\subsection{Connection to asymptotic entanglement annihilation}

Theorem \ref{thm:Main1} fits nicely into a larger research program trying to characterize $\infty$-entanglement annihilating maps~\cite{moravvcikova2010entanglement,aubrun2023annihilating}. A positive map $P\in \Pos(\cC_A,\cC_B)$ is called \emph{$\infty$-max-entanglement annihilating} if
\[
P^{\otimes k}(\cC^{\otimes_{\max} k}_A)\subseteq \cC^{\otimes_{\min} k}_B ,
\]
for all $k\in\N$. Entanglement breaking maps are $\infty$-max-entanglement annihilating, but it is unknown whether there are $\infty$-max-entanglement annihilating maps that are not entanglement breaking for any pair of proper cones. Finding examples of $\infty$-max-entanglement annihilating maps that are not entanglement breaking between cones of positive semidefinite matrices has been linked to some long-standing conjectures in quantum information theory (see~\cite{aubrun2023annihilating}). It was shown in~\cite{aubrun2023annihilating} that any $\infty$-max-entanglement annihilating map $P\in \Pos(\cC_A,\cC_B)$ is necessarily Lorentz-entanglement breaking. It has also been shown in~\cite{aubrun2023annihilating} that if the central map $1\oplus u$ is $\infty$-max-entanglement annihilating central map, then the map $u:X\ra Y$ satisfies  
\begin{equation}\label{equ:TPNorm}
\|u^{\otimes k}:X^{\otimes_\epsilon k}\ra Y^{\otimes_\pi k}\|\leq 1 ,
\end{equation}
for every $k\in\N$. Here, $X^{\otimes_\epsilon k}$ denotes the $k$-fold injective tensor product of the space $X$ with itself, and $Y^{\otimes_\pi k}$ denotes the $k$-fold projective tensor product of the space $Y$ with itself. It has been shown in~\cite{aubrun2024limit} that having \eqref{equ:TPNorm} for every $k\in\N$ implies that $\gamma^*_2(u)\leq 1$. Since any $\infty$-max-entanglement annihilating map also breaks entanglement with Lorentz cones, it is a natural question how the condition $\gamma^*_2(u)\leq 1$ and the property of $1\oplus u$ being Lorentz entanglement breaking are related. Theorem \ref{thm:Main1} answers this question by showing that these two properties are equivalent. We have summarized the various implications and contributions in the following figure:\hfill \\

\begin{tikzpicture}
\node[] (A) at (0,0) {\footnotesize $\forall k\in\N, (1\oplus u)^{\otimes k}\lb \cC^{\otimes_{\max} k}_X\rb\subseteq \cC^{\otimes_{\min} k}_Y$};
\node[] (B) at (7,0) {\footnotesize $\sup_{k\in\N} \|u^{\otimes k}:X^{\otimes_{\epsilon} k}\ra Y^{\otimes_{\pi} k}\|\leq 1$};
\node[] (C) at (0,-2) {\footnotesize $1\oplus u\in \LorEB(\cC_X,\cC_Y)$};
\node[] (D) at (7,-2) {\footnotesize $\gamma^*_2(u)\leq 1$};
\node[] (F) at (3.5,-0.5) {\footnotesize \cite[Theorem 6.5.]{aubrun2023annihilating}};
\node[] (G) at (-1.8,-1) {\footnotesize \cite[Corollary 5.5.]{aubrun2023annihilating}};
\node[] (H) at (8.8,-1) {\footnotesize \cite[Theorem 3.6.]{aubrun2024limit}};
\node[] (J) at (4,-2.4) {\footnotesize Theorem \ref{thm:Main1}};

\draw[-implies, double=white, double distance=3pt] (A.east) to[out=0, in=180, looseness=1] (B.west);
\draw[-implies, double=white, double distance=3pt] (A.south) to[out=270, in=90, looseness=1] (C.north);
\draw[implies-implies, double=white, double distance=3pt] (B.south) to[out=270, in=90, looseness=1] (D.north);
\draw[blue, implies-implies, double=white, double distance=3pt] (C.east) to[out=0, in=180, looseness=1] (D.west);

\end{tikzpicture}

\section{Preliminaries}

Throughout this article all vector spaces are over the reals and finite-dimensional. We start by reviewing some notions from the theory of Banach spaces and operator ideal norms.

\subsection{Operator ideals} Let $X$ and $Y$ denote normed spaces and $u:X\ra Y$ a linear operator. We denote the operator norm of $u$ by $\|u:X\ra Y\|$ or simply by $\|u\|$ if it is clear from context which spaces are considered. The \emph{nuclear norm} of $u$ is given by
\[
\Nuc(u) = \inf\lset \sum^k_{i=1} \|y_i\|\|x^*_i\|~:~k\in\N, y_i\in Y, x^*_i\in X^*,u=\sum^k_{i=1} y_ix^*_i \rset .
\]
When $X$ and $Y$ are Euclidean spaces the nuclear norm equals the trace norm. 

The operator norm and the nuclear norm are examples of so-called \emph{operator ideal norms}~\cite{tomczak1989banach}. In general, a norm $\alpha$ on the space $L(X,Y)$ (of linear operators between finite-dimensional normed spaces $X$ and $Y$) is called an operator ideal norm if $\alpha(u)=\|u\|$ whenever $u\in L(X,Y)$ is rank-$1$, and if $\alpha(w_2uw_1)\leq \|w_1\|\alpha(u)\|w_2\|$ holds for all $u\in L(X,Y)$, any finite-dimensional normed spaces $X',Y'$ and any linear operators $w_2:Y\ra Y'$, $w_1:X'\ra X$. For any operator ideal norm $\alpha$ on $L(X,Y)$, we may define the \emph{dual ideal norm} $\alpha^*$ on $L(Y,X)$ by
\[
\alpha^*(v) = \max\lset |\Tr\lbr uv\rbr|~:~u:X\ra Y, \alpha(u)\leq 1\rset .
\]
It is easy to show that $\alpha^*$ is itself an operator ideal norm and it is sometimes referred to as the trace-dual of $\alpha$. Examples of operator ideal norms include the \emph{Hilbert-space factorization norm} $\gamma_2$, its trace-dual the \emph{$2$-dominated norm} $\gamma^*_2$, and the \emph{$2$-summing norm} $\pi_2$, which were all introduced above. For more information on operator ideals see~\cite{tomczak1989banach}.

We will need some properties of the $2$-summing norm $\pi_2$. It is well-known that the $2$-summing norm is selfdual and that it can be characterized via so-called $2$-nuclear factorizations (see~\cite[Proposition 9.10.]{tomczak1989banach}). For any linear operator $u:X\ra Y$ we have  
\begin{equation}\label{equ:2NuclearFact}
\pi_2(u) = \inf \|v_2\|\|\Delta\|\|v_1\| ,
\end{equation}
where the infimum is over factorizations $\phi=v_2\Delta v_1$ for linear operators $v_1:X\ra \ell^k_\infty$ and $v_2:\ell^k_2\ra Y$, and diagonal operators $\Delta:\ell^k_\infty\ra \ell^k_2$ with $k\in\N$. When $X$ and $Y$ are Euclidean spaces, the $2$-summing norm $\pi_2(u:X\ra Y)$ equals the Hilbert-Schmidt norm $\text{hs}(u)$, see~\cite[Proposition 10.1.]{tomczak1989banach}. The $2$-summing norm of $u:X\ra Y$ can be expressed as 
\[
\pi_2(u) = \sup\lset \pi_2(uv)~:~v:\ell^n_2\ra X, \|v\|\leq 1\rset ,
\]
where $n$ can be chosen as the rank of the map $u$ (see~\cite[Proposition 9.7.]{tomczak1989banach}). 

\subsection{Mapping ideals and trace duality}

Analogous to the theory of Banach spaces, we can consider ideals of positive maps between ordered vector spaces. A \emph{mapping ideal} $\cI$ assigns to each pair of proper cones $\cC_A\subseteq \cV_A$ and $\cC_B\subseteq \cV_B$ a convex cone $\cI(\cC_A,\cC_B)\subseteq L(\cV_A,\cV_B)$ satisfying the following conditions:
\begin{enumerate}
\item The cones $\cI(\cC_A,\cC_B)$ are closed.
\item We have $\EB(\cC_A,\cC_B)\subseteq \cI(\cC_A,\cC_B)\subseteq \Pos(\cC_A,\cC_B)$.
\item For cones $\cC_{A'}\subseteq \cV_{A'}$ and $\cC_{B'}\subseteq \cV_{B'}$ and positive maps $A\in \Pos(\cC_{A'},\cC_A)$ and $B\in \Pos(\cC_{B},\cC_{B'})$ we have $BPA\in \cI\lb \cC_{A'},\cC_{B'}\rb$ whenever $P\in \cI\lb \cC_{A},\cC_{B}\rb$.
\end{enumerate}  

In particular, the cones $\cI(\cC_A,\cC_B)$ coming from a mapping ideal are proper cones in $L(\cV_A,\cV_B)$ for any pair of proper cones $\cC_A\subseteq \cV_A$ and $\cC_B\subseteq \cV_B$. Mapping ideals are the natural analogue in the category of ordered vector spaces of operator ideals studied in the category of Banach spaces (see, e.g.,~\cite{tomczak1989banach}). Similar concepts have been studied intensively in other settings (see, e.g., the mapping cones in~\cite{stormer2013positive}). 

Given a cone $\cM\subseteq \Pos(\cC_A,\cC_B)$ of positive maps, we define its trace-dual cone $\cM^*$ as the set of all linear maps $Q\in \cL(\cV_B,\cV_A)$ such that 
\[
\Tr\lbr PQ\rbr \geq 0 ,
\]
for all $P\in \cM$. It is easy to see that $\EB(\cC_B,\cC_A) = \Pos(\cC_A,\cC_B)^*$. The notion of trace-duality is equivalent to a common notion of duality based on the generalized Choi-Jamiolkowski isomorphism (see~\cite{choi1975completely,jamiolkowski1972linear}, and \cite{aubrun2023annihilating}) between $\cV_A^*\otimes \cV_B$ and $L(\cV_A,\cV_B)$. For a finite-dimensional normed space $\cV$, let $\hat{I}_\cV\in \cV^*\otimes \cV$ denote the tensor corresponding to the identity map $\ident_\cV:\cV\ra \cV$, and note that $\hat{I}_\cV = \sum^d_{i=1} e^*_i\otimes e_i$ using an Auerbach basis $\lset (e_i,e^*_i)\rset^d_{i=1}$ for $\cV$. Then, we have 
\begin{equation}\label{equ:ChoiJamiolkowski}
\Tr\lbr wu\rbr = \braket{(w\otimes \ident_{\cV_B^*})(\hat{I}_{\cV_B^*})}{(\ident_{\cV_A^*}\otimes u)(\hat{I}_{\cV_A})} = \braket{\hat{I}_{\cV_B^*}}{(w^*\otimes u)(\hat{I}_{\cV_A})},
\end{equation}
for all $u:\cV_A\ra \cV_B$ and any $w:\cV_B\ra \cV_A$, where we identify $\cV^{**}_B$ and $\cV_B$.

For a linear map $P:\cV_A\ra \cV_B$ we denote its \emph{adjoint} by $P^*:\cV^*_B\ra \cV^*_A$. Most mapping ideals we consider will be closed under taking adjoints, e.g., we have $P\in \Pos(\cC_A,\cC_B)$ if and only if $P^*\in \Pos(\cC^*_B,\cC^*_A)$. We say that a map $P:\cV_A\ra \cV_B$ belongs to the interior of $\Pos(\cC_A,\cC_B)$ if for every $L\in L(\cV_A,\cV_B)$ there is an $\epsilon>0$ such that $P+\epsilon L\in \Pos(\cC_A,\cC_B)$. We write $P\in\text{int}\lb\Pos(\cC_A,\cC_B)\rb$ in this case. It is easy to see that a map $P:\cV_A\ra \cV_B$ belongs to the interior of $\Pos(\cC_A,\cC_B)$ if and only if $P(x)\in\text{int}\lb \cC_B\rb$ for every $x\in \cC_A$.

We will often study positive maps $P\in \Pos(\cC_X,\cC_Y)$ between cones over finite-dimensional normed spaces. We call such a map \emph{normalized} if $P(e^X_0)=e^Y_0$ where $e^X_0=(1,0)\in \R\times X$ and $e^Y_0=(1,0)\in \R\times Y$, and \emph{dual-normalized} if $P^*\in \Pos(\cC_{Y^*},\cC_{X^*})$ is normalized.

\subsection{Lorentz cones and Lorentz automorphisms}\label{sec:LorentzCones} 
For any $n\in\N$ we define the \emph{$(n+1)$-dimensional Lorentz cone} as 
\[
\cL_n = \lset (t,x)\in \R\times \ell^n_2~:~t\geq \|x\|_2\rset \subseteq \R^{n+1},
\]
where $\ell^n_2 = (\R^n,\|\cdot\|_2)$ is the $n$-dimensional real Euclidean space. For each $n\in\N$ the Lorentz cone $\cL_n$ is proper and an example of a so-called \emph{symmetric} cone. Recall that a closed cone $\cC$ is called symmetric if $\cC\simeq \cC^*$ and the automorphism group $\Aut(\cC)$ acts transitively on the interior $\text{int}\lb \cC\rb$, i.e., for any $x,y\in \text{int}\lb \cC\rb$ there is an $A\in \Aut(\cC)$ such that $A(x)=y$. We will often use the identification $\cL_n\simeq \cL^*_n$ implicitly. When it clarifies the discussion we will denote the adjoint of a map $P:\R^{n+1}\ra \R^{m+1}$ by $P^T$ instead of $P^*$. In particular, we have $P\in \Pos(\cL_n,\cL_m)$ if and only if $P^T\in \Pos(\cL_m,\cL_n)$. The symmetry of the Lorentz cones has the following consequence:

\begin{lem}\label{lem:Continuity}
Any map $P\in \text{int}\lb\Pos(\cL_k,\cC_X)\rb$ can be written as a composition $P=QA$, where $Q\in \Pos(\cL_k,\cC_X)$ is dual-normalized and $A\in \Aut(\cL_k)$ is a Lorentz-automorphism.
\end{lem}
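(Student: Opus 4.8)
The plan is to exploit the symmetry of the Lorentz cone, i.e.\ the transitivity of $\Aut(\cL_k)$ on $\text{int}(\cL_k)$, after first pinning down which interior point has to be transported. Write $e_0:=(1,0)\in\R\times\ell^k_2=\R^{k+1}$ for the distinguished interior point of $\cL_k=\cC_{\ell^k_2}$ and $e^{X^*}_0:=(1,0)\in\R\times X^*$ for the distinguished interior point of $\cC_X^*=\cC_{X^*}$; with these conventions a map $Q\in\Pos(\cL_k,\cC_X)$ is dual-normalized exactly when $Q^*(e^{X^*}_0)=e_0$.

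First I would pass to adjoints. The map $L\mapsto L^*$ is a linear bijection from $L(\R^{k+1},\R\times X)$ onto $L(\R\times X^*,\R^{k+1})$ carrying $\Pos(\cL_k,\cC_X)$ onto $\Pos(\cC_{X^*},\cL_k)$ (using $\cL_k^*\simeq\cL_k$ and $\cC_X^*=\cC_{X^*}$); being a linear bijection of finite-dimensional spaces it is a homeomorphism, hence carries interiors onto interiors, so $P^*\in\text{int}\bigl(\Pos(\cC_{X^*},\cL_k)\bigr)$. By the interior characterization for positive maps recalled above, $P^*$ sends $\cC_{X^*}\setminus\{0\}$ into $\text{int}(\cL_k)$; in particular $y:=P^*(e^{X^*}_0)\in\text{int}(\cL_k)$. (Alternatively one sees this directly: for $z\in\cL_k\setminus\{0\}$ we have $Pz\in\text{int}(\cC_X)$, so $\langle y,z\rangle=\langle e^{X^*}_0,Pz\rangle$ is the first coordinate of $Pz$, which is strictly positive; hence $y\in\text{int}(\cL_k^*)=\text{int}(\cL_k)$.)

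Since $y$ and $e_0$ both lie in $\text{int}(\cL_k)$, transitivity of $\Aut(\cL_k)$ on $\text{int}(\cL_k)$ gives some $C\in\Aut(\cL_k)$ with $C(e_0)=y$. Set $A:=C^*$; since $\cL_k$ is self-dual, $C^*\in\Aut(\cL_k^*)=\Aut(\cL_k)$, so $A\in\Aut(\cL_k)$ and $A^*=C$. Finally put $Q:=PA^{-1}$, so that $P=QA$ with $A\in\Aut(\cL_k)$, and $Q\in\Pos(\cL_k,\cC_X)$ because $A^{-1}\in\Aut(\cL_k)$ maps $\cL_k$ onto itself and $P$ is positive. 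It remains only to check that $Q$ is dual-normalized, and indeed
\[
Q^*(e^{X^*}_0)=(A^{-1})^*P^*(e^{X^*}_0)=(A^*)^{-1}(y)=C^{-1}(y)=C^{-1}\bigl(C(e_0)\bigr)=e_0.
\]
The only real work here is bookkeeping: tracking the identifications $\cL_k\simeq\cL_k^*$ and $(\R\times X)^*=\R\times X^*$, and the fact that taking adjoints preserves $\Pos$ together with its interior. The sole substantive point — and the only place the hypothesis $P\in\text{int}\bigl(\Pos(\cL_k,\cC_X)\bigr)$ gets used — is that $y=P^*(e^{X^*}_0)$ lands in $\text{int}(\cL_k)$ rather than on $\partial\cL_k$, so that a Lorentz automorphism can carry $e_0$ to it; beyond that there is no genuine obstacle.
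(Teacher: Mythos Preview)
Your proof is correct and follows essentially the same approach as the paper: both use the interiority of $P$ to ensure $P^*(e^{X^*}_0)\in\text{int}(\cL_k)$, then invoke the transitivity of $\Aut(\cL_k)$ on the interior to find an automorphism carrying this point to $e_0$ (or vice versa), and define $Q$ accordingly. The only difference is cosmetic---the paper picks $B\in\Aut(\cL_k)$ with $B(P^*(e^{X^*}_0))=e_0$ and sets $Q=PB^*$, $A=(B^*)^{-1}$, which is your construction with $B=C^{-1}$.
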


\begin{proof}
Consider a $P\in \text{int}\lb\Pos(\cL_k,\cC_X)\rb$ and note that $P^*(e^{X^*}_0)\in \text{int}(\cL_k)$. Since $\cL_k$ is symmetric, there exists an automorphism $B\in \Aut(\cL_k)$ such that $B(P^*(e^{X^*}_0))=e_0 = (1,0)\in \text{int}\lb\cL_k\rb$. As a consequence we have $PB^* =: Q \in \Pos(\cL_k,\cC_{X})$, which is dual-normalized by construction. Finally, we find that $P=QA$ with $A=(B^{*})^{-1}$.
\end{proof}

The automorphism group of the Lorentz cone is up to scalar factors equal to the \emph{orthochronous Lorentz group} $O^+(1,n)$. Every automorphism $A\in \Aut(\cL_n)$ can be written as 
\begin{equation} \label{equ:LorAutRepre}
A=c(1\oplus u_1)P_\alpha (1\oplus u_2),
\end{equation}
with orthogonal operators $u_1,u_2\in O(n)$, a positive factor $c>0$, and a Lorentz boost $P_\alpha\in \Aut(\cL_n)$ of the form
\[
P_\alpha=\begin{pmatrix} \cosh(\alpha) & \sinh(\alpha) &  \\ \sinh(\alpha) & \cosh(\alpha) & \\  &  & \one_{n-1} \end{pmatrix},
\]
where $\alpha\in\R$ is some parameter~\cite[Theorem 6.5.]{gallier2020differential}. It is known (see for example~\cite[Proposition C.1.]{aubrun2017alice}) that the automorphisms $\Aut(\cL_n)$ together with the positive rank-$1$ maps generate all extremal rays of the cone $\Pos(\cL_n,\cL_n)$.

As is the case for the simplex cones and the cones of positive semidefinite matrices, there is a Sinkhorn-type~\cite{sinkhorn1964relationship} normal form of the maps in the interior of positive maps between Lorentz cones shown in~\cite[Theorem 3.4.]{hildebrand2011lmi}. Specifically, for any $P\in \text{int}\lb\Pos(\cL_n,\cL_m)\rb$ there are automorphisms $A\in \Aut(\cL_n)$ and $B\in \Aut(\cL_m)$ such that $BPA=1\oplus v$ for a diagonal contraction $v:\ell^n_2\ra \ell^m_2$. An immediate consequence of this theorem is the following useful proposition:

\begin{prop}\label{prop:denseSetLor}
Let $\mathcal{S}\subseteq \Pos(\cL_n,\cL_m)$ denote a closed set, that is closed under composition with automorphisms, and such that $P\in \mathcal{S}$ implies that $P+\epsilon e^m_0(e^n_0)^*\in\mathcal{S}$ for all $\epsilon>0$ small enough. Then, the set 
\[
\lset B(1\oplus v)A~:~A\in \Aut(\cL_n), B\in \Aut(\cL_m), 1\oplus v\in \mathcal{S}\text{ for diagonal }v:\ell^n_2\ra \ell^m_2\rset
\]
is dense in $\mathcal{S}$. In particular, the conclusion is true whenever $\mathcal{S}=\cI(\cL_n,\cL_m)$ for some mapping ideal $\cI$.
\end{prop}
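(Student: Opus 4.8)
The plan is to reduce to the interior of $\Pos(\cL_n,\cL_m)$ and then invoke the Sinkhorn-type normal form of \cite[Theorem 3.4.]{hildebrand2011lmi} recalled above. First I would fix an arbitrary $P\in\mathcal{S}$ and observe that the perturbed maps $P_\epsilon := P + \epsilon\, e^m_0(e^n_0)^*$ lie in $\text{int}\lb\Pos(\cL_n,\cL_m)\rb$ for every $\epsilon>0$. Indeed, $\epsilon\, e^m_0(e^n_0)^*$ sends $(t,x)\mapsto(\epsilon t,0)$, so for a nonzero $(t,x)\in\cL_n$ one has $t>0$, and writing $P((t,x))=(s,y)$ with $s\geq\|y\|_2$ yields $P_\epsilon((t,x))=(s+\epsilon t,y)$ with $s+\epsilon t>\|y\|_2$, i.e. $P_\epsilon((t,x))\in\text{int}(\cL_m)$; by the interior criterion from the preliminaries (applied to nonzero inputs) this gives $P_\epsilon\in\text{int}\lb\Pos(\cL_n,\cL_m)\rb$. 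By hypothesis $P_\epsilon\in\mathcal{S}$ for all sufficiently small $\epsilon>0$, and $P_\epsilon\to P$ as $\epsilon\to0^+$, so it suffices to show that each such $P_\epsilon$ already belongs to the set appearing in the statement.

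For the second step I would apply the Sinkhorn normal form to $P_\epsilon\in\text{int}\lb\Pos(\cL_n,\cL_m)\rb$: there exist $A\in\Aut(\cL_n)$, $B\in\Aut(\cL_m)$ and a diagonal contraction $v:\ell^n_2\ra\ell^m_2$ with $BP_\epsilon A = 1\oplus v$. Since $\mathcal{S}$ is closed under composition with automorphisms, $1\oplus v = BP_\epsilon A\in\mathcal{S}$, and hence $P_\epsilon = B^{-1}(1\oplus v)A^{-1}$ exhibits $P_\epsilon$ in the required form, with $B^{-1}\in\Aut(\cL_m)$, $A^{-1}\in\Aut(\cL_n)$, and $1\oplus v\in\mathcal{S}$ diagonal. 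Combined with the first step, $P$ is a limit of maps of this form, which is exactly the asserted density (the displayed set is itself contained in $\mathcal{S}$ by the automorphism-closure hypothesis, and $\mathcal{S}$ is closed).

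For the final "in particular" assertion I would check that $\mathcal{S}=\cI(\cL_n,\cL_m)$ satisfies the three hypotheses. Closedness is axiom (1) of a mapping ideal. Closure under composition with automorphisms follows from the ideal property (3), since every Lorentz automorphism and its inverse is a positive map (an automorphism $A\in\Aut(\cL_n)$ satisfies $A(\cL_n)=\cL_n$, so $A\in\Pos(\cL_n,\cL_n)$). Finally, $e^m_0(e^n_0)^* = e^m_0\cdot(e^n_0)^*$ is a rank-one map with $e^m_0\in\cL_m$ and $(e^n_0)^*\in\cL^*_n$ (as $(e^n_0)^*((t,x))=t\geq0$ on $\cL_n$), hence it lies in $\EB(\cL_n,\cL_m)\subseteq\cI(\cL_n,\cL_m)$; since $\cI(\cL_n,\cL_m)$ is a convex cone, $P+\epsilon\, e^m_0(e^n_0)^*\in\cI(\cL_n,\cL_m)$ for every $\epsilon\geq0$, in particular for small $\epsilon$.

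The only non-elementary ingredient is the Sinkhorn-type normal form of \cite[Theorem 3.4.]{hildebrand2011lmi}, which we may assume. The main point requiring care is the reduction to the interior: one must verify that this \emph{specific} rank-one perturbation $e^m_0(e^n_0)^*$ pushes every positive map strictly into the interior (rather than merely perturbing it), and one must keep the composition order straight so that $1\oplus v$ — and not some conjugate of it — is the diagonal map re-entering $\mathcal{S}$. Beyond this bookkeeping I do not anticipate any further obstacle.
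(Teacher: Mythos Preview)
Your proposal is correct and is exactly the argument the paper has in mind: the proposition is stated as ``an immediate consequence'' of the Sinkhorn-type normal form from \cite[Theorem 3.4.]{hildebrand2011lmi}, and your write-up simply unpacks that implication (perturb into the interior, apply Sinkhorn, use automorphism-closure to bring the diagonal map back into $\mathcal{S}$, then verify the mapping-ideal case). The only cosmetic point is that the paper's interior criterion should of course be read for nonzero $x\in\cC_A$, which you already handle correctly.
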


We finish this section by discussing a useful property of cones obtained by intersecting Lorentz cones with subspaces. We say that a generating cone $\cC\subset \cV$ is a \emph{retract of a cone $\cC'\subset \cV'$} if there are maps $\alpha\in \Pos(\cC,\cC')$ and $\beta\in \Pos(\cC',\cC)$ such that $\ident_{\cV} = \beta\alpha$, i.e., the identity on $\cV$ factors through $\cC'$. It is easy to see that the cone $\cC=\lset \lambda x\rset$ is a retract of any cone $\cC'\subseteq \cV'$ not equal to $\lset 0\rset$. The Lorentz cones have the following retract property.

\begin{lem}\label{lem:RetractPropLor}
Consider a subspace $\cS\subseteq \R^{n+1}$ of dimension $\dim(\cS)=k$. Then, the cone $\cL_n\cap \cS$ is a retract of the Lorentz cone $\cL_{k-1}$.
\end{lem}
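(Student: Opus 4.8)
The plan is to prove the sharper statement that $\cL_n\cap\cS$ is in fact \emph{linearly isomorphic} to $\cL_{k-1}$; the retract property then follows immediately, with $\alpha$ the isomorphism and $\beta$ its inverse, both positive and satisfying $\beta\alpha=\ident_\cS$. Throughout I would assume $\cL_n\cap\cS$ is generating in $\cS$ — equivalently $\dim(\cL_n\cap\cS)=k$ — which is implicit in the statement, as the notion of retract is reserved for generating cones. Note that this forces $\cS\not\subseteq\{(t,x):t=0\}$, because $\cL_n\cap\{(t,x):t=0\}=\{0\}$.

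The first step is to extract a convenient base of the cone. Every $(t,x)\in\cL_n$ satisfies $t\geq\|x\|_2\geq 0$, with $t=0$ only at the origin, so the functional $f(t,x)=t$ is strictly positive on $(\cL_n\cap\cS)\setminus\{0\}$; hence $B:=\{v\in\cL_n\cap\cS:f(v)=1\}$ is a compact convex set with $\cL_n\cap\cS=\{\lambda v:\lambda\geq 0,\ v\in B\}$, and therefore $\mathrm{span}(B)=\cS$, so that $B$ has affine dimension $k-1$. Writing its points as $(1,x)$, we have $B=\{(1,x)\in\cS:\|x\|_2\leq 1\}$; that is, $B$ is the intersection of the $(k-1)$-dimensional affine subspace $\cS\cap\{t=1\}$ with the Euclidean unit ball. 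The second step is the elementary geometric fact that such a slice is again a Euclidean ball: writing $\cS\cap\{t=1\}=\{1\}\times(p+W)$ with $W$ a linear subspace and $p\perp W$, the Pythagorean identity gives $B=\{(1,p+w):w\in W,\ \|w\|_2^2\leq 1-\|p\|_2^2\}$, a ball of radius $r:=\sqrt{1-\|p\|_2^2}$ centered at $(1,p)$, with $r>0$ whenever $k\geq 2$ (when $k=1$ one has $W=\{0\}$ and $B$ is a single point).

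The third step reads off the isomorphism. From $\mathrm{span}(B)=\cS$ and $(1,p)\notin W\subseteq\{t=0\}$ one obtains the direct sum decomposition $\cS=\R\,(1,p)\oplus W$; fixing an orthonormal basis $w_1,\dots,w_{k-1}$ of $W$, define $\Phi:\cS\to\R^k$ by $\Phi\big(s\,(1,p)+\sum_i c_i w_i\big)=\big(s,\,c_1/r,\dots,c_{k-1}/r\big)$. This is a linear isomorphism, and since $\cL_n\cap\cS=\{s\,(1,p)+w':s\geq 0,\ w'\in W,\ \|w'\|_2\leq sr\}$, a short computation gives $\Phi(\cL_n\cap\cS)=\{(s,z)\in\R\times\R^{k-1}:\|z\|_2\leq s\}=\cL_{k-1}$. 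Taking $\alpha:=\Phi$ and $\beta:=\Phi^{-1}$ then finishes the proof.

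I expect the main — though mild — obstacle to be verifying that the slice $B$ is genuinely $(k-1)$-dimensional, so that $r>0$ when $k\geq 2$. This is automatic from $\dim(\cL_n\cap\cS)=k$, but conceptually it amounts to the fact that $\cS$ must meet $\mathrm{int}(\cL_n)$ once $k\geq 2$: the boundary $\{(t,x):t=\|x\|_2\}$ of $\cL_n$ contains no $2$-dimensional convex subcone, since if $a,b$ lie on it and $\tfrac12(a+b)$ does as well, equality in the triangle inequality forces $a$ and $b$ to be proportional. The remaining case $k=1$ is immediate: $\cL_n\cap\cS$ is a ray, isomorphic to $\cL_0=\R_{\geq 0}$, and the construction above specializes correctly with $W=\{0\}$. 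The geometric content of the lemma is simply that a section of a Lorentz cone through its interior is the cone over an ellipsoid, and hence again isomorphic to a Lorentz cone.
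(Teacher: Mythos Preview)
Your proof is correct and follows the same geometric idea as the paper's --- both arguments recognize that the base of $\cL_n\cap\cS$ cut out by the hyperplane $\{t=1\}$ is a Euclidean ball, and then build a linear isomorphism $\cL_n\cap\cS\simeq\cL_{k-1}$ from this. The only structural difference is that the paper reduces first to the codimension-one case and iterates (any subspace is an intersection of hyperplanes), whereas you carry out the construction directly in arbitrary codimension via the orthogonal decomposition $\cS\cap\{t=1\}=\{1\}\times(p+W)$ with $p\perp W$. Your direct route is cleaner and avoids the iteration; the paper's inductive step, on the other hand, makes the degenerate (tangent-hyperplane) case more visible, which you instead absorb into the generating hypothesis. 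Both amount to the same statement: a linear section of a Lorentz cone through its interior is again a Lorentz cone.
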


\begin{proof}
It is enough to show the statement for $\dim(\cS)=n$, i.e., in the case where $S$ is a hyperplane in $\R^{n+1}$. The general statement follows since any subspace can be expressed as an intersection of hyperplanes.

If $\cL_n\cap \cS = \lset \lambda x~:~\lambda\in\R^+\rset$ for some $x\in \cL_n$, then clearly it is a retract of $\cL_{k-1}$. Consider the case, where $\cS$ intersects the interior of $\cL_n$, i.e., the cone $\cL_n\cap \cS$ is generating in $\cS$. Let $\cH=\lset (1,x)~:~x\in \R^n\rset\subseteq \R^{n+1}$ denote the hyperplane cutting out the standard bases $\cL_n\cap \cH$ of the Lorentz cone. Clearly, the base $\cL_n\cap \cH$ is an $n$-dimensional ball. A base of the cone $\cL_n\cap S$ is given by $(\cL_n\cap \cS)\cap \cH$, which is an $(n-1)$-dimensional ball. Let us denote by $x_0\in (\cL_n\cap \cS)\cap \cH$ the center of this ball, and extend $x_0$ to an orthogonal basis $x_0,x_1,\ldots ,x_{n-1}$ of $\cS$ such that $x_0+\sum^{n-1}_{i=1} \lambda_i x_i\in (\cL_n\cap \cS)\cap \cH$ if and only if $\|\lambda\|_2\leq 1$. Let us define a map $\alpha:\cS\ra \R^{n}$ by setting $\alpha(x_i)=e_i$ for every $i\in\lset 0,\ldots ,n-1\rset$ and extending linearily. Clearly, $\alpha\lb(\cL_n\cap \cS)\cap \cH\rb = \cL_{n-1}$. Furthermore, we define a map $\beta:\R^{n}\ra \cS$ by $\beta(e_i)=x_i$ extended linearily. Again, we have $\beta(\cL_{n-1})=(\cL_n\cap \cS)\cap \cH$. Finally, we note that $\beta\alpha = \ident_\cS$ and we conclude that $\cL_n\cap \cS$ is a retract of $\cL_{n-1}$.
\end{proof}

\section{Lorentz entanglement annihilation}

\subsection{Maps between Lorentz cones} For $n,m\in\N$, let us consider positive maps $P:\R^{n+1}\ra \R^{m+1}$ such that $(P\otimes P)\lb \cL_n\otimes_{\max} \cL_n\rb\subseteq \cL_m\otimes_{\min} \cL_m$. These maps are the max-entanglement annihilating maps between the Lorentz cones $\cL_n$ and $\cL_m$, and we will use the notation $\maxEA_2(\cL_n,\cL_m)$ to refer to the set of all such maps. Clearly, we have $\maxEA_2(\cL_n,\cL_m)=\LorEA_2(\cL_n,\cL_m)$. In the following, we will show that the set $\maxEA_2(\cL_n,\cL_m)$ is a closed convex cone and that its central maps are characterized by the Hilbert-Schmidt norm. We start with a lemma:

\begin{lem}\label{lem:ProdCentralHSLor}
For $n,m\in\N$ consider $v_1,v_2:\ell^n_2\ra \ell^m_2$ and $r,s\in \R_0^+$ such that $\hs(v_1)\leq s$ and $\hs(v_2)\leq t$. Then, we have
\[
((s\oplus v_1)\otimes (t\oplus v_2)) \lb \cL_n\otimes_{\max} \cL_n\rb\subseteq \cL_m\otimes_{\min} \cL_m .
\] 
\end{lem}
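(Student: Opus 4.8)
The plan is to normalize the problem before doing any real work. First I would reduce to the case of \emph{strict} inequalities $\hs(v_1)<s$ and $\hs(v_2)<t$: writing $s\oplus v_1=(s'\oplus v_1)+(s-s')\,e^m_0(e^n_0)^{*}$ with $\hs(v_1)<s'\le s$, the second summand is a positive rank-one map, hence entanglement breaking, and the tensor product of an entanglement breaking map with the positive map $t\oplus v_2$ already sends $\cL_n\otimes_{\max}\cL_n$ into $\cL_m\otimes_{\min}\cL_m$; since $\cL_m\otimes_{\min}\cL_m$ is a closed cone (hence closed under addition and under limits), the strict case implies the general one, and symmetrically for $t$. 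Next I would diagonalize: writing singular value decompositions $v_i=U_i\Sigma_iV_i^{*}$ with $U_i\in O(m)$, $V_i\in O(n)$ and $\Sigma_i$ (rectangular) diagonal, one has $1\oplus U_i\in\Aut(\cL_m)$ and $1\oplus V_i\in\Aut(\cL_n)$, so
\[
(s\oplus v_1)\otimes(t\oplus v_2)=\bigl[(1\oplus U_1)\otimes(1\oplus U_2)\bigr]\circ\bigl[(s\oplus\Sigma_1)\otimes(t\oplus\Sigma_2)\bigr]\circ\bigl[(1\oplus V_1^{*})\otimes(1\oplus V_2^{*})\bigr],
\]
and a tensor product of cone automorphisms is an automorphism of both the maximal and the minimal tensor product. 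Hence it suffices to treat diagonal maps $v_1=\Sigma_1$, $v_2=\Sigma_2$.

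For the diagonal case I would use the block-matrix picture of the Lorentz-cone tensor products. An element of $\cL_n\otimes_{\max}\cL_n$ is $Z=\bigl(\begin{smallmatrix}z_{00}&\mathbf z_2^{T}\\ \mathbf z_1&M\end{smallmatrix}\bigr)$ with $z_{00}+\mathbf z_1^{T}u+\mathbf z_2^{T}w+u^{T}Mw\ge0$ for all $\|u\|_2,\|w\|_2\le1$, and the map sends it to $W=\bigl(\begin{smallmatrix}st\,z_{00}&s(\Sigma_2\mathbf z_2)^{T}\\ t\,\Sigma_1\mathbf z_1&\Sigma_1M\Sigma_2^{T}\end{smallmatrix}\bigr)$. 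Because $\|\Sigma_i\|_{\mathrm{op}}\le\hs(\Sigma_i)<s$ (resp. $t$), the central maps $s\oplus\Sigma_1$, $t\oplus\Sigma_2$ are genuinely positive $\cL_n\to\cL_m$, so they take the separable building blocks $(1;\mathbf z_1)\otimes(1;\mathbf z_2)$ into $\cL_m\otimes_{\min}\cL_m$ for free, and the real issue is the ``entangled'' part $M$. The decisive quantitative fact is the Schatten--Hölder estimate $\|\Sigma_1 N\Sigma_2^{T}\|_{\mathrm{tr}}\le\|\Sigma_1\|_{\mathrm{HS}}\,\|N\|_{\mathrm{op}}\,\|\Sigma_2\|_{\mathrm{HS}}=\hs(v_1)\hs(v_2)\,\|N\|_{\mathrm{op}}$: recalling that a block element with $z_{00}=1$, $\mathbf z_1=\mathbf z_2=0$ and body $M$ lies in $\cL_n\otimes_{\max}\cL_n$ iff $\|M\|_{\mathrm{op}}\le1$, and in $\cL_m\otimes_{\min}\cL_m$ iff $\|M\|_{\mathrm{tr}}\le1$, this estimate says precisely that the Hilbert--Schmidt norms govern the passage from the injective ($\|\cdot\|_{\mathrm{op}}$) face of $\otimes_{\max}$ to the projective ($\|\cdot\|_{\mathrm{tr}}$) face of $\otimes_{\min}$. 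To upgrade this to an arbitrary $Z$ I would bring in the semidefinite description of $\cL_n\otimes_{\max}\cL_m$: by the S-lemma, $Z\in\cL_n\otimes_{\max}\cL_m$ iff $z_{00}\ge\|\mathbf z_1\|_2$, $z_{00}\ge\|\mathbf z_2\|_2$ and an explicit linear matrix inequality in $Z$ and a multiplier $\lambda\ge0$ holds; dually $\cL_m\otimes_{\min}\cL_m$, being the trace dual of this spectrahedral shadow, is cut out by the existence of a positive semidefinite completion of a prescribed affine image of $W$. I would then construct that completion explicitly from the certificate $(\lambda,\Lambda\succeq0)$ witnessing $Z\in\cL_n\otimes_{\max}\cL_n$ together with $\Sigma_1,\Sigma_2$, the inequalities $\sum_i\sigma_i^{2}<s^{2}$ and $\sum_j\tau_j^{2}<t^{2}$ being exactly what forces the constructed matrix to be positive semidefinite.

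I expect this last step — producing the explicit semidefinite certificate for $W$ — to be the main obstacle; the reductions, the positivity of the central maps, and the trace-norm inequality are all soft, whereas matching the S-lemma multiplier for $\otimes_{\max}$ to a genuine dual certificate for $\otimes_{\min}$ while inserting the Hilbert--Schmidt data is where the work lies. An alternative, possibly cleaner, route is to write $Z$ itself as an integral $\int (1\oplus\Sigma_1)P_\alpha\otimes(1\oplus\Sigma_2)Q_\alpha\,d\mu(\alpha)$ of separable tensors, exploiting that $(1\oplus\Sigma_1)$ contracts the Lorentz cone so that $\|\Sigma_1(\mathbf z_1+Mw)\|_2/(1+\mathbf z_2^{T}w)\le\|\Sigma_1\|_{\mathrm{op}}<1$ keeps the building blocks normalized; there the obstacle moves to choosing the coupling $\mu$ and checking the moment constraints. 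A minor bookkeeping point is the case $m\neq n$, handled by padding $\Sigma_i$ with zero rows or columns, which amounts to composing with the evident positive embedding $\cL_{\min(m,n)}\hookrightarrow\cL_{\max(m,n)}$ on one side and is harmless.
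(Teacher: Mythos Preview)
Your reductions are correct: the passage to strict inequalities via the rank-one entanglement-breaking remainder is valid, and the SVD step legitimately reduces to diagonal $\Sigma_1,\Sigma_2$ because $1\oplus U$ with $U$ orthogonal is a Lorentz automorphism preserving both $\otimes_{\max}$ and $\otimes_{\min}$. Your identification of the ``central face'' case ($\mathbf z_1=\mathbf z_2=0$, $z_{00}=1$) with the Schatten--H\"older inequality $\|\Sigma_1 M\Sigma_2^T\|_1\le\hs(\Sigma_1)\,\|M\|_{\mathrm{op}}\,\hs(\Sigma_2)$ is also correct and is indeed the heart of the matter.

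However, the proof is not complete: you explicitly flag the upgrade from the central face to a general $Z$ as ``the main obstacle'' and only sketch two possible routes (an S-lemma certificate, or an integral decomposition) without carrying either out. Neither route is obviously short, and as written the argument stops precisely where the difficulty begins.

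The paper avoids this obstacle altogether by dualizing differently. Rather than producing a semidefinite certificate for $W\in\cL_m\otimes_{\min}\cL_m$, it tests $W$ against all functionals in $(\cL_m\otimes_{\min}\cL_m)^*=\cL_m\otimes_{\max}\cL_m$, which via the Choi--Jamio\l kowski correspondence amounts to checking $\Tr\bigl[(s\oplus v_1^T)P(t\oplus v_2)Q\bigr]\ge 0$ for all $P\in\Pos(\cL_m,\cL_m)$ and $Q\in\Pos(\cL_n,\cL_n)$. The key structural input you are missing is that the extremal rays of these positive cones (beyond the harmless rank-one maps) are exactly the Lorentz automorphisms, which factor as $(1\oplus u_1)P_\alpha(1\oplus u_2)$ with orthogonal $u_i$ and a single-parameter boost $P_\alpha$. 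Absorbing the orthogonal factors into $v_1,v_2$ (the Hilbert--Schmidt norm is unitarily invariant) reduces the whole problem to showing that a certain explicit $5\times 5$ matrix $M_{\alpha,\beta}$ lies in $\Pos(\cL_4,\cL_4)$ for all $\alpha,\beta\in\R$, which is a short direct computation. This bypasses the need for any S-lemma multiplier or measure-theoretic coupling, and in particular does not require first reducing $v_1,v_2$ to diagonal form.
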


\begin{proof}
Consider $v_1,v_2:\ell^n_2\ra \ell^m_2$ and $r,s\in \R_0^+$ such that $\hs(v_1)\leq s$ and $\hs(v_2)\leq t$. We need to show that $\braket{w}{((s\oplus v_1)\otimes (t\oplus v_2))(z)}\geq 0$ for all $z\in \cL_n\otimes_{\max} \cL_n$ and all $w\in \cL_m\otimes_{\max} \cL_m$. Note that any $z\in \cL_n\otimes_{\max} \cL_n$ can be written as $(\ident_{n+1}\otimes Q)(\hat{I}_n)$ for some positive map $Q\in \Pos(\cL_n,\cL_n)$ and the identity tensor $\hat{I}_n=\sum^n_{i=0}e_i\otimes e_i$, and we have a similar representation of $w\in \cL_m\otimes_{\max} \cL_m$ with a positive map $P\in \Pos(\cL_m,\cL_m)$. By \eqref{equ:ChoiJamiolkowski} it suffices to show
\begin{equation}\label{equ:PosForm}
\Tr\lbr (s\oplus v_1^T)P(t\oplus v_2)Q\rbr\geq 0 ,
\end{equation}
for all positive maps $Q\in \Pos(\cL_n,\cL_n)$ and $P\in \Pos(\cL_m,\cL_m)$. Moreover, it is enough to show \eqref{equ:PosForm} for all extremal maps $P$ and $Q$, which are not entanglement-breaking. These maps are the Lorentz automorphism~\cite[Proposition C.1.]{aubrun2017alice}, and they can be written as 
\[
P=c_1(1\oplus u_1)P_\alpha (1\oplus u_2) \quad\text{ and }\quad Q=c_2(1\oplus u_3)Q_\beta (1\oplus u_4),
\]
with orthogonal operators $u_1,u_2\in O(m)$ and $u_3,u_4\in O(n)$, strictly positive factors $c_1,c_2>0$, and positive maps $P_\alpha\in \Pos(\cL_m,\cL_m)$ and $Q_\beta\in \Pos(\cL_n,\cL_n)$ of the form
\[
P_\alpha=\begin{pmatrix} \cosh(\alpha) & \sinh(\alpha) &  \\ \sinh(\alpha) & \cosh(\alpha) & \\  &  & \one_{m-1} \end{pmatrix},
\]
and 
\[
Q_\beta=\begin{pmatrix} \cosh(\beta) & \sinh(\beta) &   \\ \sinh(\beta) & \cosh(\beta) &  \\  &  & \one_{n-1}  \end{pmatrix},
\]
with parameters $\alpha,\beta\in\R$ (see~\cite[Theorem 6.5.]{gallier2020differential}). Since the Hilbert-Schmidt norm is invariant under composition with orthogonal transformations, it suffices to prove \eqref{equ:PosForm} for $P=P_\alpha$ and $Q=Q_\beta$ for all $\alpha,\beta\in\R$. Now, we write
\[
v_1 = \begin{pmatrix} r_1 & x^T_1\\ y_1 & V_1\end{pmatrix} ,
\] 
and 
\[
v_2= \begin{pmatrix} r_2 & x^T_2\\ y_2 & V_2\end{pmatrix} ,
\]
for $r_1,r_2\in\R$, $x_1,x_2\in \R^{n-1}$, $y_1,y_2\in \R^{m-1}$ and $V_1,V_2:\R^{n-1}\ra \R^{m-1}$. We find that \eqref{equ:PosForm} with $P=P_\alpha$ and $Q=Q_\beta$ is equivalent to
\begin{equation}\label{equ:PosForm2}
(s,z_1)M_{\alpha,\beta}\begin{pmatrix} t\\ z_2\end{pmatrix}\geq 0,
\end{equation}
where we used the matrix
\[
M_{\alpha,\beta} = \begin{pmatrix} \cosh(\alpha)\cosh(\beta) & \sinh(\alpha)\sinh(\beta) & 0 & 0 & 0 \\ \sinh(\alpha)\sinh(\beta) & \cosh(\alpha)\cosh(\beta) & 0 & 0 & 0 \\ 0 & 0 & \cosh(\beta) & 0& 0 \\ 0 & 0 & 0 & \cosh(\alpha) & 0 \\ 0 & 0 & 0 & 0 & 1\end{pmatrix} .
\]
and vectors $z_1,z_2\in\R^4$ given by 
\[
z^T_1=\begin{pmatrix} r_1,\|x_1\|_2, \|y_1\|_2 ,\hs(V_1)\end{pmatrix}
\]
and
\[
z^T_2=\begin{pmatrix} r_2,\frac{\braket{x_1}{x_2}}{\|x_1\|_2}, \frac{\braket{y_1}{y_2}}{\|y_1\|_2} ,\frac{\Tr\lbr V^T_1 V_2\rbr}{\hs(V_1)}\end{pmatrix}.
\]
Using that $\hs(v_1)\leq s$ and $\hs(v_2)\leq t$, it is easy to check that $(s,z_1),(t,z_2)\in \cL_{4}$. Since $M_{\alpha,\beta}^TJM_{\alpha,\beta}\geq (\cosh(\alpha)^2 + \cosh(\beta)^2 - 1)J$ for the diagonal matrix $J$ with diagonal entries $(1,-1,-1,-1,-1)$ and $\cosh(\alpha)\cosh(\beta)\geq \sinh(\alpha)\sinh(\beta)$ for all $\alpha,\beta$, we conclude that $M_{\alpha,\beta}\in\Pos(\cL_4,\cL_4)$. This implies \eqref{equ:PosForm2} and hence \eqref{equ:PosForm} finishing the proof. 
\end{proof}

As a consequence, we obtain the following: 

\begin{thm}\label{thm:HSAndEALor}
Let $v:\ell^n_2\ra \ell^m_2$ be a linear map and $t\in\R$. We have $t\oplus v\in \maxEA_2(\cL_n,\cL_m)$ if and only if $\hs(v)\leq t$. 
\end{thm}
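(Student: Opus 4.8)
The plan is to prove Theorem~\ref{thm:HSAndEALor} by establishing the two implications separately, using Lemma~\ref{lem:ProdCentralHSLor} for the ``if'' direction and a tensor-product testing argument for the ``only if'' direction. For the ``if'' direction, suppose $\hs(v)\leq t$; applying Lemma~\ref{lem:ProdCentralHSLor} with $v_1=v_2=v$ and $s=t$ immediately gives $((t\oplus v)\otimes(t\oplus v))(\cL_n\otimes_{\max}\cL_n)\subseteq \cL_m\otimes_{\min}\cL_m$, which is exactly the statement that $t\oplus v\in\maxEA_2(\cL_n,\cL_m)$. So this direction is essentially a one-line corollary of the lemma.

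For the ``only if'' direction, assume $t\oplus v\in\maxEA_2(\cL_n,\cL_m)$. First I would note that $t\oplus v$ must in particular be positive, i.e.\ $\|v\|\leq t$ by~\cite[Proposition 2.25.]{lami2018non} (or rather the part of it quoted in the introduction); this already rules out the degenerate cases and lets us assume $t>0$ when $v\neq 0$. The key is to exhibit a single well-chosen element $z\in\cL_n\otimes_{\max}\cL_n$ and a single functional $w\in(\cL_m\otimes_{\min}\cL_m)^*=\cL_m\otimes_{\max}\cL_m$ such that the inequality $\braket{w}{((t\oplus v)\otimes(t\oplus v))(z)}\geq 0$, after simplification, forces $\hs(v)\leq t$. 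The natural candidate is to take $z=\hat I_n=\sum_{i=0}^n e_i\otimes e_i$, which lies in $\cL_n\otimes_{\max}\cL_n$ (it is the Choi tensor of the identity, and $\ident\in\Pos(\cL_n,\cL_n)$). Then $((t\oplus v)\otimes(t\oplus v))(\hat I_n)$ is the Choi tensor of the map $(t\oplus v)\ident(t\oplus v)^T$, and by~\eqref{equ:ChoiJamiolkowski} the pairing with $w$ corresponding to a positive map $P\in\Pos(\cL_m,\cL_m)$ is $\Tr[(t\oplus v^T)P(t\oplus v)]$. So the condition $t\oplus v\in\maxEA_2$ forces $\Tr[(t\oplus v^T)P(t\oplus v)]\geq 0$ for all $P\in\Pos(\cL_m,\cL_m)$, i.e.\ $(t\oplus v)(t\oplus v^T)\in\Pos(\cL_m,\cL_m)^*=\EB(\cL_m,\cL_m)$; but actually it is cleaner to test directly against a suitable automorphism-type $P$.

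More concretely, I would choose $P=P_\alpha$ a Lorentz boost in the $(0,1)$-plane (as in the proof of Lemma~\ref{lem:ProdCentralHSLor}), possibly conjugated by well-chosen orthogonal maps to align the ``directions'' of $v$ optimally, and compute $\Tr[(t\oplus v^T)P_\alpha(t\oplus v)]$ as $\alpha\to\infty$. Writing $v=\begin{pmatrix} r & x^T\\ y & V\end{pmatrix}$ in the block form used in the lemma, the dominant term as $\cosh\alpha,\sinh\alpha\to\infty$ should produce (up to normalization) an expression of the shape $t^2-r^2-\|x\|_2^2-\|y\|_2^2-\hs(V)^2 = t^2-\hs(v)^2$, and nonnegativity then yields $\hs(v)\leq t$. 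The role of the orthogonal conjugations is to ensure that the cross terms from the off-diagonal blocks combine into full norms $\|x\|_2$, $\|y\|_2$, $\hs(V)$ rather than smaller quantities; alternatively one can first apply the Sinkhorn/SVD normal form from~\cite[Theorem 3.4.]{hildebrand2011lmi} to reduce to the case where $v$ is a diagonal contraction times $t$, for which the computation is transparent and the Hilbert-Schmidt norm is just the Euclidean norm of the diagonal.

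The main obstacle I anticipate is the ``only if'' direction's bookkeeping: one must verify that the particular $z$ and $w$ (or equivalently the particular extremal $P$) chosen actually extract the \emph{full} Hilbert-Schmidt norm and not merely the operator norm or some intermediate quantity --- in other words, that a single test element suffices rather than needing to optimize over a family. Using the SVD normal form to diagonalize $v$ first is likely the cleanest route around this, since for diagonal $v$ the relevant trace is a simple explicit function of $\alpha$ and the diagonal entries, and taking $\alpha\to\infty$ visibly recovers $\sum_i v_{ii}^2 = \hs(v)^2$. I would also double-check that conjugating by automorphisms is harmless here, which follows since $\maxEA_2(\cL_n,\cL_m)$ is closed under composition with automorphisms on both sides and $\hs$ is invariant under orthogonal conjugation, exactly as exploited in Lemma~\ref{lem:ProdCentralHSLor}.
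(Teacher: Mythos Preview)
Your ``if'' direction is correct and identical to the paper's. For the ``only if'' direction, however, the Lorentz-boost test you favour does not work: a direct computation gives $\Tr\lbr(t\oplus v^T)P_\alpha(t\oplus v)\rbr=\cosh(\alpha)\bigl(t^2+\sum_i v_{1i}^2\bigr)+\sum_{k\geq 2}\sum_i v_{ki}^2$, which is a sum of nonnegative terms and diverges as $\alpha\to\infty$. No minus sign ever appears --- a boost $P_\alpha$ is positive-definite in the Euclidean sense --- so this test cannot detect $\hs(v)$, even after diagonalizing $v$ via SVD or conjugating by orthogonals. The claim that the dominant term is $t^2-\hs(v)^2$ is simply false.

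The paper instead pairs $((t\oplus v)\otimes(t\oplus v))(\hat I_n)$ against the single tensor $\hat J_m=e_0\otimes e_0-\sum_{i=1}^m e_i\otimes e_i\in\cL_m\otimes_{\max}\cL_m$ and reads off $t^2-\Tr\lbr v^Tv\rbr\geq 0$ in one line. In your framework this amounts to choosing the automorphism $P=J_m$ (the spatial reflection $(s,y)\mapsto(s,-y)$), not a boost. Incidentally, your passing remark that $(t\oplus v)(t\oplus v^T)=t^2\oplus vv^T\in\EB(\cL_m,\cL_m)$, hence $\Nuc(vv^T)=\Tr(vv^T)=\hs(v)^2\leq t^2$, is already a complete and correct argument, essentially equivalent to the paper's; you should have stopped there rather than pivoting to the boost.
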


\begin{proof}
If $\hs(v)\leq t$, then we conclude by Lemma \ref{lem:ProdCentralHSLor} that $t\oplus v\in \maxEA_2(\cL_n,\cL_m)$. For the other direction assume that $t\oplus v\in \maxEA_2(\cL_n,\cL_m)$. This implies that 
\[
\braket{\hat{J}_m}{\lb(t\oplus v)\otimes (t\oplus v)\rb\lb \hat{I}_n\rb} = t^2 - \Tr\lbr v^Tv\rbr \geq 0 ,
\]
where $\hat{I}_n$ is the identity tensor and $\hat{J}_m\in \cL_m\otimes_{\max}\cL_m$ is the tensor given by $e_0\otimes e_0-\sum^m_{i=1} e_i\otimes e_i$. We conclude that $\hs(v)\leq t$.
\end{proof}

The special case of $n=m=3$ in Theorem \ref{thm:HSAndEALor} can also be obtained from the characterization of unital and trace-preserving $2$-locally entanglement annihilating maps on $M_2(\C)$ found in~\cite{filippov2012local}. This can be seen as follows: Using the Bloch ball representation (see, e.g.,~\cite[Section 2.1.2.]{aubrun2017alice}), the central maps in $\maxEA_2(\cL_3,\cL_3)$ correspond to the unital and trace-preserving positive maps $P:M_2(\C)_{sa}\ra M_2(\C)_{sa}$ satisfying 
\[
(P\otimes P)\lb M_2(\C)^+\otimes_{\max} M_2(\C)^+\rb \subseteq M_2(\C)^+\otimes_{\min} M_2(\C)^+ .
\]
By~\cite{stormer1963positive,woronowicz1976positive}, any $X\in M_2(\C)^+\otimes_{\max} M_2(\C)^+$ can be written as a sum $X=Y_1+(\ident_2\otimes \vartheta_2)(Y_2)$ with the transpose map $\vartheta_2:M_2(\C)_{sa}\ra M_2(\C)_{sa}$, given by $\vartheta_2(Y)=Y^T$ in any fixed basis, and positive semidefinite $Y_1,Y_2\in (M_2(\C)\otimes M_2(\C))^+$. This shows that the $2$-locally entanglement annihilating maps on $M_2(\C)$ characterized in~\cite{filippov2012local} coincide with the max-entanglement annihilating maps and the special case of $n=m=3$ in Theorem \ref{thm:HSAndEALor} follows.  

In general, the sets of max-entanglement annihilating maps between arbitrary pairs of proper cones are not convex (see Appendix~\ref{app:maxEANotConv}). However, the set of max-entanglement annihilating maps between Lorentz cones turns out to be convex. We start by the following observation.

\begin{thm}\label{thm:ProdPropEALor}
For any $P,Q\in \maxEA_2(\cL_n,\cL_m)$ we have
\[
(P\otimes Q)(\cL_n\otimes_{\max} \cL_n)\subseteq \cL_m\otimes_{\min} \cL_m .
\]
\end{thm}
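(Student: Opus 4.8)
The plan is to reduce the statement to the central case of Lemma~\ref{lem:ProdCentralHSLor} via the density result of Proposition~\ref{prop:denseSetLor}, and then to pass to a limit using that $\cL_m\otimes_{\min}\cL_m$ is closed. The first task is therefore to verify that $\maxEA_2(\cL_n,\cL_m)$ satisfies the three hypotheses of Proposition~\ref{prop:denseSetLor}. It is closed because it equals $\Pos(\cL_n,\cL_m)$ intersected with the family of closed conditions $\braket{w}{(P\otimes P)(z)}\geq 0$ as $z$ ranges over $\cL_n\otimes_{\max}\cL_n$ and $w$ over $(\cL_m\otimes_{\min}\cL_m)^\ast=\cL_m\otimes_{\max}\cL_m$, each of which is a continuous (quadratic) condition on $P$. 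It is closed under composition with automorphisms because for $A\in\Aut(\cL_n)$ and $B\in\Aut(\cL_m)$ one has $(BPA)\otimes(BPA)=(B\otimes B)(P\otimes P)(A\otimes A)$, where $A\otimes A$ is an automorphism of $\cL_n\otimes_{\max}\cL_n$ and $B\otimes B$ maps $\cL_m\otimes_{\min}\cL_m$ into itself, both by functoriality of the tensor cones; and $BPA$ is again positive.

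For the third hypothesis I would prove the stronger fact $\maxEA_2(\cL_n,\cL_m)+\EB(\cL_n,\cL_m)\subseteq\maxEA_2(\cL_n,\cL_m)$. Given $P\in\maxEA_2(\cL_n,\cL_m)$ and $F\in\EB(\cL_n,\cL_m)$, expand $(P+F)\otimes(P+F)=P\otimes P+P\otimes F+F\otimes P+F\otimes F$. The summand $P\otimes P$ sends $\cL_n\otimes_{\max}\cL_n$ into $\cL_m\otimes_{\min}\cL_m$ because $P\in\maxEA_2(\cL_n,\cL_m)$. In each of the other three summands the entanglement-breaking factor $F$ occurs; applying it first (as $\ident\otimes F$ or $F\otimes\ident$) sends $\cL_n\otimes_{\max}\cL_n$ into a minimal tensor product by \eqref{equ:EBDef} (and symmetry of the tensor product), and then applying the remaining factor, which is a positive map, keeps the image inside $\cL_m\otimes_{\min}\cL_m$ by functoriality of the minimal tensor cone. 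Since $\cL_m\otimes_{\min}\cL_m$ is a convex cone, the sum of the four images lies in it, so $P+F\in\maxEA_2(\cL_n,\cL_m)$. As $e^m_0(e^n_0)^\ast$ is a positive rank-one map, it is entanglement breaking, and the third hypothesis of Proposition~\ref{prop:denseSetLor} follows (for all $\epsilon>0$).

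Proposition~\ref{prop:denseSetLor} now shows that the maps $B(1\oplus v)A$ with $A\in\Aut(\cL_n)$, $B\in\Aut(\cL_m)$, and $1\oplus v\in\maxEA_2(\cL_n,\cL_m)$ for diagonal $v:\ell^n_2\to\ell^m_2$ are dense in $\maxEA_2(\cL_n,\cL_m)$; moreover, by Theorem~\ref{thm:HSAndEALor}, the condition $1\oplus v\in\maxEA_2(\cL_n,\cL_m)$ is equivalent to $\hs(v)\leq 1$. For two such maps $P=B_1(1\oplus v)A_1$ and $Q=B_2(1\oplus w)A_2$ with $\hs(v)\leq 1$ and $\hs(w)\leq 1$, I write $P\otimes Q=(B_1\otimes B_2)\circ((1\oplus v)\otimes(1\oplus w))\circ(A_1\otimes A_2)$. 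Here $A_1\otimes A_2$ is an automorphism of $\cL_n\otimes_{\max}\cL_n$, the middle map sends $\cL_n\otimes_{\max}\cL_n$ into $\cL_m\otimes_{\min}\cL_m$ by Lemma~\ref{lem:ProdCentralHSLor} applied with $s=t=1$, and $B_1\otimes B_2$ keeps $\cL_m\otimes_{\min}\cL_m$ invariant; hence $(P\otimes Q)(\cL_n\otimes_{\max}\cL_n)\subseteq\cL_m\otimes_{\min}\cL_m$ for all $P,Q$ in the dense set.

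To conclude, take arbitrary $P,Q\in\maxEA_2(\cL_n,\cL_m)$ together with sequences $P_j\to P$ and $Q_j\to Q$ from the dense set. For each fixed $z\in\cL_n\otimes_{\max}\cL_n$ we have $(P_j\otimes Q_j)(z)\in\cL_m\otimes_{\min}\cL_m$ and $(P_j\otimes Q_j)(z)\to(P\otimes Q)(z)$, so closedness of $\cL_m\otimes_{\min}\cL_m$ forces $(P\otimes Q)(z)\in\cL_m\otimes_{\min}\cL_m$, which is exactly the claim. I expect the main obstacle to be the stability statement $\maxEA_2+\EB\subseteq\maxEA_2$ in the second paragraph: this is the only place where the (in general) failure of convexity of $\maxEA_2$ must be circumvented directly, and it requires carefully tracking which tensor factor is entanglement breaking in each cross term. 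Everything else reduces to functoriality of the minimal and maximal tensor cones together with the already-established characterizations in Lemma~\ref{lem:ProdCentralHSLor} and Theorem~\ref{thm:HSAndEALor}.
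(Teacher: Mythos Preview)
Your proof is correct and follows exactly the route the paper takes: reduce to central maps via Theorem~\ref{thm:HSAndEALor} and Lemma~\ref{lem:ProdCentralHSLor}, then invoke Proposition~\ref{prop:denseSetLor} and a limit. The paper's proof is two sentences; you have simply spelled out the verification of the hypotheses of Proposition~\ref{prop:denseSetLor} (in particular the stability $\maxEA_2+\EB\subseteq\maxEA_2$) and the limiting argument that the paper leaves implicit.
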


\begin{proof}
By Theorem \ref{thm:HSAndEALor} and Lemma \ref{lem:ProdCentralHSLor} the statement is true for central maps $P,Q\in \maxEA_2(\cL_n,\cL_m)$. The general case follows from Proposition \ref{prop:denseSetLor}. 
\end{proof}

Next, we will prove some useful properties of the set $\maxEA_2(\cL_n,\cL_m)$. 

\begin{thm}\label{thm:PropsEALor}
For $n,m\in\N$ we have the following:
\begin{enumerate}
\item The set $\maxEA_2(\cL_n,\cL_m)$ is a closed convex cone.
\item If $P\in \maxEA_2(\cL_n,\cL_m)$, then $P^T\in \maxEA_2(\cL_m,\cL_n)$.
\item We have $\maxEA^*_2(\cL_n,\cL_m) = \maxEA_2(\cL_m,\cL_n)$.
\end{enumerate}
\end{thm}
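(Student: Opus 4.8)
The plan is to establish the three items in order, since (3) relies on (1) and (2).

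\textbf{Item (1).} Closedness is a routine limit argument: if $P_j\to P$ with each $P_j\in\maxEA_2(\cL_n,\cL_m)$, then $P\in\Pos(\cL_n,\cL_m)$ (that cone is closed) and for every $z\in\cL_n\otimes_{\max}\cL_n$ one has $(P_j\otimes P_j)(z)\to(P\otimes P)(z)$, which therefore lies in the closed cone $\cL_m\otimes_{\min}\cL_m$; hence $P\in\maxEA_2(\cL_n,\cL_m)$. Convexity is immediate from Theorem~\ref{thm:ProdPropEALor}: for $P_1,\dots,P_N\in\maxEA_2(\cL_n,\cL_m)$ and $\lambda_i\ge 0$ we have $\big(\sum_i\lambda_iP_i\big)^{\otimes 2}=\sum_{i,j}\lambda_i\lambda_j\,(P_i\otimes P_j)$, each summand maps $\cL_n\otimes_{\max}\cL_n$ into the convex cone $\cL_m\otimes_{\min}\cL_m$ by Theorem~\ref{thm:ProdPropEALor}, and $\sum_i\lambda_iP_i\in\Pos(\cL_n,\cL_m)$ by convexity of $\Pos$. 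Along the way I record the elementary fact, needed later, that $\maxEA_2(\cL_n,\cL_m)$ is stable under $P\mapsto BPA$ for $A\in\Aut(\cL_n)$, $B\in\Aut(\cL_m)$: this follows from $(BPA)^{\otimes 2}=(B\otimes B)(P\otimes P)(A\otimes A)$ together with the fact that $A\otimes A$ maps $\cL_n\otimes_{\max}\cL_n$ onto itself and $B\otimes B$ maps $\cL_m\otimes_{\min}\cL_m$ onto itself. Combined with convexity, this shows $\maxEA_2(\cL_n,\cL_m)$ meets the hypotheses of Proposition~\ref{prop:denseSetLor}.

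\textbf{Item (2).} This is a short duality computation. First, $P^T\in\Pos(\cL_m,\cL_n)$ since $\cL_n$ and $\cL_m$ are self-dual. For any $z\in\cL_n\otimes_{\max}\cL_n$ and $w\in\cL_m\otimes_{\max}\cL_m$ we have $\langle z,(P^T\otimes P^T)(w)\rangle=\langle(P\otimes P)(z),w\rangle\ge 0$, because $(P\otimes P)(z)\in\cL_m\otimes_{\min}\cL_m$ and $w\in\cL_m\otimes_{\max}\cL_m=(\cL_m\otimes_{\min}\cL_m)^*$. Hence $(P^T\otimes P^T)(w)\in(\cL_n\otimes_{\max}\cL_n)^*=\cL_n\otimes_{\min}\cL_n$, i.e.\ $P^T\in\maxEA_2(\cL_m,\cL_n)$.

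\textbf{Item (3).} The inclusion $\maxEA_2(\cL_m,\cL_n)\subseteq\maxEA^*_2(\cL_n,\cL_m)$ follows from Theorem~\ref{thm:ProdPropEALor}, item (2), and the Choi--Jamiolkowski identity~\eqref{equ:ChoiJamiolkowski}: if $Q\in\maxEA_2(\cL_m,\cL_n)$ then $Q^T\in\maxEA_2(\cL_n,\cL_m)$ by (2), so for $P\in\maxEA_2(\cL_n,\cL_m)$ Theorem~\ref{thm:ProdPropEALor} gives $(Q^T\otimes P)(\hat{I}_n)\in\cL_m\otimes_{\min}\cL_m$, whence $\Tr[PQ]=\langle\hat{I}_m,(Q^T\otimes P)(\hat{I}_n)\rangle\ge 0$ since $\hat{I}_m\in\cL_m\otimes_{\max}\cL_m$. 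For the reverse inclusion I would apply Proposition~\ref{prop:denseSetLor} to the dual cone $\mathcal{S}:=\maxEA^*_2(\cL_n,\cL_m)$ (with the roles of $n$ and $m$ interchanged). One checks $\mathcal S$ meets the hypotheses: it is closed (a dual cone); it lies in $\Pos(\cL_m,\cL_n)$, because $\EB(\cL_n,\cL_m)\subseteq\maxEA_2(\cL_n,\cL_m)$ gives $\mathcal S\subseteq\EB(\cL_n,\cL_m)^*=\Pos(\cL_m,\cL_n)$; it is stable under composition with automorphisms, by dualizing the corresponding stability of $\maxEA_2(\cL_n,\cL_m)$ from item (1); and $Q\in\mathcal S$ implies $Q+\epsilon\,e^n_0(e^m_0)^*\in\mathcal S$ for every $\epsilon>0$, since $\Tr\!\big[P\,e^n_0(e^m_0)^*\big]=(e^m_0)^*(P(e^n_0))\ge 0$ for every $P\in\maxEA_2(\cL_n,\cL_m)\subseteq\Pos(\cL_n,\cL_m)$. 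Proposition~\ref{prop:denseSetLor} then exhibits $\mathcal S$ as the closure of the automorphism orbit of its \emph{central} members $1\oplus v$ with $v:\ell^m_2\to\ell^n_2$ diagonal, so it remains to identify these: I claim $1\oplus v\in\maxEA^*_2(\cL_n,\cL_m)$ if and only if $\hs(v)\le 1$. The ``if'' direction is the already-proven inclusion combined with Theorem~\ref{thm:HSAndEALor}; for ``only if'', testing $1\oplus v$ against the maps $1\oplus w$ with $\hs(w)\le1$ (which lie in $\maxEA_2(\cL_n,\cL_m)$ by Theorem~\ref{thm:HSAndEALor}) yields $0\le\Tr[(1\oplus w)(1\oplus v)]=1+\Tr[wv]$, and since $\min\{\Tr[wv]:\hs(w)\le 1\}=-\hs(v)$ by Cauchy--Schwarz for the Hilbert--Schmidt inner product, we get $\hs(v)\le 1$. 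Thus the central normal-form maps of $\maxEA^*_2(\cL_n,\cL_m)$ and of $\maxEA_2(\cL_m,\cL_n)$ coincide (the latter by Theorem~\ref{thm:HSAndEALor}); since $\maxEA_2(\cL_m,\cL_n)$ is closed (item (1)) and stable under composition with automorphisms, it contains the dense orbit produced by Proposition~\ref{prop:denseSetLor}, hence $\maxEA^*_2(\cL_n,\cL_m)\subseteq\maxEA_2(\cL_m,\cL_n)$, and equality follows.

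\textbf{Main obstacle.} The hard part is the reverse inclusion in (3). A direct separation argument --- from $Q\notin\maxEA_2(\cL_m,\cL_n)$ produce a witness $P\in\maxEA_2(\cL_n,\cL_m)$ with $\Tr[PQ]<0$ --- does not reduce cleanly, so one is essentially forced to pass to the Sinkhorn-type normal form of Proposition~\ref{prop:denseSetLor} and collapse the whole identity onto the self-duality of the Hilbert--Schmidt norm for central maps. Verifying that the dual cone $\maxEA^*_2(\cL_n,\cL_m)$ satisfies the hypotheses of that proposition --- in particular stability under adding a small multiple of $e^n_0(e^m_0)^*$ --- is the piece requiring care.
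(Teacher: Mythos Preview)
Your proof is correct and, for items (1) and (3), follows essentially the same route as the paper: convexity via Theorem~\ref{thm:ProdPropEALor}, and the reverse inclusion in (3) by reducing to central maps through the Sinkhorn normal form and then invoking self-duality of the Hilbert--Schmidt norm. The only packaging difference in (3) is that you apply Proposition~\ref{prop:denseSetLor} to the dual cone $\mathcal S$ as a whole, whereas the paper applies the Sinkhorn decomposition directly to each element $P\in\mathcal S$ after perturbing into the interior; these are the same argument.

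The genuine difference is item (2). The paper proves it by approximation: write $P=\lim_{\epsilon\to 0}B_\epsilon(1\oplus v_\epsilon)A_\epsilon$ via Proposition~\ref{prop:denseSetLor}, observe that $\hs(v_\epsilon^T)=\hs(v_\epsilon)\le 1$, and pass to the limit. Your argument is more direct and more elementary: from $(P\otimes P)(\cL_n\otimes_{\max}\cL_n)\subseteq\cL_m\otimes_{\min}\cL_m$ and the duality $(\cC\otimes_{\min}\cC)^*=\cC^*\otimes_{\max}\cC^*$ one gets immediately that $(P^T\otimes P^T)(\cL_m\otimes_{\max}\cL_m)\subseteq\cL_n\otimes_{\min}\cL_n$. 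This avoids any appeal to the normal form or to Theorem~\ref{thm:HSAndEALor}, and in fact works for arbitrary self-dual cones, not just Lorentz cones. The paper's approach, by contrast, already has the normal-form machinery in hand for (3), so reusing it for (2) costs nothing; but your route is the cleaner one for this step.
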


\begin{proof}
It is clear that $\maxEA_2(\cL_n,\cL_m)$ is a closed set that is invariant under multiplication by positive scalars. Next, consider $P,Q\in \maxEA_2(\cL_n,\cL_m)$ and the convex combination $R = \lambda P + (1-\lambda)Q$ for $\lambda\in\lbr 0,1\rbr$. The tensor product $R\otimes R$ is a sum of the maps $P\otimes P$, $P\otimes Q$, $Q\otimes P$, and $Q\otimes Q$ with positive factors. By Theorem \ref{thm:PropsEALor} all those maps annihilate the entanglement in $\cL_n\otimes_{\max} \cL_n$ and we find that $R\in \maxEA_2(\cL_n,\cL_m)$. This shows that $\maxEA_2(\cL_n,\cL_m)$ is a convex cone. 

For the second statement consider $P\in \maxEA_2(\cL_n,\cL_m)$. By Proposition~\ref{prop:denseSetLor} we find $v_\epsilon:\ell^n_2\ra \ell^m_2$ for every $\epsilon >0$ satisfying $\hs(v_\epsilon)\leq 1$, and automorphisms $A_\epsilon\in \Aut(\cL_n)$ and $B_\epsilon\in \Aut(\cL_m)$ such that $P = \lim_{\epsilon\ra 0} B_\epsilon (1\oplus v_\epsilon) A_\epsilon$. As the Hilbert-Schmidt norm is preserved under taking adjoints, we conclude that $(B_\epsilon (1\oplus v_\epsilon) A_\epsilon)^T=A^T_\epsilon (1\oplus v^T_\epsilon) B^T_\epsilon\in \maxEA_2(\cL_m,\cL_n)$. Taking the limit this implies $P^T\in \maxEA_2(\cL_m,\cL_n)$.

To show the third statement, consider linear maps $P:\R^{m+1}\ra \R^{n+1}$ and $Q:\R^{n+1}\ra \R^{m+1}$. By \eqref{equ:ChoiJamiolkowski} we have
\[
\Tr\lbr PQ\rbr = \braket{\hat{I}_{n}}{(P\otimes Q^*)(\hat{I}_{m})},
\]
where $\hat{I}_n\in \R^{n+1}\otimes \R^{n+1}$ and $\hat{I}_{m}\in \R^{m+1}\otimes \R^{m+1}$ denote identity tensors. If $Q\in \maxEA_2(\cL_n,\cL_m)$, then $Q^*\in \maxEA_2(\cL_m,\cL_n)$ by the argument given above. In this case, Theorem \ref{thm:ProdPropEALor} implies that $(P\otimes Q^*)(\hat{I}_{m})\in \cL_n\otimes_{\min} \cL_n$ whenever $P\in \maxEA_2(\cL_m,\cL_n)$ since $\hat{I}_{m}\in \cL_m\otimes_{\max} \cL_m$. We conclude that $\Tr\lbr PQ\rbr\geq 0$ whenever $P\in \maxEA_2(\cL_m,\cL_n)$ and $Q\in \maxEA_2(\cL_n,\cL_m)$. Consider now a linear map $P:\R^{m+1}\ra \R^{n+1}$ such that $\Tr\lbr PQ\rbr\geq 0$ for any $Q\in \maxEA_2(\cL_n,\cL_m)$. Since $\EB(\cL_n,\cL_m)\subseteq\maxEA_2(\cL_n,\cL_m)$ we conclude that $P\in \Pos(\cL_m,\cL_n)$. Using the Sinkhorn type normal form~\cite[Theorem 3.4.]{hildebrand2011lmi}, we can find $v_\epsilon:\ell^n_2\ra \ell^m_2$ satisfying $\|v_\epsilon\|\leq 1$, and automorphisms $A_\epsilon\in \Aut(\cL_n)$ and $B_\epsilon\in \Aut(\cL_m)$ such that $P + \epsilon e^{m}_0(e^{n}_0)^* = B_\epsilon (1\oplus v_\epsilon) A_\epsilon$ for each $\epsilon>0$. Then, we have $\Tr\lbr (1\oplus v_\epsilon)Q\rbr\geq 0$ for every $Q\in \maxEA_2(\cL_n,\cL_m)$ and all $\epsilon>0$. Restricting to $Q=(1\oplus v')$ with $\hs(v')\leq 1$ and using selfduality of the Hilbert-Schmidt norm shows that $\hs(v_\epsilon)\leq 1$ for every $\epsilon>0$. We conclude that $P\in \maxEA_2(\cL_m,\cL_n)$ as it is a limit of maps in the closed set $\maxEA_2(\cL_m,\cL_n)$.  
\end{proof}

For any $n\in\N$ recall that $J_n\in \Pos(\cL_n,\cL_n)$ denotes the map such that $J_n(e_0)=e_0$ and $J_n(e_i)=-e_i$ for all $i\in\lset 1,\ldots ,n\rset$. The next theorem is useful for checking when positive maps between Lorentz cones are max-entanglement annihilating. \\

\begin{thm}\label{thm:maxEALorCriteria}
For $P\in \Pos(\cL_n,\cL_m)$ the following are equivalent:
\begin{enumerate}
\item We have $P\in\maxEA_2(\cL_n,\cL_m)$.
\item The eigenvalues $\lambda_0,\lambda_1,\ldots , \lambda_{m}$ of the matrix $J_mPJ_nP^T$ are non-negative and satisfy 
\[
\lambda_0 \geq \sum^m_{i=1} \lambda_i ,
\]
when ordered such that $\lambda_0\geq \max\lset \lambda_1,\cdots,\lambda_m\rset$.
\item We have $PJ_nP^T\in \EB(\cL_m,\cL_m)$.
\item We have $(P\otimes P)(\hat{J_n})\in \cL_m\otimes_{\min} \cL_m$. 
\end{enumerate}
\end{thm}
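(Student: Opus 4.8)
The heart of the argument is a reduction to \emph{central} maps. Recall that a positive map between Lorentz cones lying in the interior of the positive cone has a Sinkhorn-type normal form \cite[Thm.~3.4]{hildebrand2011lmi}, that $\maxEA_2$ between Lorentz cones and $\EB$ are invariant under composition with Lorentz automorphisms, and that on central maps Theorem~\ref{thm:HSAndEALor} identifies the $\maxEA_2$-maps with the Hilbert--Schmidt ball while \cite[Prop.~2.25]{lami2018non} identifies the $\EB$-maps with the nuclear-norm ball. First I would dispose of the equivalences that hold for every $P$. The equivalence (3)$\Leftrightarrow$(4) is an exact computation with \eqref{equ:ChoiJamiolkowski}: since $\hat J_n=(\ident_{n+1}\otimes J_n)(\hat I_n)$ and $(X\otimes Y)(\hat I_n)=(XY^T\otimes\ident_{m+1})(\hat I_m)$ for $X,Y\colon\R^{n+1}\to\R^{m+1}$, one gets $(P\otimes P)(\hat J_n)=(PJ_nP^T\otimes\ident_{m+1})(\hat I_m)$, and a tensor $(R\otimes\ident_{m+1})(\hat I_m)$ lies in $\cL_m\otimes_{\min}\cL_m$ iff $R\in\EB(\cL_m,\cL_m)$. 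The implication (1)$\Rightarrow$(4) is immediate because $\hat J_n\in\cL_n\otimes_{\max}\cL_n$. Finally, since $\EB(\cL_m,\cL_m)=\Pos(\cL_m,\cL_m)^*$ and $\Pos(\cL_m,\cL_m)$ is generated by positive rank-one maps and Lorentz automorphisms, and since $\Tr(PJ_nP^T\,y\,z^*)=\braket{P^Tz}{J_nP^Ty}\ge0$ always (Cauchy--Schwarz for the Minkowski pairing of $P^Ty,P^Tz\in\cL_n$), condition (3) is equivalent to $\Tr\bigl((J_mPJ_nP^T)\,D\bigr)\ge0$ for every $D\in\Aut(\cL_m)$.

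For $P$ in the interior of $\Pos(\cL_n,\cL_m)$, write $P=B(1\oplus v)A$ with $A\in\Aut(\cL_n)$, $B\in\Aut(\cL_m)$ and $v\colon\ell^n_2\to\ell^m_2$ a diagonal contraction. The orthochronous-Lorentz-group identities $AJ_nA^T=c_A^2J_n$ and $B^TJ_mB=c_B^2J_m$ (read off from \eqref{equ:LorAutRepre}) give $PJ_nP^T=B\bigl(c_A^2\oplus(-c_A^2vv^T)\bigr)B^T$, so $J_mPJ_nP^T$ is similar to $c_A^2c_B^2\,(1\oplus vv^T)$, with eigenvalues $c_A^2c_B^2$ and $c_A^2c_B^2\sigma_1(v)^2,\dots,c_A^2c_B^2\sigma_m(v)^2$ — all non-negative and, because $\|v\|\le1$, largest equal to $c_A^2c_B^2$. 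Using the automorphism-invariance of $\EB$ and $\maxEA_2$, all four conditions now collapse to $\hs(v)\le1$: (1)$\Leftrightarrow 1\oplus v\in\maxEA_2(\cL_n,\cL_m)\Leftrightarrow\hs(v)\le1$ by Theorem~\ref{thm:HSAndEALor}; (3)$\Leftrightarrow c_A^2\oplus(-c_A^2vv^T)\in\EB(\cL_m,\cL_m)\Leftrightarrow\Nuc(vv^T)=\hs(v)^2\le1$ by \cite[Prop.~2.25]{lami2018non}; (2)$\Leftrightarrow c_A^2c_B^2\ge\sum_i c_A^2c_B^2\sigma_i(v)^2\Leftrightarrow\hs(v)\le1$; and (4)$\Leftrightarrow$(3). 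This settles the theorem for interior $P$.

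It remains to remove interiority. The implications (1)$\Rightarrow$(4) and (3)$\Leftrightarrow$(4) already hold in general, and (1)$\Rightarrow$(2) extends by continuity: by Proposition~\ref{prop:denseSetLor} and Theorem~\ref{thm:HSAndEALor} one writes $P=\lim_k B_k(1\oplus v_k)A_k$ with $\hs(v_k)\le1$, the matrices $J_mP_kJ_nP_k^T$ satisfy (2) by the interior computation, and ``all eigenvalues non-negative, largest at least the sum of the rest'' is a closed condition. For (4)$\Rightarrow$(1): the map $P_\epsilon:=P+\epsilon\,e^m_0(e^n_0)^*$ lies in $\text{int}\bigl(\Pos(\cL_n,\cL_m)\bigr)$ for $\epsilon>0$, and
\[ P_\epsilon J_n P_\epsilon^T = PJ_nP^T + \epsilon\bigl(e^m_0\,c^* + c\,(e^m_0)^*\bigr) + \epsilon^2\, e^m_0(e^m_0)^*, \qquad c:=Pe^n_0\in\cL_m; \]
since $c\in\cL_m$ the last two terms are conic combinations of positive rank-one maps, so (4) (equivalently (3)) passes from $P$ to $P_\epsilon$, whence $P_\epsilon\in\maxEA_2(\cL_n,\cL_m)$ by the interior case and $P\in\maxEA_2(\cL_n,\cL_m)$ by closedness (Theorem~\ref{thm:PropsEALor}). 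The genuinely delicate step — which I expect to be the main obstacle — is ``(2)$\Rightarrow$(1)'' for $P$ on the boundary, where the Sinkhorn form is unavailable. I would resolve it either (a) by showing the perturbation $P\mapsto P_\epsilon$ preserves condition (2), i.e. by a controlled estimate on the eigenvalues of $J_mP_\epsilon J_nP_\epsilon^T=J_mPJ_nP^T+\epsilon N_\epsilon$ with the explicit rank-$\le2$ term $N_\epsilon$ above, and then repeating the argument used for (4)$\Rightarrow$(1); or (b) by proving directly from the dual reformulation of (3) that the eigenvalue condition on $M=J_mPJ_nP^T$ forces $\Tr(MD)\ge0$ for every $D\in\Aut(\cL_m)$ — in normal form this is the estimate $\cosh\gamma\,(1-\hs(v)^2)\ge0$ obtained from $\bigl|\Tr\bigl(vv^T u_1\,\text{diag}(\cosh\gamma,1,\dots,1)\,u_2\bigr)\bigr|\le\Nuc(vv^T)\,\cosh\gamma$ — the task being to carry this out without first diagonalizing $P$, e.g. via an independent canonical form for the matrix $M$.
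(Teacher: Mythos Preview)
Your overall strategy coincides with the paper's: reduce to the Sinkhorn normal form $P=\alpha(1\oplus v)\beta$ via Proposition~\ref{prop:denseSetLor}, and then show that each of the four conditions collapses to $\hs(v)\le 1$ using Theorem~\ref{thm:HSAndEALor}, Hildebrand's eigenvalue lemma, and the nuclear-norm characterisation of $\EB$. Your treatment of (3)$\Leftrightarrow$(4), (1)$\Rightarrow$(4), and the interior case is essentially identical to the paper's, only spelled out in more detail (e.g.\ you keep track of the scalar factors $c_A^2,c_B^2$ from the automorphisms, which the paper suppresses).

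Where you diverge is in the handling of the boundary. The paper disposes of it in a single sentence: ``since all the statements are preserved under taking limits, we may assume $P=\alpha(1\oplus v)\beta$.'' You correctly recognise that this appeal to Proposition~\ref{prop:denseSetLor} presupposes, for each condition $(i)$, that the set $\mathcal{S}_i$ of maps satisfying it is closed under the perturbation $P\mapsto P+\epsilon\,e_0^m(e_0^n)^*$. You verify this for $(1)$, $(3)$, $(4)$, but for $(2)$ you only sketch two possible routes without carrying either through. So your proof is incomplete at precisely the point you flag, and the paper is terse at the same point: it does not spell out why condition $(2)$ survives the perturbation either. In practice the cleanest closure is not your approach~(a) or~(b) but simply to note that once $(1)\Leftrightarrow(3)\Leftrightarrow(4)$ is established everywhere and $(1)\Rightarrow(2)$ follows by approximation through $\maxEA_2$ (a mapping ideal, so Proposition~\ref{prop:denseSetLor} applies directly), the remaining $(2)\Rightarrow(1)$ needs only the perturbation stability of $\mathcal{S}_2$; alternatively one can invoke Hildebrand's Lemma~3.8 in its full generality (it holds for all $P\in\Pos(\cL_n,\cL_m)$, not just interior ones), which makes the eigenvalue computation valid without first passing to Sinkhorn form.
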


\begin{proof}
By Proposition \ref{prop:denseSetLor} and since all the statements in the theorem are preserved under taking limits along convergent sequences of maps, we may assume without loss of generality that $P=\alpha(1\oplus v)\beta$ with automorphisms $\alpha\in\Aut(\cL_m)$, $\beta\in\Aut(\cL_n)$ and a contraction $v:\ell^n_2\ra \ell^m_2$. 

To show that $(1)$ and $(2)$ are equivalent, we note that by~\cite[Lemma 3.8.]{hildebrand2011lmi} the eigenvalues of $J_mPJ_nP^T$ are real and coincide with the squares of singular values of the map $1\oplus v$. Since $v$ is a contraction, we have $\lambda_0=1$ and we have $\sum^m_{i=1} \lambda_i \leq \lambda_0$ if and only if $\hs(v)\leq 1$. By Theorem \ref{thm:HSAndEALor} this is equivalent to $P\in\maxEA_2(\cL_n,\cL_m)$. 

It is clear that $(3)$ and $(4)$ are equivalent and that $(1)$ implies $(4)$. Finally, since $\beta J_n\beta^T=J_n$ for every $\beta\in\Aut(\cL_n)$, we find that 
\[
PJ_nP^T = \alpha(1\oplus v)\beta J_n\beta^T(1\oplus v^T)\alpha^T = \alpha(1\oplus -vv^T)\alpha^T ,
\]
is entanglement breaking if and only if $\Nuc(vv^T)\leq 1$. Since $\Nuc(vv^T)=\hs(v)^2$, we can use Theorem \ref{thm:HSAndEALor} to conclude that $(3)$ implies $(1)$ finishing the proof.
 
\end{proof}

\subsection{Maps into Lorentz cones and the \texorpdfstring{$2$}{2}-summing norm}

We will now study the linear maps between cones $\cC_A$ and $\cC_B$ that annihilate Lorentz-entanglement, i.e., positive maps $P\in\Pos\lb \cC_A,\cC_B\rb$ such that
\[
(P\otimes P)\lb \cC_A\otimes_L \cC_A\rb\subseteq \cC_B\otimes_{\min} \cC_B .
\]
If $\cC_B=\cL_n$ for some $n\in\N$, then we can use the results from the previous section to understand the set $\LorEA_2(\cC_A,\cL_n)$ of these maps. We start with a proposition:

\begin{prop}\label{prop:CharactLorEAIntoLk}
For any proper cone $\cC\subseteq \cV$, any $k\in\N$ and any linear map $P:\cV\ra \R^{n+1}$ the following are equivalent:
\begin{enumerate}
\item We have $P\in \LorEA_2(\cC,\cL_n)$.
\item For every $k\in\N$ and every $Q\in \Pos(\cL_k,\cC)$ we have $PQ\in \maxEA_2(\cL_k,\cL_n)$.
\item For every $Q\in \Pos(\cL_{\dim(\cV)-1},\cC)$ we have $PQ\in \maxEA_2(\cL_{\dim(\cV)-1},\cL_n)$.
\end{enumerate}
\end{prop}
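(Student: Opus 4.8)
The plan is to establish $(1)\Leftrightarrow(2)$ and $(2)\Leftrightarrow(3)$. Write $d=\dim(\cV)$. I will use two elementary facts freely: first, as in the proof of Lemma~\ref{lem:ProdCentralHSLor}, every element of $\cL_k\otimes_{\max}\cL_k$ has the form $(\ident_{k+1}\otimes R)(\hat{I}_k)$ with $R\in\Pos(\cL_k,\cL_k)$; second, a composition property of max-entanglement annihilating maps between Lorentz cones that follows immediately from functoriality of $\otimes_{\min}$ and $\otimes_{\max}$ under positive maps: if $R\in\maxEA_2(\cL_a,\cL_b)$ and $\rho\in\Pos(\cL_c,\cL_a)$ then $R\rho\in\maxEA_2(\cL_c,\cL_b)$.

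For $(1)\Rightarrow(2)$, fix $Q\in\Pos(\cL_k,\cC)$; positivity of $PQ$ is clear, so I only need to push $\cL_k\otimes_{\max}\cL_k$ through $(PQ)\otimes(PQ)$. Writing $z\in\cL_k\otimes_{\max}\cL_k$ as $(\ident_{k+1}\otimes R)(\hat{I}_k)$, we get $(Q\otimes Q)(z)=(Q\otimes QR)(\hat{I}_k)$, which lies in $\cC\otimes_L\cC$ since $Q,QR\in\Pos(\cL_k,\cC)$; applying $P\otimes P$ and invoking (1) gives $((PQ)\otimes(PQ))(z)\in\cL_n\otimes_{\min}\cL_n$. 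Conversely, for $(2)\Rightarrow(1)$: $P$ is positive because for $x\in\cC$ the rank-one map $Q\colon v\mapsto\langle e_0,v\rangle\,x$ lies in $\Pos(\cL_1,\cC)$, so $PQ\in\maxEA_2(\cL_1,\cL_n)\subseteq\Pos(\cL_1,\cL_n)$ and $P(x)=PQ(e_0)\in\cL_n$; and on a generator $(A\otimes B)(\hat{I}_k)$ of $\cC\otimes_L\cC$ one has $(P\otimes P)\big((A\otimes B)(\hat{I}_k)\big)=(PA\otimes PB)(\hat{I}_k)$, which is in $\cL_n\otimes_{\min}\cL_n$ by Theorem~\ref{thm:ProdPropEALor} because $PA,PB\in\maxEA_2(\cL_k,\cL_n)$ and $\hat{I}_k\in\cL_k\otimes_{\max}\cL_k$; convexity of $\cL_n\otimes_{\min}\cL_n$ finishes it.

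The implication $(2)\Rightarrow(3)$ is the specialization $k=d-1$, and the content of the proposition is the converse $(3)\Rightarrow(2)$: I must upgrade the hypothesis, which only concerns $\cL_{d-1}$, to all $\cL_k$. I claim every $A\in\Pos(\cL_k,\cC)$ factors as $A=A_1\rho$ with $\rho\in\Pos(\cL_k,\cL_{d-1})$ and $A_1\in\Pos(\cL_{d-1},\cC)$; granting this, $PA=(PA_1)\rho$ with $PA_1\in\maxEA_2(\cL_{d-1},\cL_n)$ by (3), so $PA\in\maxEA_2(\cL_k,\cL_n)$ by the composition property, which is exactly (2). If $k\le d-1$ the factorization is immediate, since $\cL_k$ is a retract of $\cL_{d-1}$ via the inclusion $\iota\colon(t,x)\mapsto(t,x,0)$ and the positive Euclidean projection $\pi\colon\cL_{d-1}\to\cL_k$ with $\pi\iota=\ident$, giving $A=(A\pi)\iota$. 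For $k\ge d$ I argue by downward induction on $k$, reducing $\cL_k$ to $\cL_{k-1}$ in one step: since $\text{rk}(A)\le d\le k$, the kernel of $A$ is nontrivial, and the key geometric point is that $\ker A$ must then contain a \emph{spacelike} vector $v$ (i.e.\ $v\notin\cL_k\cup(-\cL_k)$). Granting this, transitivity of $\Aut(\cL_k)\cong O^+(1,k)$ on rays of spacelike vectors yields $\Lambda\in\Aut(\cL_k)$ with $\Lambda v$ a multiple of the spatial basis vector $e_1=(0,1,0,\dots,0)$; then $e_1\in\ker(A\Lambda^{-1})$, and the coordinate projection $\sigma$ killing $e_1$ maps $\cL_k$ onto $\cL_{k-1}$ with $\ker\sigma=\spn(e_1)\subseteq\ker(A\Lambda^{-1})$, so $A\Lambda^{-1}$ descends to some $A'\in\Pos(\cL_{k-1},\cC)$ with $A\Lambda^{-1}=A'\sigma$, i.e.\ $A=A'(\sigma\Lambda)$ with $\sigma\Lambda\in\Pos(\cL_k,\cL_{k-1})$ surjective; iterating brings us to $\cL_{d-1}$.

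It remains to verify the geometric point and handle exceptions. If $\dim\ker A\ge2$ then, since a subspace of dimension at least two cannot lie inside the pointed double cone $\cL_k\cup(-\cL_k)$, it must contain a spacelike vector. If $\dim\ker A=1$, write $\ker A=\spn(v)$: when $v\in\mathrm{int}(\cL_k)$, perturbing $v$ in spatial directions and using that $\cC$ is pointed forces $A=0$ (which factors trivially); when $v\in\partial\cL_k\setminus\{0\}$, a tangential perturbation of $v=(\|w\|,w)$ shows $A$ annihilates the $(k-1)$-dimensional subspace $\{0\}\times w^{\perp}$, contradicting $\dim\ker A=1$ as soon as $k\ge3$, and here $k\ge d\ge3$; so $v$ is spacelike. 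Finally, when $d\le2$ the cone $\cC$ is simplicial, hence $\cC\otimes_L\cC=\cC\otimes_{\min}\cC$ and every map factoring through $\cC$ is entanglement breaking, so all three conditions collapse to $P\in\Pos(\cC,\cL_n)$ and the proposition is trivial. I expect this geometric dichotomy --- that the kernel of a positive map out of a Lorentz cone of dimension exceeding $\dim(\cV)$ is forced to contain a spacelike direction, which is what permits the descent to $\cL_{\dim(\cV)-1}$ --- to be the main obstacle, as it is the only place where the specific geometry of Lorentz cones is used rather than soft functoriality.
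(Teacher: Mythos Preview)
Your proof is correct. The argument for $(1)\Leftrightarrow(2)$ is the same as the paper's, just spelled out in more detail. For the key implication $(3)\Rightarrow(2)$, however, you take a genuinely different route.

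The paper argues by contraposition and \emph{dualizes}: given $Q\in\Pos(\cL_k,\cC)$ with $PQ\notin\maxEA_2(\cL_k,\cL_n)$, it passes to $Q^*P^*\notin\maxEA_2(\cL_n,\cL_k)$ using the adjoint-closure of $\maxEA_2$ (Theorem~\ref{thm:PropsEALor}), then applies Lemma~\ref{lem:RetractPropLor} to the subspace $\cS=Q^*(\cV^*)\subseteq\R^{k+1}$ to retract $\cL_k\cap\cS$ onto a low-dimensional Lorentz cone, producing the witness in $\Pos(\cL_{s-1},\cC)$. Your approach instead works \emph{primally}, proving directly that every $A\in\Pos(\cL_k,\cC)$ factors through $\cL_{d-1}$: when $k\ge d$ the kernel of $A$ is nontrivial, and your geometric dichotomy (either $A=0$, or $\ker A$ contains a spacelike vector) lets you rotate that vector to a coordinate axis and descend one dimension. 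The ``tangential perturbation'' step is terse but valid: for $v=(1,w)$ on the boundary and $u\perp w$ with $\|u\|=1$, the points $(1+\tfrac{\epsilon^2}{2},\,w\pm\epsilon u)$ lie in $\cL_k$, and dividing their images by $\epsilon$ and sending $\epsilon\to0$ gives $\pm A(0,u)\in\cC$, hence $A(0,u)=0$.

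What each approach buys: the paper's proof is shorter and reuses machinery already in place (adjoint stability and the retract lemma), so it fits the paper's structure better. Your proof is more self-contained---it needs neither Theorem~\ref{thm:PropsEALor} nor Lemma~\ref{lem:RetractPropLor}---at the price of an explicit case analysis on the causal type of $\ker A$. Incidentally, your factorization claim is a cousin of Lemma~\ref{lem:DimensionReduction} (which the paper proves later, again via Lemma~\ref{lem:RetractPropLor}), so you have effectively given an independent proof of that dimension-reduction principle.
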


\begin{proof}
Using the definition of the Lorentzian tensor product and Theorem \ref{thm:ProdPropEALor} it is easy to see that the first two statements are equivalent. Clearly, the second statement implies the third. To see the remaining implication assume that for a linear map $P:\cV\ra \R^{n+1}$ there exists a $Q\in \Pos(\cL_k,\cC)$ for some $k\in\N$ with $k\geq \dim(\cV)$ such that $PQ\notin \maxEA_2(\cL_k,\cL_n)$. Then, by Theorem \ref{thm:PropsEALor}, we have $Q^*P^*\notin \maxEA_2(\cL_n,\cL_k)$. Consider the subspace $\cS = Q^*(\cV^*)$ satisfying $s:=\dim(\cS)\leq \dim(\cV)$. By Lemma \ref{lem:RetractPropLor} the cone $\cL_k\cap \cS$ is a retract of $\cL_{s-1}$ and consequently there are linear maps $\alpha:\cS\ra \R^{s}$ and $\beta:\R^s\ra \cS$ satisfying $\alpha(\cL_k\cap \cS)\subseteq \cL_{s-1}$ and $\beta(\cL_{s-1})\subseteq \cL_k\cap \cS$ and $\beta\alpha=\ident_{\cS}$. As $\beta\alpha Q^*P^* = Q^*P^*\notin \maxEA_2(\cL_n,\cL_k)$ we conclude that $\alpha Q^*P^*\notin \maxEA_2(\cL_n,\cL_{s-1})$. Since $\alpha Q^*\in \Pos(\cC^*,\cL_{s-1})$ we have $Q'=(\alpha Q^*)^*\in \Pos(\cL_{s-1},\cC)$ and $PQ'\notin \maxEA_2(\cL_{s-1},\cL_k)$ finishing the proof.

\end{proof}

Combining this characterization of $\LorEA_2(\cC,L_k)$ with Theorem \ref{thm:ProdPropEALor} and Theorem \ref{thm:PropsEALor} gives the following:

\begin{thm}
For any proper cone $\cC$ and any $n\in\N$ we have the following:
\begin{enumerate}
\item For any $P,Q\in \LorEA_2(\cC,\cL_n)$ we have
\[
(P\otimes Q)\lb \cC\otimes_L \cC\rb\subseteq \cL_n\otimes_{\min} \cL_n .
\]
\item The set $\LorEA_2(\cC,\cL_n)$ is a closed convex cone. 
\end{enumerate}
\end{thm}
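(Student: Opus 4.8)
The plan is to deduce both statements directly from Proposition~\ref{prop:CharactLorEAIntoLk} together with the already-established properties of $\maxEA_2(\cL_k,\cL_n)$ collected in Theorem~\ref{thm:ProdPropEALor} and Theorem~\ref{thm:PropsEALor}. The key mechanism is that membership in $\LorEA_2(\cC,\cL_n)$ is, by Proposition~\ref{prop:CharactLorEAIntoLk}, equivalent to a ``pullback'' condition: $P\in\LorEA_2(\cC,\cL_n)$ if and only if $PQ\in\maxEA_2(\cL_k,\cL_n)$ for every $k\in\N$ and every $Q\in\Pos(\cL_k,\cC)$. So every assertion about $\LorEA_2(\cC,\cL_n)$ should reduce to the corresponding assertion about $\maxEA_2(\cL_k,\cL_n)$, which we already know.

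For part~(1), I would take $P,Q\in\LorEA_2(\cC,\cL_n)$ and an arbitrary element $z\in\cC\otimes_L\cC$. By the definition of the Lorentzian tensor product, $z$ is a convex combination of tensors of the form $(A\otimes B)(\hat I_k)$ with $A,B\in\Pos(\cL_k,\cC)$; by linearity it suffices to treat a single such tensor. Then
\[
(P\otimes Q)\big((A\otimes B)(\hat I_k)\big) = \big((PA)\otimes(QB)\big)(\hat I_k),
\]
and since $PA, QB\in\maxEA_2(\cL_k,\cL_n)$ by the equivalence (1)$\Leftrightarrow$(2) of Proposition~\ref{prop:CharactLorEAIntoLk}, Theorem~\ref{thm:ProdPropEALor} gives $\big((PA)\otimes(QB)\big)(\hat I_k)\in\cL_n\otimes_{\min}\cL_n$. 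Taking the convex hull and using that $\cL_n\otimes_{\min}\cL_n$ is convex yields $(P\otimes Q)(\cC\otimes_L\cC)\subseteq\cL_n\otimes_{\min}\cL_n$.

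For part~(2), closedness follows because the condition in Proposition~\ref{prop:CharactLorEAIntoLk}(3) is an intersection, over the single index $k=\dim(\cV)-1$ and over all $Q\in\Pos(\cL_{k},\cC)$, of the preimages of the closed set $\maxEA_2(\cL_{k},\cL_n)$ under the continuous maps $P\mapsto PQ$; an intersection of closed sets is closed. Homogeneity under positive scalars is immediate since $\maxEA_2(\cL_k,\cL_n)$ is a cone. For convexity, take $P_1,P_2\in\LorEA_2(\cC,\cL_n)$ and $\lambda\in[0,1]$, set $R=\lambda P_1+(1-\lambda)P_2$, and fix $Q\in\Pos(\cL_k,\cC)$; then $RQ=\lambda(P_1Q)+(1-\lambda)(P_2Q)$ is a convex combination of elements of $\maxEA_2(\cL_k,\cL_n)$, which is convex by Theorem~\ref{thm:PropsEALor}(1), so $RQ\in\maxEA_2(\cL_k,\cL_n)$; since this holds for all $Q$, Proposition~\ref{prop:CharactLorEAIntoLk} gives $R\in\LorEA_2(\cC,\cL_n)$. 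I do not anticipate a genuine obstacle here: the only mild subtlety is making sure the reduction via the Lorentzian tensor product handles the convex-hull closure correctly in part~(1), i.e.\ that one may restrict to generating tensors $(A\otimes B)(\hat I_k)$ before applying $P\otimes Q$ and then re-convexify, but this is routine given that $P\otimes Q$ is linear and $\cL_n\otimes_{\min}\cL_n$ is a convex cone.
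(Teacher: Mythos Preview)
Your proposal is correct and matches the paper's approach exactly: the paper does not give a detailed argument but simply states that the result follows by combining Proposition~\ref{prop:CharactLorEAIntoLk} with Theorem~\ref{thm:ProdPropEALor} and Theorem~\ref{thm:PropsEALor}, and your write-up spells out precisely this deduction.
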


Next, we aim to characterize the central maps in $\LorEA_2(\cC,L_k)$. For this we will need the following technical lemma:

\begin{lem}\label{lem:DiagonalPullbackMaxEA}
For any diagonal contraction $\Delta:\ell^k_\infty\ra \ell^k_2$ we have $(1\oplus \Delta)\in \LorEA_2(\cC_{\ell^k_\infty},\cL_{k})$. 
\end{lem}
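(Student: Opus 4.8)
The plan is to apply Proposition~\ref{prop:CharactLorEAIntoLk}, which reduces the claim to: for every $m\in\N$ and every $Q\in\Pos(\cL_m,\cC_{\ell^k_\infty})$ one has $R:=(1\oplus\Delta)Q\in\maxEA_2(\cL_m,\cL_k)$. Write $\Delta=\mathrm{diag}(d_1,\dots,d_k)$, so the hypothesis reads $\sum_i d_i^2\le 1$. The key elementary estimate is the \emph{squeezing inequality}: for $(s,y)\in\cC_{\ell^k_\infty}$ (that is, $s\ge\|y\|_\infty$) one has $\|\Delta y\|_2^2=\sum_i d_i^2y_i^2\le\big(\sum_i d_i^2\big)\|y\|_\infty^2\le s^2$. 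In particular $1\oplus\Delta\in\Pos(\cC_{\ell^k_\infty},\cL_k)$, hence $R\in\Pos(\cL_m,\cL_k)$, and by Theorem~\ref{thm:maxEALorCriteria} it suffices to show that $RJ_mR^T\in\EB(\cL_k,\cL_k)$ (condition~(3)); equivalently that $(R\otimes R)(\hat{J}_m)\in\cL_k\otimes_{\min}\cL_k$ (condition~(4)).

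Next I would set up coordinates, writing the rows of $Q$ as functionals $p^{(0)},\dots,p^{(k)}$ on $\R^{m+1}$. Positivity of $Q$ into $\cC_{\ell^k_\infty}$ is exactly $p^{(0)}\pm p^{(i)}\in\cL_m^*=\cL_m$ for $i=1,\dots,k$. A direct computation then gives $RJ_mR^T=\widetilde D\,G\,\widetilde D$, where $\widetilde D=\mathrm{diag}(1,d_1,\dots,d_k)$ and $G$ is the Minkowski Gram matrix $G_{ij}=\langle p^{(i)},p^{(j)}\rangle_{J_m}$; equivalently, $(R\otimes R)(\hat{J}_m)=(Re_0)\otimes(Re_0)-\sum_{l=1}^m (Re_l)\otimes(Re_l)$. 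Applying the squeezing inequality to $e_0$ (or directly: $|p^{(i)}_0|\le p^{(0)}_0$ because $p^{(0)}\pm p^{(i)}\in\cL_m$, so $\sum_i d_i^2(p^{(i)}_0)^2\le(p^{(0)}_0)^2$) gives $Re_0\in\cL_k$, so the leading term of $(R\otimes R)(\hat{J}_m)$ is a genuine product of Lorentz-cone elements and the only obstruction to membership in $\cL_k\otimes_{\min}\cL_k$ comes from the subtracted terms.

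The heart of the proof is to show $RJ_mR^T\in\EB(\cL_k,\cL_k)$. For this I would use the characterization that a positive map $M\in\Pos(\cL_k,\cL_k)$ is entanglement breaking if and only if $\Tr[MA]\ge 0$ for every $A\in\Aut(\cL_k)$ (this uses $\EB(\cL_k,\cL_k)=\Pos(\cL_k,\cL_k)^*$, that automorphisms together with positive rank-one maps generate the extremal rays of $\Pos(\cL_k,\cL_k)$ by~\cite[Proposition C.1.]{aubrun2017alice}, and that on a positive rank-one map $xf^T$ one has $\Tr[Mxf^T]=\langle f,Mx\rangle\ge 0$ automatically since $M$ is positive and $\cL_k$ is self-dual). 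Writing $A=c(1\oplus u_1)P_\beta(1\oplus u_2)$ via \eqref{equ:LorAutRepre} and using orthogonal invariance of the Hilbert--Schmidt and operator norms to absorb $u_1,u_2$, one is left with an inequality of the form $\Tr[RJ_mR^T P_\beta]\ge 0$ for all Lorentz boosts $P_\beta$, with $\Delta$ replaced by orthogonally rotated copies whose relevant norms are still governed by $\sum_i d_i^2\le 1$. Expanding $\Tr[\,\cdot\,P_\beta]$ into its $\cosh\beta$, $\sinh\beta$ and $\beta$-independent parts (from the block form of $P_\beta$), the statement becomes a one-parameter estimate whose coefficients one controls using the Lorentz constraints $p^{(0)}\pm p^{(i)}\in\cL_m$ together with $\sum_i d_i^2\le 1$ --- this bound being precisely what keeps the spatial contributions from overwhelming the timelike one. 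I expect this last step to be the main obstacle: the positivity of $R$ is a one-line consequence of the squeezing inequality, whereas bounding $\Tr[RJ_mR^T A]$ uniformly over $\Aut(\cL_k)$ --- equivalently, exhibiting an explicit separable decomposition of $(R\otimes R)(\hat{J}_m)$ in $\cL_k\otimes_{\min}\cL_k$ --- is where the genuine work lies. (An alternative, perhaps cleaner, route is to prove the stronger statement $1\oplus\Delta\in\maxEA_2(\cC_{\ell^k_\infty},\cL_k)$ and use $\maxEA_2\subseteq\LorEA_2$: on the product extremal rays $(1,\sigma)\otimes(1,\tau)$, $\sigma,\tau\in\{\pm1\}^k$, of the polyhedral cone $\cC_{\ell^k_\infty}\otimes_{\max}\cC_{\ell^k_\infty}$ the image equals $(1,\Delta\sigma)\otimes(1,\Delta\tau)\in\cL_k\otimes_{\min}\cL_k$ since $\|\Delta\sigma\|_2^2=\sum_i d_i^2\le 1$, so only the entangled extremal rays of $\cC_{\ell^k_\infty}\otimes_{\max}\cC_{\ell^k_\infty}$ remain to be handled.)
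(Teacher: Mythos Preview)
Your proposal correctly identifies the reduction via Proposition~\ref{prop:CharactLorEAIntoLk} and the criterion from Theorem~\ref{thm:maxEALorCriteria}, but it stops short of carrying out the core step: you explicitly flag that ``bounding $\Tr[RJ_mR^T A]$ uniformly over $\Aut(\cL_k)$'' is the main obstacle, and you never perform this computation. After cyclicity, the quantity to control is $\Tr\big[(1\oplus u_2\Delta)\,QJ_mQ^T\,(1\oplus\Delta u_1)\,P_\beta\big]$; the two absorbed matrices $u_2\Delta$ and $\Delta u_1$ are in general different and non-diagonal, and $Q$ is an arbitrary map in $\Pos(\cL_m,\cC_{\ell^k_\infty})$. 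Without further structural input on $Q$, it is not clear how to close this estimate, and you have not supplied one. So as written the argument is a plan rather than a proof.

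Your parenthetical alternative route via $\maxEA_2(\cC_{\ell^k_\infty},\cL_k)$ is actually \emph{false}. Take $k=2$ and $\Delta=\tfrac{1}{\sqrt{2}}\one_2$, so $\sum_i d_i^2=1$. The entangled extremal rays of $\cC_{\ell^2_\infty}\otimes_{\max}\cC_{\ell^2_\infty}$ are the tensors $1\oplus H$ with $H\in\mathcal{H}$ (Appendix~\ref{app:maxEANotConv}), and $(1\oplus\Delta)^{\otimes 2}(1\oplus H)=1\oplus(\Delta H\Delta^T)=1\oplus\tfrac{1}{2}H$. Since $\|H\|_1=2\sqrt{2}$ one gets $\Nuc(\tfrac12 H)=\sqrt{2}>1$, so the image is \emph{not} in $\cL_2\otimes_{\min}\cL_2$. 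Thus $1\oplus\Delta\notin\maxEA_2(\cC_{\ell^2_\infty},\cL_2)$, and the inclusion $\maxEA_2\subseteq\LorEA_2$ cannot be used here.

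The paper avoids both difficulties by first using Lemma~\ref{lem:Continuity} to reduce to dual-normalized $Q$, and then restricting to the extremal ones, which by Lemma~\ref{lem:ExtremePointsTPLkCinfty} have the block form $Q=\begin{pmatrix}1&0\\ s&0\\ 0&A\end{pmatrix}$ with $s\in\{\pm1\}^{k_1}$ and a contraction $A:\ell^n_2\to\ell^{k_2}_\infty$. Splitting $\Delta=\Delta_1\oplus\Delta_2$ accordingly, the crucial observation is the single scalar inequality $\|\Delta_1 s\|_2^2+\hs(\Delta_2 A)^2\le\sum_i d_i^2\le 1$, which allows one to write $(1\oplus\Delta)Q=ST$ with $T=1\oplus\lambda^{-1}\Delta_2 A$ (where $\lambda=\sqrt{1-\|\Delta_1 s\|_2^2}\ge\hs(\Delta_2 A)$) and an explicit $S\in\Pos(\cL_k,\cL_k)$. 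Then $T\in\maxEA_2(\cL_n,\cL_k)$ by Theorem~\ref{thm:HSAndEALor}, and the ideal property finishes. In other words, the structure of the extremal $Q$ is what converts your squeezing inequality into a usable factorization; without it the direct trace estimate you sketch remains open.
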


\begin{proof}
By Proposition \ref{prop:CharactLorEAIntoLk} it suffices to show that $(1\oplus \Delta)P\in \maxEA_2(\cL_n,\cL_k)$ for any positive map $P\in \Pos(\cL_n,\cC_{\ell^k_\infty})$. By a continuity argument and Lemma \ref{lem:Continuity} we can restrict ourselves to dual-normalized $P\in \Pos(\cL_n,\cC_{\ell^k_\infty})$. Furthermore, it is enough to consider extremal dual-normalized maps $P\in \Pos(\cL_n,\cC_{\ell^k_\infty})$, which by Lemma \ref{lem:ExtremePointsTPLkCinfty} are of the form 
\[
P=\begin{pmatrix} 1 & 0 \\ s & 0 \\ 0 & A\end{pmatrix},
\]
up to a permutation of rows, for some partition $k=k_1+k_2$, a vector of signs $s\in\lset \pm 1\rset^{k_1}$ and a contraction $A:\ell^n_2\ra \ell^{k_2}_\infty$. Let us write 
\[
\Delta = \begin{pmatrix}\Delta_1 & 0 \\ 0 & \Delta_2 \end{pmatrix},
\]
for diagonal matrices $\Delta_1:\ell^{k_1}_\infty\ra \ell^{k_1}_2$ and $\Delta_2:\ell^{k_2}_\infty\ra \ell^{k_2}_2$ such that
\[
(1\oplus \Delta)P = \begin{pmatrix} 1 & 0 \\ \Delta_{1}s & 0 \\ 0 & \Delta_{2}A\end{pmatrix}.
\]
Let us set $\lambda=\sqrt{1-\|\Delta_{1}s\|^2_2}$ and note that since $\|\Delta\|=(\sum^{k}_{i=1} |\Delta_{ii}|^2)^{1/2}\leq 1$, we have $\|\Delta_{1}s\|^2_2 + \hs(\Delta_{2}A)^2\leq 1$, and hence $\lambda \geq \hs(\Delta_{2}A)$. Next, we introduce maps 
\[
T = (1\oplus \frac{1}{\lambda}\Delta_{2}A) \quad\text{ and }\quad S=\begin{pmatrix} 1 & 0 \\ \Delta_{1}s & 0 \\ 0 & \lambda \ident_{k}\end{pmatrix} ,
\] 
where $\ident_{k}:\ell^{k}_2\ra \ell^k_2$ denotes the identity. Since $\hs(\frac{1}{\lambda}\Delta_{2}A)\leq 1$, we find that $T\in \maxEA_2(\cL_{n}, \cL_{k})$ by Theorem \ref{thm:HSAndEALor}. We have $S\in\Pos(\cL_{k}, \cL_k)$ since 
\[
S\begin{pmatrix} 1 \\ x \end{pmatrix} = \begin{pmatrix} 1 \\ \Delta_{k_1}s \\ \lambda x \end{pmatrix},
\] 
and $\|\Delta_{1}s\|^2_2 + \lambda^2 \|x\|^2_2 \leq 1$ whenever $\|x\|_2\leq 1$. Since $(1\oplus \Delta)P = ST$ is a composition of the map $T\in \maxEA_2(\cL_{n}, \cL_{k})$ and the positive map $S\in\Pos(\cL_{k}, \cL_k)$, we conclude that it is max-entanglement annihilating as well.
\end{proof}

Now, we can characterize the central maps in $\LorEA_2(\cC_X,\cL_m)$.

\begin{thm}\label{thm:EquivalencesPi2}
For a linear map $v:X\ra \ell^m_2$ and $t\in\R$ we have $t\oplus v\in\LorEA_2(\cC_X,\cL_m)$ if and only if $\pi_2(v)\leq t$. 
\end{thm}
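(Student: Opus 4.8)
The proof goes through Proposition~\ref{prop:CharactLorEAIntoLk}, which reduces the condition $t\oplus v\in\LorEA_2(\cC_X,\cL_m)$ to the statement that $(t\oplus v)Q\in\maxEA_2(\cL_k,\cL_m)$ for every $k\in\N$ and every $Q\in\Pos(\cL_k,\cC_X)$, combined with Theorem~\ref{thm:HSAndEALor}, which characterizes $\maxEA_2$ between Lorentz cones through the Hilbert--Schmidt norm. By homogeneity we may assume $t=1$: for $t>0$ one rescales $v$; the case $t=0$ is trivial (both conditions amount to $v=0$); and for $t<0$ the map $t\oplus v$ is not positive while $\pi_2(v)>t$, so both sides are false.

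For the ``only if'' direction, assume $1\oplus v\in\LorEA_2(\cC_X,\cL_m)$ and set $n=\rk(v)$. Any contraction $w:\ell^n_2\ra X$ yields a positive central map $1\oplus w\in\Pos(\cL_n,\cC_X)$ (since $\|wy\|_X\leq\|y\|_2$), so Proposition~\ref{prop:CharactLorEAIntoLk} gives $1\oplus(vw)=(1\oplus v)(1\oplus w)\in\maxEA_2(\cL_n,\cL_m)$, whence $\hs(vw)\leq 1$ by Theorem~\ref{thm:HSAndEALor}. Since $vw:\ell^n_2\ra\ell^m_2$ acts between Euclidean spaces we have $\pi_2(vw)=\hs(vw)\leq 1$, and taking the supremum over all such $w$ gives $\pi_2(v)\leq 1$ via the identity $\pi_2(v)=\sup\lset\pi_2(vw):w:\ell^n_2\ra X,\ \|w\|\leq 1\rset$ with $n=\rk(v)$ recalled in the preliminaries. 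This direction is essentially immediate.

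For the ``if'' direction, assume $\pi_2(v)\leq 1$, fix $k\in\N$ and $Q\in\Pos(\cL_k,\cC_X)$; by Proposition~\ref{prop:CharactLorEAIntoLk} it suffices to show $(1\oplus v)Q\in\maxEA_2(\cL_k,\cL_m)$. Given $\eps>0$, use the $2$-nuclear factorization~\eqref{equ:2NuclearFact} to write $v=v_2\Delta v_1$ with $v_1:X\ra\ell^{k'}_\infty$, a diagonal $\Delta:\ell^{k'}_\infty\ra\ell^{k'}_2$, and $v_2:\ell^{k'}_2\ra\ell^m_2$, rescaled so that $\|v_1\|=\|v_2\|=1$ and $\|\Delta\|\leq 1+\eps$; put $\Delta'=(1+\eps)^{-1}\Delta$, a diagonal contraction. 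Then $(1\oplus v_1)Q\in\Pos(\cL_k,\cC_{\ell^{k'}_\infty})$, and since $1\oplus\Delta'\in\LorEA_2(\cC_{\ell^{k'}_\infty},\cL_{k'})$ by Lemma~\ref{lem:DiagonalPullbackMaxEA}, applying Proposition~\ref{prop:CharactLorEAIntoLk} to the map $1\oplus\Delta'$ gives $(1\oplus\Delta')(1\oplus v_1)Q\in\maxEA_2(\cL_k,\cL_{k'})$. Post-composing with $1\oplus v_2\in\Pos(\cL_{k'},\cL_m)$ and using that $\maxEA_2$ between Lorentz cones is stable under post-composition with positive maps (immediate, as $\otimes_{\min}$ is functorial), we obtain $(1\oplus(v/(1+\eps)))Q=(1\oplus v_2)(1\oplus\Delta')(1\oplus v_1)Q\in\maxEA_2(\cL_k,\cL_m)$. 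Letting $\eps\ra 0$ and invoking closedness of $\maxEA_2(\cL_k,\cL_m)$ (Theorem~\ref{thm:PropsEALor}) finishes the proof.

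I expect the ``if'' direction to be the substantive part, with Lemma~\ref{lem:DiagonalPullbackMaxEA} as the key input: once diagonal contractions $\ell^{k'}_\infty\ra\ell^{k'}_2$ are known to produce central $\LorEA_2$ maps, the $2$-nuclear factorization of $\pi_2$ together with the trivial composition stability of $\maxEA_2$ assembles the result. The only technical care needed is in the factorization step, since the infimum in~\eqref{equ:2NuclearFact} need not be attained a priori; routing through $\eps\ra 0$ and the closedness of $\maxEA_2(\cL_k,\cL_m)$ cleanly sidesteps this issue.
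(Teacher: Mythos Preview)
Your proof is correct and follows essentially the same approach as the paper: both directions rely on Proposition~\ref{prop:CharactLorEAIntoLk} together with Theorem~\ref{thm:HSAndEALor}, and the ``if'' direction uses the $2$-nuclear factorization of $v$ combined with Lemma~\ref{lem:DiagonalPullbackMaxEA}. The only difference is cosmetic: the paper writes $(1\oplus v)=(1\oplus u_2)(1\oplus\Delta)(1\oplus u_1)$ and concludes membership in $\LorEA_2(\cC_X,\cL_m)$ directly (implicitly using the ideal-type stability under composition with positive maps and treating the infimum in~\eqref{equ:2NuclearFact} as attained), whereas you unpack this via an explicit $Q\in\Pos(\cL_k,\cC_X)$ and add an $\eps$-limit for safety; your version is slightly more detailed but not substantively different.
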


\begin{proof}
Without loss of generality we set $t=1$. Consider a linear map $v:X\ra \ell^m_2$ with $\pi_2(v)\leq 1$. Since the $2$-summing norm equals the $2$-nuclear norm (see~\cite[Proposition 9.10.]{tomczak1989banach}), we have a decomposition $v=u_2\Delta u_1$ with contractions $u_1:X\ra \ell^k_\infty$ and $u_2:\ell^k_2 \ra \ell^m_2$ and a diagonal contraction $\Delta:\ell^k_\infty\ra \ell^k_2$. By Lemma \ref{lem:DiagonalPullbackMaxEA} we have $(1\oplus \Delta)\in \LorEA_2(\cC_{\ell^k_\infty},\cL_{k})$. Since $1\oplus u_1\in \Pos(\cC_X,\cC_{\ell^k_\infty})$ and $1\oplus u_2\in \Pos(\cL_k,\cL_m)$, we conclude that $(1\oplus v)= (1\oplus u_2)(1\oplus \Delta)(1\oplus u_1)\in \LorEA_2(\cC_X,\cL_{m})$. 

For the other direction consider $v:X\ra \ell^m_2$ such that $(1\oplus v)\in\LorEA_2(\cC_X,\cL_m)$. By Proposition \ref{prop:CharactLorEAIntoLk} we find that $(1\oplus vu)\in \maxEA_2(\cL_n,\cL_m)$ and by Theorem \ref{thm:HSAndEALor} that $\hs(vu)\leq 1$ for any $u:\ell^n_2\ra X$ with $\|u\|\leq 1$ and any $n\in\N$. By~\cite[Proposition 9.7.]{tomczak1989banach} we have $\pi_2(v)=\sup\lset \hs(vu)~:~u:\ell^n_2\ra X, \|u\|\leq 1\rset\leq 1$ finishing the proof. 
\end{proof}

We finish this section by discussing the dual cone of $\LorEA_2(\cC,\cL_n)$ for any proper cone $\cC\subseteq \cV$.

\begin{thm}\label{thm:dualLorEACLn}
For any proper cone $\cC\subseteq \cV$ and any $n\in\N$ we have
\[
\LorEA_2(\cC,\cL_n)^* = \overline{\conv\lset AB~:~A\in \Pos(\cL_k,\cC), B\in \maxEA_2(\cL_n,\cL_k), k\in\N\rset}.
\]
\end{thm}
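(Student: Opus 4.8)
The plan is to identify the right-hand side, call it
\[
\cN := \overline{\conv\lset AB~:~A\in \Pos(\cL_k,\cC),\ B\in \maxEA_2(\cL_n,\cL_k),\ k\in\N\rset}\subseteq \cL(\R^{n+1},\cV),
\]
as the trace dual of $\LorEA_2(\cC,\cL_n)$ by proving the two inclusions $\cN\subseteq \LorEA_2(\cC,\cL_n)^*$ and $\LorEA_2(\cC,\cL_n)^*\subseteq \cN$ separately. The first preliminary step is to note that $\cN$ is a closed convex cone: the generating set is invariant under the scaling $A\mapsto \lambda A$ for $\lambda>0$ and contains $0$, so its convex hull is already a convex cone, and we close it. The whole argument is then a repackaging of Proposition~\ref{prop:CharactLorEAIntoLk} together with the trace-self-duality statement Theorem~\ref{thm:PropsEALor}(3).

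For the inclusion $\cN\subseteq \LorEA_2(\cC,\cL_n)^*$, I would fix a generator $AB$ with $A\in \Pos(\cL_k,\cC)$ and $B\in \maxEA_2(\cL_n,\cL_k)$, take any $P\in \LorEA_2(\cC,\cL_n)$, and use Proposition~\ref{prop:CharactLorEAIntoLk} (the implication $(1)\Rightarrow(2)$) to get $PA\in \maxEA_2(\cL_k,\cL_n)$. Since Theorem~\ref{thm:PropsEALor}(3) says precisely that $\maxEA_2(\cL_k,\cL_n)$ is the trace dual of $\maxEA_2(\cL_n,\cL_k)$, associativity of composition and cyclicity of the trace give $\Tr[P(AB)]=\Tr[(PA)B]=\Tr[B(PA)]\geq 0$. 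As $\LorEA_2(\cC,\cL_n)^*$ is closed and convex and this holds on the generating set, $\cN\subseteq \LorEA_2(\cC,\cL_n)^*$.

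For the reverse inclusion I would argue by separation. Suppose $Q\in \LorEA_2(\cC,\cL_n)^*$ but $Q\notin \cN$; since $\cN$ is a closed convex cone in a finite-dimensional space, there is a functional on $\cL(\R^{n+1},\cV)$, represented under the trace pairing by some $P\in \cL(\cV,\R^{n+1})$, with $\Tr[PR]\geq 0$ for all $R\in\cN$ and $\Tr[PQ]<0$. Testing against the generators $AB$ shows, for every $k\in\N$ and every $A\in \Pos(\cL_k,\cC)$, that $\Tr[B(PA)]\geq 0$ for all $B\in \maxEA_2(\cL_n,\cL_k)$; by Theorem~\ref{thm:PropsEALor}(3) this means $PA\in \maxEA_2(\cL_k,\cL_n)$. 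Since this holds for all such $A$ and all $k$, Proposition~\ref{prop:CharactLorEAIntoLk} ($(2)\Rightarrow(1)$, which in particular forces $P$ to be positive) yields $P\in \LorEA_2(\cC,\cL_n)$, whence $\Tr[PQ]\geq 0$ by $Q\in \LorEA_2(\cC,\cL_n)^*$ — contradicting $\Tr[PQ]<0$. Therefore $\LorEA_2(\cC,\cL_n)^*\subseteq \cN$, and the two inclusions together give the claim.

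I do not expect a genuine obstacle here, since all the heavy lifting is done by Proposition~\ref{prop:CharactLorEAIntoLk} and Theorem~\ref{thm:PropsEALor}; the one place that needs care is the conic-duality bookkeeping — keeping track of the domains and codomains of the composed maps $P$, $A$, $B$, so that in each application of Theorem~\ref{thm:PropsEALor}(3) the correct cone plays the role of the ``primal'' cone and the trace is read off the correct operator. It may also be worth remarking, using part $(3)$ of Proposition~\ref{prop:CharactLorEAIntoLk}, that $k$ can be taken bounded by $\dim(\cV)-1$, which shows the convex hull inside $\cN$ is over a bounded family of parameters.
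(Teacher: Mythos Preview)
Your proposal is correct and follows essentially the same approach as the paper: both directions rest on Proposition~\ref{prop:CharactLorEAIntoLk} together with the self-duality in Theorem~\ref{thm:PropsEALor}(3). The only cosmetic difference is that the paper invokes the bipolar theorem to recast the reverse inclusion as $\cN^*\subseteq \LorEA_2(\cC,\cL_n)$ and then verifies that directly, whereas you unfold the same step as a separation argument; these are the same proof.
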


\begin{proof}
Let $\cK(\cL_n,\cC)$ denote the set on the right-hand side of the desired equation. It is easy to see that $\cK(\cL_n,\cC)$ is a closed convex cone. Consider first $P\in \LorEA_2(\cC,\cL_n)$ and $A\in \Pos(\cL_k,\cC)$ and $B\in \maxEA_2(\cL_n,\cL_k)$. By Proposition \ref{prop:CharactLorEAIntoLk} we have $PA\in\maxEA_2(\cL_k,\cL_n)$ and by Theorem \ref{thm:PropsEALor} we have $\Tr\lbr (PA)B\rbr\geq 0$. This proves the inclusion $\LorEA_2(\cC,\cL_n)^*\supseteq \cK(\cL_n,\cC)$. By the bipolar theorem the remaining inclusion is equivalent to $\LorEA_2(\cC,\cL_n)\supseteq \cK(\cL_n,\cC)^*$. Consider a linear map $P\in \cK(\cL_n,\cC)^*$. By definition, we have $\Tr\lbr PAB\rbr\geq 0$ for all $A\in \Pos(\cL_k,\cC)$ and $B\in \maxEA_2(\cL_n,\cL_k)$. Using Theorem \ref{thm:PropsEALor} we have $PA\in \maxEA_2(\cL_k,\cL_n)$ for any $A\in \Pos(\cL_k,\cC)$. By Proposition \ref{prop:CharactLorEAIntoLk} this shows that $P\in \LorEA_2(\cC,\cL_n)$. 
\end{proof}

The next proposition shows that a central map $1\oplus v$ for $v:\ell^m_2\ra X$ is in the dual cone of the Lorentz-entanglement annihilating maps $\LorEA_2(\cC_X,\cL_n)$ if and only if the $2$-summing norm satisfies $\pi_2(v)\leq 1$.

\begin{prop}\label{prop:EquivalencesPi22}
For $v:\ell^m_2\ra X$ and $t\in\R$ the following are equivalent:
\begin{enumerate}
\item We have $\pi_2(v)\leq t$.
\item We have $t\oplus v = PQ$ for some $P\in \Pos(\cL_k,\cC_X)$ and $Q\in \maxEA_2(\cL_m,\cL_k)$.
\item We have $t\oplus v \in \LorEA_2(\cC_X,\cL_n)^*$.
\end{enumerate}
\end{prop}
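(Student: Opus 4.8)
The proof runs through the cycle of implications $(1)\Rightarrow(2)\Rightarrow(3)\Rightarrow(1)$. By homogeneity we may assume $t=1$ (the cases $t\le 0$ being routine), and we read item~(3) with $\cL_m$ in place of $\cL_n$, as is needed for the trace dual $\LorEA_2(\cC_X,\cL_m)^*$ to be well-defined on the same space as $1\oplus v$.

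For $(1)\Rightarrow(2)$ I would use the $2$-nuclear factorization \eqref{equ:2NuclearFact}: since $\pi_2(v)\le 1$, after rescaling the three factors there are $v_1:\ell^m_2\ra\ell^k_\infty$ and $v_2:\ell^k_2\ra X$ of operator norm at most $1$ and a diagonal $\Delta:\ell^k_\infty\ra\ell^k_2$ with $\|\Delta:\ell^k_\infty\ra\ell^k_2\|\le 1$ such that $v=v_2\Delta v_1$ (using that the infimum in \eqref{equ:2NuclearFact} is attained in finite dimensions). Then $P:=1\oplus v_2\in\Pos(\cL_k,\cC_X)$, the map $Q:=1\oplus(\Delta v_1)$ satisfies $PQ=1\oplus v$, and it remains only to verify $Q\in\maxEA_2(\cL_m,\cL_k)$. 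By Theorem~\ref{thm:HSAndEALor} this amounts to $\hs(\Delta v_1)\le 1$, the single genuine estimate of the argument: writing $\Delta=\mathrm{diag}(d_1,\dots,d_k)$,
\[
\hs(\Delta v_1)^2=\sum_{i=1}^{k}d_i^2\,\|v_1^*e_i\|^2\le\Big(\max_i\|v_1^*e_i\|^2\Big)\sum_{i=1}^{k}d_i^2\le\|v_1\|^2\,\|\Delta:\ell^k_\infty\ra\ell^k_2\|^2\le 1,
\]
where we used that each coordinate functional $e_i$ has norm $1$ in $(\ell^k_\infty)^*=\ell^k_1$, so $\|v_1^*e_i\|\le\|v_1^*\|=\|v_1\|\le 1$, together with $\sum_i d_i^2=\|\Delta:\ell^k_\infty\ra\ell^k_2\|^2$.

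The implication $(2)\Rightarrow(3)$ is then immediate from Theorem~\ref{thm:dualLorEACLn}: a map $PQ$ with $P\in\Pos(\cL_k,\cC_X)$ and $Q\in\maxEA_2(\cL_m,\cL_k)$ is by construction one of the generators of $\LorEA_2(\cC_X,\cL_m)^*$, hence belongs to that cone. For $(3)\Rightarrow(1)$ I would test the membership $\Tr[(1\oplus v)R]\ge 0$ against the central maps $R=1\oplus u$ for $u:X\ra\ell^m_2$ with $\pi_2(u)\le 1$; by Theorem~\ref{thm:EquivalencesPi2} all such $R$ lie in $\LorEA_2(\cC_X,\cL_m)$. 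Since $(1\oplus v)(1\oplus u)=1\oplus(vu)$ has trace $1+\Tr[vu]$, the hypothesis forces $\Tr[vu]\ge -1$ for all such $u$, and applying this to $-u$ as well gives $|\Tr[vu]|\le 1$ for every $u:X\ra\ell^m_2$ with $\pi_2(u)\le 1$. By the definition of the trace-dual ideal norm this says exactly that $\pi_2^*(v)\le 1$, and since $\pi_2$ is self-dual we conclude $\pi_2(v)=\pi_2^*(v)\le 1$, closing the cycle.

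I do not expect a real obstacle: the argument is essentially a translation between the $2$-nuclear factorization \eqref{equ:2NuclearFact}, Theorem~\ref{thm:HSAndEALor}, Theorem~\ref{thm:dualLorEACLn}, and Theorem~\ref{thm:EquivalencesPi2}. The main technical point is the Hilbert--Schmidt estimate in $(1)\Rightarrow(2)$ displayed above and, relatedly, the attainment of the infimum in \eqref{equ:2NuclearFact} (standard in finite dimensions); everything else is bookkeeping.
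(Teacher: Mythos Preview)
Your proof is correct and follows essentially the same cycle $(1)\Rightarrow(2)\Rightarrow(3)\Rightarrow(1)$ as the paper, using the same ingredients: the $2$-nuclear factorization \eqref{equ:2NuclearFact} together with Theorem~\ref{thm:HSAndEALor} for $(1)\Rightarrow(2)$, Theorem~\ref{thm:dualLorEACLn} for $(2)\Rightarrow(3)$, and Theorem~\ref{thm:EquivalencesPi2} with self-duality of $\pi_2$ for $(3)\Rightarrow(1)$. Your explicit computation of $\hs(\Delta v_1)$ is in fact more detailed than the paper's, which simply notes $\pi_2(\Delta u_1)=\hs(\Delta u_1)\le 1$ (implicitly because $\mathrm{id}\cdot\Delta\cdot u_1$ is itself a $2$-nuclear factorization of $\Delta u_1$); you also correctly flag the typo $\cL_n$ for $\cL_m$ in item~(3).
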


In particular, we have $t\oplus v\in \maxEA_2(\cL_m,\cC_X)$ whenever $\pi_2(v)\leq t$. 

\begin{proof}
Consider a linear map $v:\ell^m_2\ra X$ satisfying $\pi_2(v)\leq 1$. By~\cite[Proposition 9.10.]{tomczak1989banach} we have a factorization $v=u_2\Delta u_1$ with contractions $u_1:\ell^m_2\ra \ell^k_\infty$ and $u_2:\ell^{k}_2\ra X$ and a diagonal contraction $\Delta:\ell^k_\infty\ra \ell^k_2$. Clearly, we have $\pi_2(\Delta u_1)=\hs(\Delta u_1)\leq 1$ and by Theorem \ref{thm:HSAndEALor} we have $Q:=1\oplus\Delta u_1\in \maxEA_2(\cL_m,\cL_k)$. Now, we conclude that $1\oplus v = PQ$ where $P\in \Pos(\cL_k,\cC_X)$ and $Q\in \maxEA_2(\cL_m,\cL_k)$ showing that the first statement implies the second. By Theorem \ref{thm:dualLorEACLn}, the second statement implies the third. Finally, consider $v:\ell^m_2\ra X$ and $t\in\R^+$ such that $t\oplus v\in \LorEA_2(\cC_X,\cL_n)^*$. Using Theorem \ref{thm:EquivalencesPi2}, we find that $\Tr\lbr(t\oplus v)(1\oplus w)\rbr = t + \Tr\lbr vw\rbr\geq 0$ for all $w:X\ra \ell^m_2$ satisfying $\pi_2(w)\leq 1$. We conclude that $\pi_2(v)\leq t$ since the $2$-summing norm is selfdual (see~\cite[Proposition 9.10.]{tomczak1989banach}). 
\end{proof}

Finally, we can consider the general case and prove the following theorem: 

\begin{thm}\label{thm:2summingFullequ}
For a linear map $v:X\ra Y$ and $t\in\R$ the following are equivalent:
\begin{enumerate}
\item We have $\pi_2(v)\leq t$.
\item For any positive map $S\in \Pos(\cL_m,\cC_X)$ there exist a positive map $P\in \Pos(\cL_k,\cC_Y)$ and $Q\in \maxEA_2(\cL_m,\cL_k)$ such that $(1\oplus v)S = PQ$.
\end{enumerate}
\end{thm}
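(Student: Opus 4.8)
The plan is to establish the two implications separately, using the characterization of $\pi_2$ via $2$-nuclear factorizations (\cite[Proposition 9.10.]{tomczak1989banach}) together with Lemma~\ref{lem:DiagonalPullbackMaxEA} and the composition/ideal properties of $\maxEA_2$ between Lorentz cones.

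For the direction $(1)\Rightarrow(2)$: assume $\pi_2(v)\leq t$, and WLOG $t=1$. By the $2$-nuclear factorization, write $v=u_2\Delta u_1$ with contractions $u_1:X\ra \ell^k_\infty$ and $u_2:\ell^k_2\ra Y$ and a diagonal contraction $\Delta:\ell^k_\infty\ra \ell^k_2$. This gives $1\oplus v=(1\oplus u_2)(1\oplus \Delta)(1\oplus u_1)$. Now fix $S\in \Pos(\cL_m,\cC_X)$. Then $(1\oplus u_1)S\in \Pos(\cL_m,\cC_{\ell^k_\infty})$ since $1\oplus u_1$ is positive (as $\|u_1\|\leq 1$). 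By Lemma~\ref{lem:DiagonalPullbackMaxEA}, $(1\oplus \Delta)\in \LorEA_2(\cC_{\ell^k_\infty},\cL_k)$, so by Proposition~\ref{prop:CharactLorEAIntoLk} applied to the cone $\cC_{\ell^k_\infty}$ and the map $(1\oplus u_1)S$, we get $Q:=(1\oplus \Delta)(1\oplus u_1)S\in \maxEA_2(\cL_m,\cL_k)$. Finally set $P:=1\oplus u_2\in \Pos(\cL_k,\cC_Y)$. Then $(1\oplus v)S=(1\oplus u_2)(1\oplus\Delta)(1\oplus u_1)S=PQ$, as required. (Note $k$ here depends only on the factorization of $v$, not on $S$, which is harmless.)

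For the direction $(2)\Rightarrow(1)$: I would test condition $(2)$ against well-chosen $S$. By \cite[Proposition 9.7.]{tomczak1989banach}, $\pi_2(v)=\sup\{\hs(vw)~:~w:\ell^n_2\ra X,\|w\|\leq 1\}$, so it suffices to show $\hs(vw)\leq t$ for every contraction $w:\ell^n_2\ra X$. Given such $w$, take $S=1\oplus w\in \Pos(\cL_n,\cC_X)$ (positive since $\|w\|\leq 1$). By $(2)$ there are $P\in \Pos(\cL_k,\cC_Y)$ and $Q\in \maxEA_2(\cL_n,\cL_k)$ with $(1\oplus v)(1\oplus w)=1\oplus vw=PQ$. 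The difficulty is to extract a bound on $\hs(vw)$ from this factorization, since $P$ need only be positive, not central or norm-controlled. I expect this to be the main obstacle. The idea is to pair with the "anti-diagonal" test tensor $\hat{J}$: by Theorem~\ref{thm:maxEALorCriteria}, $Q\in\maxEA_2(\cL_n,\cL_k)$ is equivalent to $(Q\otimes Q)(\hat{J}_n)\in \cL_k\otimes_{\min}\cL_k$, i.e.\ $QJ_nQ^T\in \EB(\cL_k,\cL_k)$. Since $P\in\Pos(\cL_k,\cC_Y)$ and entanglement breaking maps form a mapping ideal, $P(QJ_nQ^T)P^T=(PQ)J_n(PQ)^T=(1\oplus vw)J_n(1\oplus vw)^T\in \EB(\cC_Y^{?})$—more precisely, composing on both sides, $(1\oplus vw)J_n(1\oplus w^Tv^T)=(1\oplus -vw w^T v^T)$ lies in $\EB(\cC_Y,\cC_{Y^*})$ (after using $(1\oplus w)J_n(1\oplus w^T)=1\oplus(-ww^T)$ and that $w$ is a contraction, one has $1\oplus(-ww^T)\in\Pos(\cL_n,\cL_n)$... one must be careful with the output space). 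Working through the correct cones: from $PQ=1\oplus vw$ one gets $(1\oplus vw)\,(1\oplus -\id_n)\,(1\oplus vw)^*$ is entanglement breaking between $\cC_Y$ and $\cC_{Y^*}$, which by the central-map criterion (\cite[Proposition 2.25.]{lami2018non}) means $\Nuc(vw(vw)^*)=\Nuc((vw)(vw)^*)\leq t^2$, hence $\hs(vw)^2=\Nuc((vw)(vw)^*)\leq t^2$. Taking the supremum over contractions $w$ yields $\pi_2(v)\leq t$.

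The step I expect to require the most care is making the cone bookkeeping in $(2)\Rightarrow(1)$ precise: one needs to verify that $1\oplus(-ww^T)$ is genuinely a positive map between the right Lorentz cones when $\|w\|\leq 1$ (this is immediate, since $\|ww^T\|\leq 1$ so $ww^T-\id$... actually one uses $\|(-ww^T)x\|_2\leq \|x\|_2$), and that the ideal property of $\EB$ applies across the non-self-dual cone $\cC_Y$ versus $\cC_{Y^*}$, together with the central-map characterization of $\EB$ via the nuclear norm. Alternatively, and perhaps more cleanly, one can avoid $Y^*$ entirely by noting that $\hs(vw)=\hs((vw)^T)$ and working with $(1\oplus vw)^T=1\oplus w^Tv^T$, reducing everything to maps into a Lorentz cone and invoking Theorem~\ref{thm:HSAndEALor} and Proposition~\ref{prop:EquivalencesPi22} directly; I would try that route first, as it keeps all the relevant objects inside the already-developed Lorentz-cone theory.
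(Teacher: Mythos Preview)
Your argument for $(1)\Rightarrow(2)$ is correct and matches the paper's proof essentially verbatim.

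For $(2)\Rightarrow(1)$, however, your primary route via $\hat{J}_n$ has a genuine gap. Since $Y$ is not assumed Euclidean, the expression $\hs(vw)$ for $vw:\ell^n_2\ra Y$ is not defined; what you actually need (and what \cite[Proposition 9.7.]{tomczak1989banach} gives) is $\pi_2(vw)\leq t$. Your argument does correctly yield $1\oplus\bigl(-(vw)(vw)^*\bigr)\in\EB(\cC_{Y^*},\cC_Y)$ and hence $\Nuc\bigl((vw)(vw)^*:Y^*\ra Y\bigr)\leq t^2$, but the asserted identity $\pi_2(vw)^2=\Nuc\bigl((vw)(vw)^*\bigr)$ is false for general $Y$. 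For example, take $Y=\ell^2_\infty$ and $vw=\ident:\ell^2_2\ra\ell^2_\infty$; then $\pi_2(vw)=\sqrt{2}$ while $\Nuc\bigl(\ident:\ell^2_1\ra\ell^2_\infty\bigr)=1$. So this route cannot close the argument.

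Your alternative suggestion at the end is the right one, and it is exactly what the paper does --- but more directly than you indicate. No transposition is needed: Proposition~\ref{prop:EquivalencesPi22} is stated for maps $\ell^m_2\ra X$ into an \emph{arbitrary} finite-dimensional normed space, so it applies straight to $vw:\ell^m_2\ra Y$. From the hypothesis $(1\oplus v)(1\oplus w)=1\oplus vw=PQ$ with $P\in\Pos(\cL_k,\cC_Y)$ and $Q\in\maxEA_2(\cL_m,\cL_k)$, Proposition~\ref{prop:EquivalencesPi22} immediately gives $\pi_2(vw)\leq t$. Taking the supremum over contractions $w:\ell^m_2\ra X$ and invoking \cite[Proposition 9.7.]{tomczak1989banach} then yields $\pi_2(v)\leq t$. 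That two-line argument is the paper's proof; the detour through $\hat{J}_n$, nuclear norms, and $\cC_{Y^*}$ is both unnecessary and, as shown above, does not work.
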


\begin{proof}
Without loss of generality we can restrict to $t=1$. To show that the first statement implies the second, consider $v:X\ra Y$ with $\pi_2(v)\leq 1$. By~\cite[Proposition 9.10.]{tomczak1989banach} we have a factorization $v=u_2\Delta u_1$ with contractions $u_1:\ell^m_2\ra \ell^k_\infty$ and $u_2:\ell^{k}_2\ra Y$ and a diagonal contraction $\Delta:\ell^k_\infty\ra \ell^k_2$. For any $S\in \Pos(\cL_m,\cC_X)$ the map $(1\oplus \Delta)(1\oplus u_1)S\in \maxEA_2(\cL_m,\cL_k)$ by Proposition \ref{prop:CharactLorEAIntoLk}, since $1\oplus \Delta\in \LorEA_2(\cC_{\ell^k_\infty},\cL_{k})$ and both $1\oplus u_1$ and $S$ are positive. Since $1\oplus u_2\in \Pos(\cL_k,\cC_Y)$ we have the desired decomposition. 

To see that the second statement implies the first, we consider a $v:X\ra Y$ such that $1\oplus v$ has the stated decomposition property. Choosing $S=1\oplus u$ for a contraction $u:\ell^k_2\ra X$ and using Theorem \ref{prop:EquivalencesPi22} shows that $\pi_2(vu)\leq 1$. Since $u$ was a general contraction, we have $\pi_2(v)\leq 1$ by~\cite[Proposition 9.7.]{tomczak1989banach}.
\end{proof}

It is natural to ask whether the $2$-summing norm can be used to characterized Lorentz-entanglement annihilating maps. The next proposition establishes one such implication.

\begin{prop}
For any linear map $v:X\ra Y$ and $t\in \R$ with $\pi_2(v)\leq t$ we have that $t\oplus v\in \LorEA_2(\cC_X,\cC_Y)$.
\end{prop}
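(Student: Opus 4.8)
The plan is to reduce this to the case of a Lorentz target cone, which was already settled in Theorem~\ref{thm:EquivalencesPi2}, by factoring $v$ through a diagonal operator and absorbing the outer pieces into positive maps. After rescaling we may assume $t=1$. Since $\pi_2(v)\leq 1$, the $2$-nuclear factorization (\cite[Proposition 9.10.]{tomczak1989banach}, the same factorization used in the proof of Theorem~\ref{thm:EquivalencesPi2} and in~\eqref{equ:2NuclearFact}) yields contractions $u_1:X\ra \ell^r_\infty$ and $u_2:\ell^r_2\ra Y$ together with a diagonal contraction $\Delta:\ell^r_\infty\ra \ell^r_2$ such that $v=u_2\Delta u_1$. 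On central maps this becomes $1\oplus v = (1\oplus u_2)(1\oplus\Delta)(1\oplus u_1)$, where $1\oplus u_1\in\Pos(\cC_X,\cC_{\ell^r_\infty})$ and $1\oplus u_2\in\Pos(\cL_r,\cC_Y)$ since $u_1$ and $u_2$ are contractions, and $1\oplus\Delta\in\LorEA_2(\cC_{\ell^r_\infty},\cL_r)$ by Lemma~\ref{lem:DiagonalPullbackMaxEA}.

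The one ingredient I would isolate as an auxiliary observation is that the class of Lorentz-entanglement annihilating maps is stable under composition with positive maps on both sides: if $P\in\LorEA_2(\cC_B,\cC_C)$, $A\in\Pos(\cC_A,\cC_B)$ and $B\in\Pos(\cC_C,\cC_D)$, then $BPA\in\LorEA_2(\cC_A,\cC_D)$. Positivity of $BPA$ is automatic, so only the annihilation property needs to be checked, and it follows from two functoriality facts that are immediate from the definitions of the tensor products involved. First, $A\otimes A$ sends a generator $(\alpha\otimes\beta)(\hat{I}_j)$ of $\cC_A\otimes_L\cC_A$ (with $\alpha,\beta\in\Pos(\cL_j,\cC_A)$) to $((A\alpha)\otimes(A\beta))(\hat{I}_j)$, and $A\alpha,A\beta\in\Pos(\cL_j,\cC_B)$, so $(A\otimes A)(\cC_A\otimes_L\cC_A)\subseteq\cC_B\otimes_L\cC_B$. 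Second, $B\otimes B$ sends a product tensor $x\otimes y$ with $x,y\in\cC_C$ to $(Bx)\otimes(By)$ with $Bx,By\in\cC_D$, so $(B\otimes B)(\cC_C\otimes_{\min}\cC_C)\subseteq\cC_D\otimes_{\min}\cC_D$. Composing these with $(P\otimes P)(\cC_B\otimes_L\cC_B)\subseteq\cC_C\otimes_{\min}\cC_C$ gives $(BPA\otimes BPA)(\cC_A\otimes_L\cC_A)\subseteq\cC_D\otimes_{\min}\cC_D$, as needed.

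The proof then concludes by applying this stability statement with $A=1\oplus u_1$, $P=1\oplus\Delta$ and $B=1\oplus u_2$, which gives $1\oplus v\in\LorEA_2(\cC_X,\cC_Y)$; undoing the rescaling yields $t\oplus v\in\LorEA_2(\cC_X,\cC_Y)$ whenever $\pi_2(v)\leq t$. I do not anticipate a real obstacle: the factorization is standard and the composition stability is routine. The only point worth care is keeping track of which tensor product sits on which side — $\otimes_L$ on the input, $\otimes_{\min}$ on the output — since it is monotonicity of $\otimes_L$ (and not merely of $\otimes_{\min}$) under positive maps that makes the pre-composition with $1\oplus u_1$ legitimate.
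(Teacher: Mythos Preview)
Your proof is correct, but it differs from the paper's argument. You use a single global $2$-nuclear factorization $v=u_2\Delta u_1$, invoke Lemma~\ref{lem:DiagonalPullbackMaxEA} to get $1\oplus\Delta\in\LorEA_2(\cC_{\ell^r_\infty},\cL_r)$, and then conclude via the ideal property of $\LorEA_2$ under composition with positive maps on either side. The paper instead works generator by generator: for each $A,B\in\Pos(\cL_m,\cC_X)$ it appeals to Theorem~\ref{thm:2summingFullequ} to factor $(t\oplus v)A=P_AQ_A$ and $(t\oplus v)B=P_BQ_B$ with $Q_A,Q_B\in\maxEA_2(\cL_m,\cL_k)$, and then uses the product property from Theorem~\ref{thm:ProdPropEALor} to push $(A\otimes B)(\hat{I}_m)$ into $\cC_Y\otimes_{\min}\cC_Y$. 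Your route is more self-contained in that it bypasses Theorem~\ref{thm:2summingFullequ} and Theorem~\ref{thm:ProdPropEALor} entirely, needing only Lemma~\ref{lem:DiagonalPullbackMaxEA} and the (routine) functoriality of $\otimes_L$ and $\otimes_{\min}$; it is in effect the proof of Theorem~\ref{thm:EquivalencesPi2} with one extra post-composition. The paper's approach, on the other hand, illustrates how the factorization characterization of the $2$-summing norm in Theorem~\ref{thm:2summingFullequ} can be used directly, and makes explicit that the two tensor factors $A$ and $B$ in a generator of $\cC_X\otimes_L\cC_X$ may be handled with \emph{different} factorizations.
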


\begin{proof}
For any $m\in\N$ consider $A,B\in \Pos(\cL_m,\cC_X)$. By Theorem \ref{thm:2summingFullequ} we have a factorizations $(t\oplus v)A=P_AQ_A$ and $(t\oplus v)B=P_BQ_B$ with $P_A,P_B\in \Pos(\cL_m,\cC_Y)$ and $Q_A,Q_B\in \maxEA_2(\cL_m,\cL_k)$. By Theorem \ref{thm:ProdPropEALor} we have 
\[
  \lb(t\oplus v)\otimes (t\oplus v)\rb\lb A\otimes B\rb(\hat{I}_m) = \lb P_A\otimes P_B\rb\lb Q_A\otimes Q_B\rb(\hat{I}_m) \in \cC_Y\otimes_{\min} \cC_Y.
\]
The proof is finished by recalling the definition of the Lorentzian tensor product. 
\end{proof}

\section{Lorentz-entanglement breaking maps}

As a direct consequence of~\cite[Lemma 5.2.]{aubrun2023annihilating} we have the following lemma characterizing the cone of Lorentz-entanglement breaking maps:

\begin{lem}
Let $\cC_A\subseteq \cV_A$ and $\cC_B\subseteq \cV_B$ denote proper cones and $P:\cV_A\ra \cV_B$ a linear map. The following are equivalent:
\begin{enumerate}
\item We have $P\in\LorEB(\cC_A,\cC_B)$.
\item For any $k,k'\in\N$ and positive maps $A\in\Pos(\cL_k,\cC_A)$ and $B\in\Pos(\cL_{k'},\cC_B)$ we have $BPA\in \EB(\cL_k,\cL_{k'})$.
\item For any $k\in\N$ and any positive map $A\in\Pos(\cL_k,\cC_A)$ we have $PA\in \EB(\cL_k,\cC_{B})$.
\item For any $k'\in\N$ and any positive map $B\in\Pos(\cC_{B},\cL_{k'})$ we have $BP\in \EB(\cC_{A},\cL_{k'})$.
\end{enumerate}
\end{lem}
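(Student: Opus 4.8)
The plan is to derive all four conditions from \cite[Lemma 5.2.]{aubrun2023annihilating} specialized to the family of Lorentz cones, which mostly amounts to unwinding a dictionary. First I would record that dictionary. Fix a proper cone $\cC\subseteq\cV$ and $k\in\N$, and identify $\R^{k+1}$ with its dual so that $\cL_k\cong\cL_k^*$. Then the map sending a linear map $A:\R^{k+1}\to\cV$ to the tensor $\sum_{i=0}^{k}e_i\otimes A(e_i)\in\R^{k+1}\otimes\cV$ carries $\Pos(\cL_k,\cC)$ bijectively onto $\cL_k\otimes_{\max}\cC$ and carries $\EB(\cL_k,\cC)$ bijectively onto $\cL_k\otimes_{\min}\cC$; both facts follow directly from the definitions of $\otimes_{\max}$, $\otimes_{\min}$ and $\EB$ together with \eqref{equ:ChoiJamiolkowski}. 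Since $\ident_{k+1}\otimes P$ sends the tensor of $A$ to the tensor of $PA$, the inclusion \eqref{equ:LorEB} for a fixed $k$ is equivalent to the statement that $PA\in\EB(\cL_k,\cC_B)$ for every $A\in\Pos(\cL_k,\cC_A)$; quantifying over $k$ gives $(1)\Leftrightarrow(3)$.

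Next I would obtain $(1)\Leftrightarrow(4)$ by dualizing. The inclusion \eqref{equ:LorEB} says that $\ident_{k+1}\otimes P$ is positive from $\cL_k\otimes_{\max}\cC_A$ to $\cL_k\otimes_{\min}\cC_B$; a linear map is positive exactly when its adjoint is positive for the dual cones, and since the minimal and maximal tensor products are dual to one another and $\cL_k$ is self-dual, this is equivalent to $\ident_{k+1}\otimes P^*$ being positive from $\cL_k\otimes_{\max}\cC_B^*$ to $\cL_k\otimes_{\min}\cC_A^*$ for all $k$, i.e.\ to $P^*\in\LorEB(\cC_B^*,\cC_A^*)$. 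Applying $(1)\Leftrightarrow(3)$ to $P^*$ and then taking adjoints of the maps appearing there --- using that $\Pos$ and $\EB$ are invariant under adjunction with dualized cones and that $\cL_k\cong\cL_k^*$ --- rewrites $(3)$ for $P^*$ as exactly $(4)$ for $P$. It remains to handle $(2)$. The implication $(3)\Rightarrow(2)$ is just the ideal property of $\EB$: post-composing $PA\in\EB(\cL_k,\cC_B)$ with $B\in\Pos(\cC_B,\cL_{k'})$ gives $BPA\in\EB(\cL_k,\cL_{k'})$. For $(2)\Rightarrow(3)$, I would fix $A\in\Pos(\cL_k,\cC_A)$ and set $R:=PA$; condition $(2)$ says $BR\in\EB(\cL_k,\cL_{k'})$ for all $k'$ and all $B\in\Pos(\cC_B,\cL_{k'})$, which is precisely condition $(4)$ for the map $R$ with input cone $\cL_k$. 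Hence $R\in\LorEB(\cL_k,\cC_B)$ by $(1)\Leftrightarrow(4)$, and $\LorEB(\cL_k,\cC_B)=\EB(\cL_k,\cC_B)$ because taking $A'=\ident_{k+1}\in\Pos(\cL_k,\cL_k)$ in condition $(3)$ for $R$ already forces $R\in\EB(\cL_k,\cC_B)$. Thus $PA=R\in\EB(\cL_k,\cC_B)$, which is $(3)$.

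I do not expect a real obstacle beyond \cite[Lemma 5.2.]{aubrun2023annihilating}: the whole argument is bookkeeping with the min/max tensor dictionary, and it is the self-duality $\cL_k\cong\cL_k^*$ that makes $(2)$, $(3)$ and $(4)$ collapse into these mutually dual forms. The one point that needs care is keeping the dualizing identifications coherent, so that ``$\ident_{k+1}\otimes P$ sends $\otimes_{\max}$ into $\otimes_{\min}$'' really translates into ``$PA$ is entanglement breaking for every positive $A$''. If one prefers to prove $(2)\Rightarrow(3)$ by hand rather than through \cite[Lemma 5.2.]{aubrun2023annihilating}, one can argue by separation: if $R=PA$ has Choi tensor outside the closed convex cone $\cL_k\otimes_{\min}\cC_B$, a separating functional lies in $\cL_k\otimes_{\max}\cC_B^*$ and corresponds to some $\hat{W}\in\Pos(\cC_B,\cL_k)$ with $\Tr[\hat{W}R]<0$; pairing $\hat{W}R$ against $\ident_{k+1}\in\Pos(\cL_k,\cL_k)$ and using $\EB(\cL_k,\cL_k)=\Pos(\cL_k,\cL_k)^*$ shows $\hat{W}R\notin\EB(\cL_k,\cL_k)$, contradicting $(2)$.
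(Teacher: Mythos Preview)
Your proposal is correct and follows exactly the route the paper indicates: the paper simply states that the lemma is a direct consequence of \cite[Lemma~5.2]{aubrun2023annihilating}, and your argument is precisely the specialization of that lemma to the family of Lorentz cones, with the Choi--Jamio\l{}kowski dictionary and the self-duality $\cL_k\cong\cL_k^*$ made explicit. One minor remark: in condition~(2) as stated, $B\in\Pos(\cL_{k'},\cC_B)$ is a typo for $B\in\Pos(\cC_B,\cL_{k'})$ (otherwise $BPA$ does not compose), and you have correctly worked with the intended reading.
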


It is now easy to see (compare also to~\cite[Theorem 5.3.]{aubrun2023annihilating}) that the trace-dual cone of $\LorEB(\cC_A,\cC_B)$ is given by 
\begin{equation}\label{equ:DualLorEB}
\LorEB(\cC_A,\cC_B)^* = \overline{\text{conv}\lset AB ~:~ B\in \Pos(\cC_B,\cL_k), A\in \Pos(\cL_k,\cC_A) , k\in\N\rset }.
\end{equation}
In the next section, we simplify this expression and show that it coincides with the set of Lorentz factorizable maps from \eqref{equ:LorFactDef}, i.e., the closure is not needed.

\subsection{Lorentz factorizable maps} We start with the following lemma:

\begin{lem}\label{lem:DimensionReduction}
Let $\cC_A\subseteq \cV_A$ and $\cC_B\subseteq \cV_B$ denote proper cones, and consider positive maps $A\in \Pos(\cC_A,\cL_n)$ and $B\in \Pos(\cL_n,\cC_B)$. Then, there exists a $k\leq \min(\dim(\cV_A),\dim(\cV_B))-1$ and positive maps $A'\in \Pos(\cC_A,\cL_k)$ and $B'\in \Pos(\cL_k,\cC_B)$ such that $BA=B'A'$.
\end{lem}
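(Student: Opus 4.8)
The plan is to shrink the ambient space $\R^{n+1}$ through which $BA$ factors in two successive stages, using the retract property of Lorentz cones (Lemma \ref{lem:RetractPropLor}) and the self-duality $\cL_m\simeq\cL_m^*$. The first stage produces an intermediate Lorentz cone of dimension controlled by $\dim(\cV_A)$; the second, carried out inside the subspace obtained after the first stage, produces one whose dimension is controlled simultaneously by that subspace and by $\dim(\cV_B)$ --- which yields the minimum. One may assume $BA\neq 0$, the case $BA=0$ being trivial (take $A'$ to be the zero map into any $\cL_k$ with $k\leq\min(\dim\cV_A,\dim\cV_B)-1$, which exists since $\cV_A,\cV_B\neq\{0\}$).

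\textbf{First stage (restrict on the domain side).} Let $\cS:=A(\cV_A)\subseteq\R^{n+1}$, a nonzero subspace with $\dim\cS=\rk(A)=:s\leq\dim(\cV_A)$, and factor $A=\iota A_0$ through the corestriction $A_0:\cV_A\to\cS$ and the inclusion $\iota:\cS\hookrightarrow\R^{n+1}$. Positivity of $A$ gives $A_0(\cC_A)\subseteq\cL_n\cap\cS$, and $A_0(\cC_A)$ generates $\cS$ (the image of a generating cone under a linear map generates the image), so $\cL_n\cap\cS$ is a generating --- hence proper --- cone in $\cS$ and Lemma \ref{lem:RetractPropLor} supplies $\alpha\in\Pos(\cL_n\cap\cS,\cL_{s-1})$ and $\beta\in\Pos(\cL_{s-1},\cL_n\cap\cS)$ with $\beta\alpha=\ident_\cS$. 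Then $A_1:=\alpha A_0\in\Pos(\cC_A,\cL_{s-1})$ and $B_1:=B\iota\beta\in\Pos(\cL_{s-1},\cC_B)$ satisfy $B_1A_1=B\iota(\beta\alpha)A_0=B\iota A_0=BA$, and $B_1\neq 0$ (otherwise $B$ vanishes on $\cS$ and $BA=0$).

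\textbf{Second stage (restrict on the codomain side).} Apply the same construction to the adjoint. By self-duality, $B_1^*\in\Pos(\cC_B^*,\cL_{s-1})$ is nonzero, so $\cS':=B_1^*(\cV_B^*)\subseteq\R^{s}$ is a nonzero subspace with $\dim\cS'=\rk(B_1^*)=\rk(B_1)=:s'\leq\min(s,\dim\cV_B)$. As before, $\cL_{s-1}\cap\cS'$ is proper in $\cS'$, so Lemma \ref{lem:RetractPropLor} gives $\alpha'\in\Pos(\cL_{s-1}\cap\cS',\cL_{s'-1})$ and $\beta'\in\Pos(\cL_{s'-1},\cL_{s-1}\cap\cS')$ with $\beta'\alpha'=\ident_{\cS'}$. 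Writing $B_1^*=\iota'B_0'$ through its corestriction $B_0':\cV_B^*\to\cS'$ and $\iota':\cS'\hookrightarrow\R^s$, one obtains $B_1^*=(\iota'\beta')(\alpha'B_0')$ with $\alpha'B_0'\in\Pos(\cC_B^*,\cL_{s'-1})$ and $\iota'\beta'\in\Pos(\cL_{s'-1},\cL_{s-1})$. Taking adjoints and using self-duality, $B_1=(\alpha'B_0')^*(\iota'\beta')^*$ with $B':=(\alpha'B_0')^*\in\Pos(\cL_{s'-1},\cC_B)$ and $(\iota'\beta')^*\in\Pos(\cL_{s-1},\cL_{s'-1})$. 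Setting $A':=(\iota'\beta')^*A_1\in\Pos(\cC_A,\cL_{s'-1})$, we get $B'A'=(\alpha'B_0')^*(\iota'\beta')^*A_1=B_1A_1=BA$, with $k:=s'-1\leq\min(\dim\cV_A,\dim\cV_B)-1$, as required.

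The one genuinely substantive point is that a single restriction to the range of $A$ only buys the bound $s-1\leq\dim\cV_A-1$, which is useless when $\dim\cV_A>\dim\cV_B$ and $A$ has large rank; one must pass to the adjoint of the new middle map and perform a second restriction that lives inside $\R^s$, so that its dimension is at most $s$ and at most $\dim\cV_B$ at once. Everything else is routine bookkeeping: corestrictions of positive maps remain positive onto their ranges; images of generating cones are generating, so the intersected cones $\cL_n\cap\cS$ and $\cL_{s-1}\cap\cS'$ are proper and Lemma \ref{lem:RetractPropLor} applies; adjoints interact with $\Pos$ via $\cL_m\simeq\cL_m^*$; and compositions of positive maps are positive. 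The two stages could equally be carried out in the opposite order.
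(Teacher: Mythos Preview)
Your proof is correct and uses the same ingredients as the paper (Lemma~\ref{lem:RetractPropLor} plus self-duality of Lorentz cones), but the organization differs slightly: the paper performs a \emph{single} restriction to the range $\cS=A(\cV_A)$ to obtain $k\leq\dim\cV_A-1$, and then simply remarks that the bound $k\leq\dim\cV_B-1$ follows by applying the identical argument to the dual factorization $A^*B^*$ of $(BA)^*$ --- so one restriction on whichever side has smaller dimension already suffices. Your two-stage nested construction achieves the minimum in one pass without a case distinction, at the cost of a second application of the retract lemma; both are fine, but your claim that ``one \emph{must} pass to the adjoint of the new middle map and perform a second restriction'' overstates the necessity --- a single restriction on the appropriate side is enough.
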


\begin{proof}
We will show that there exists a $k\leq \dim(\cV_A)-1$ with the desired property. The remaining case follows by taking duals of the maps in question using that the Lorentz cones are selfdual. Consider the subspace $\cS=A(\cV_A)\subseteq \R^{n+1}$ and note that $k:=\dim(\cS)-1\leq \dim(\cV_A)-1$. Consider the cone $\cS\cap \cL_n$, which by Lemma \ref{lem:RetractPropLor} is a retract of $\cL_k$. Hence, there are maps $\alpha:\cS\ra \R^{k+1}$ such that $\alpha(\cS\cap \cL_n)\subseteq \cL_k$ and $\beta:\R^{k+1}\ra \cS$ such that $\beta(\cL_k)\subseteq \cS\cap \cL_n$ satisfying $\ident_\cS=\beta\alpha$. We have
\[
BA = B\beta\alpha A = B'A' ,
\]
for $B'=B\beta$ and $A'=\alpha A$. The maps $A'$ and $B'$ have the desired properties.

\end{proof}

Now, we can show the following theorem:

\begin{thm}\label{thm:traceDualLorEBAndClosure}
For proper cones $\cC_A\subseteq \cV_A$ and $\cC_B\subseteq \cV_B$ we have 
\begin{align*}
\LorEB(\cC_B,\cC_A)^* &= \LorFact(\cC_A,\cC_B)\\
&=\conv\lset BA ~:~A\in \Pos(\cC_A,\cL_k), B\in \Pos(\cL_k,\cC_B)\rset ,
\end{align*}
where $k = \min(\dim(\cV_A),\dim(\cV_B))-1$. In particular, this set is closed.
\end{thm}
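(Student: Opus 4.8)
The plan is to establish the chain of equalities by a combination of a bipolar/duality argument and a dimension-reduction step. First, I would recall that equation~\eqref{equ:DualLorEB} already gives
\[
\LorEB(\cC_B,\cC_A)^* = \overline{\conv\lset BA ~:~ A\in \Pos(\cC_A,\cL_k), B\in \Pos(\cL_k,\cC_B), k\in\N\rset},
\]
after swapping the roles of $\cC_A$ and $\cC_B$ relative to the statement in the excerpt. So the content of the theorem is two-fold: (i) the dimension $k$ ranging over all of $\N$ can be replaced by the single value $k=\min(\dim(\cV_A),\dim(\cV_B))-1$, and (ii) no closure is needed once this restriction is in place. Call the unclosed, dimension-restricted set $\cK$. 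The inclusion $\cK \subseteq \LorEB(\cC_B,\cC_A)^*$ is immediate from~\eqref{equ:DualLorEB}, so the work is in the reverse inclusion together with closedness.

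For the dimension reduction, I would apply Lemma~\ref{lem:DimensionReduction} directly: any generator $BA$ with $A\in\Pos(\cC_A,\cL_n)$, $B\in\Pos(\cL_n,\cC_B)$, for arbitrary $n\in\N$, can be rewritten as $B'A'$ with $A'\in\Pos(\cC_A,\cL_{k'})$, $B'\in\Pos(\cL_{k'},\cC_B)$ and $k'\le \min(\dim(\cV_A),\dim(\cV_B))-1$. Padding: if $k' < k$, I would note that $\cL_{k'}$ sits inside $\cL_k$ as a retract (it is $\cL_k$ intersected with a coordinate subspace, so Lemma~\ref{lem:RetractPropLor} applies, or one can argue directly with the block inclusion $\R^{k'+1}\hookrightarrow\R^{k+1}$), hence every generator with parameter $k'$ is also a generator with parameter $k$. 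This shows $\conv\lset BA : A\in\Pos(\cC_A,\cL_k), B\in\Pos(\cL_k,\cC_B)\rset$ already contains all generators appearing in~\eqref{equ:DualLorEB}, so its closure equals $\LorEB(\cC_B,\cC_A)^*$, and the middle line of the theorem (the defining formula for $\LorFact$, once the closure is removed) is at least \emph{dense} in the dual cone.

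The main obstacle is showing that this convex hull is already closed, so that the closure in~\eqref{equ:DualLorEB} is superfluous. The key is that, with the dimension now fixed at $k$, the set of generators is the image of a \emph{compact} set under a continuous map. Concretely, by rescaling I may normalize each generator $BA$ so that, say, $A$ maps a fixed generator $e^A_0$-type element (or a base point of $\cC_A$) to a point of norm one in $\cL_k$ and similarly bound $B$; the sets $\lset A\in\Pos(\cC_A,\cL_k) : \|A\|\le 1\rset$ and $\lset B\in\Pos(\cL_k,\cC_B): \|B\|\le 1\rset$ are compact, the product map $(A,B)\mapsto BA$ is continuous, hence the set of normalized generators $\cG$ is compact. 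Then $\conv(\cG)$ is compact (finite-dimensional Carathéodory), so $\cone(\conv(\cG)) = \cK$ is closed: a cone generated by a compact convex set not containing the origin is closed. The one point requiring care is that one can indeed normalize without losing generators, i.e. that $0$ is not in the relevant compact generating set after normalization — this follows because $\cC_A$ and $\cC_B$ are proper (pointed and generating), so a nonzero positive map has nonzero image, and one can choose the normalization on the "$e_0$-coordinate" of the Lorentz cone consistently. Once $\cK$ is closed and dense in $\LorEB(\cC_B,\cC_A)^*$, the two coincide, and chasing the definitions gives $\cK = \LorFact(\cC_A,\cC_B)$, completing the three-way equality.
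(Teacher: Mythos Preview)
Your approach is essentially the paper's: invoke~\eqref{equ:DualLorEB}, reduce the Lorentz dimension via Lemma~\ref{lem:DimensionReduction}, and then argue closedness by exhibiting the generators $BA$ (with $k$ now fixed) as a compact set and taking the cone over it. Your explicit padding from $k'<k$ up to the fixed value $k$ via an embedding $\cL_{k'}\hookrightarrow\cL_k$ is a small clarification that the paper glosses over with the phrase ``we immediately get the second equality.''

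The closedness step, however, has a gap --- and it is the same gap present in the paper's proof. You correctly single out the delicate point (``that $0$ is not in the relevant compact generating set after normalization''), but your proposed justification only shows that $A$ and $B$ are individually nonzero, not that the composition $BA$ is. In fact $BA=0$ can occur with both factors normalized: take $A$ rank one with image the extreme ray $\R_{\ge 0}(1,z)\subset\cL_k$ for a unit vector $z$, and take $B$ of the form $(t,w)\mapsto (t-\langle z,w\rangle)\,b_0$ for some $b_0\in\cC_B$. Both maps are positive and satisfy the normalizations $e_0^*(A(x_A))=1$, $\phi_B(B(e_0))=1$ used in the paper's sets $\Pos_1$, yet $BA=0$. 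Hence the compact convex set $K$ genuinely contains the origin, the paper's claim that $K$ is a compact \emph{base} of $\LorFact$ is false as stated, and the standard fact ``the cone over a compact convex set not containing $0$ is closed'' does not apply. Closing this gap requires an additional argument controlling the degenerate sequences where the normalized product $B_nA_n\to 0$; one route is to use Lemma~\ref{lem:RetractPropLor} to refactor through a lower-dimensional $\cL_{k'}$ matched to the image of $A_n$, so that the normalization forces the new pair to stay away from zero.
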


\begin{proof}
By Lemma \ref{lem:DimensionReduction} we immediately get the second equality in the statement of the theorem. By \eqref{equ:DualLorEB} and Lemma \ref{lem:DimensionReduction} we have 
\[
\LorEB(\cC_B,\cC_A)^* = \overline{\text{conv}\lset BA ~:~ A\in \Pos(\cC_A,\cL_k), B\in \Pos(\cL_k,\cC_B)\rset},
\]
for $k=\min(\dim(\cV_A),\dim(\cV_B))-1$. It remains to show that the closure is not needed. Fix $x_A\in \text{int}\lb\cC_A\rb$ and $\phi_B\in \text{int}\lb \cC^*_B\rb$, and define 
\[
\Pos_1\lb \cC_A,\cL_k\rb = \lset A\in \Pos\lb \cC_A,\cL_k\rb~:~e^*_0\lb A(x_A)\rb = 1\rset ,
\] 
where $e^*_0\in\text{int}\lb \cL^*_k\rb$ is the functional acting as $e^*_0(t,x)=t$ for $(t,x)\in \R\times \ell^k_2$, and 
\[
\Pos_1\lb \cL_k,\cC_B\rb = \lset B\in \Pos\lb \cL_k, \cC_B\rb~:~\phi_B\lb B(e_0)\rb = 1\rset .
\] 
Since $e^*_0\lb A(x_A)\rb = 0$ for $A\in \Pos\lb \cC_A,\cL_k\rb$ implies that $A=0$, we conclude that $\Pos_1\lb \cC_A,\cL_k\rb$ is a compact base of the cone $\Pos\lb \cC_A,\cL_k\rb$. Similarily, we have that $\Pos_1\lb \cL_k,\cC_B\rb$ is a compact base of the cone $\Pos\lb \cL_k, \cC_B\rb$. By Caratheodory's theorem we find that 
\[
K = \text{conv}\lset BA ~:~ A\in \Pos_1(\cC_A,\cL_k), B\in \Pos_1(\cL_k,\cC_B)\rset ,
\]  
is compact as well. The set $K$ is a compact base for the cone 
\[
\LorFact(\cC_A,\cC_B) = \text{conv}\lset BA ~:~ A\in \Pos(\cC_A,\cL_k), B\in \Pos(\cL_k,\cC_B)\rset, 
\]
and we conclude that it is closed.
\end{proof}

The following corollary is immediate using that the Lorentz factorizable maps are the dual cone of the Lorentz-entanglement breaking maps.

\begin{cor}\label{cor:LorEBCriteriaReducedDim}
For proper cones $\cC_A\subseteq \cV_A$ and $\cC_B\subseteq \cV_B$ and a linear map $P:\cV_A\ra \cV_B$ the following are equivalent:
\begin{enumerate}
\item We have $P\in\LorEB(\cC_A,\cC_B)$.
\item For $k=\min(\dim(\cV_A),\dim(\cV_B))-1$ and positive maps $A\in\Pos(\cL_k,\cC_A)$ and $B\in\Pos(\cL_{k},\cC_B)$ we have $BPA\in \EB(\cL_k,\cL_{k})$.
\end{enumerate}
\end{cor}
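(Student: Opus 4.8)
The plan is to deduce the corollary purely by unwinding the trace-duality and dimension bound already contained in Theorem~\ref{thm:traceDualLorEBAndClosure}, so that no new geometric work is needed. Applying that theorem with the roles of $\cC_A$ and $\cC_B$ exchanged gives
\[
\LorEB(\cC_A,\cC_B)^* = \LorFact(\cC_B,\cC_A) = \conv\lset AB ~:~ B\in\Pos(\cC_B,\cL_k),\ A\in\Pos(\cL_k,\cC_A)\rset ,
\]
where $k=\min(\dim(\cV_A),\dim(\cV_B))-1$; the crucial point is that \emph{no} closure is needed and this single value of $k$ already suffices. Since $\LorEB(\cC_A,\cC_B)$ is a closed convex cone, the bipolar theorem yields $\LorEB(\cC_A,\cC_B) = \LorFact(\cC_B,\cC_A)^*$, so $P\in\LorEB(\cC_A,\cC_B)$ if and only if $\Tr\lbr (AB) P\rbr\geq 0$ for every $B\in\Pos(\cC_B,\cL_k)$ and $A\in\Pos(\cL_k,\cC_A)$, equivalently (by cyclicity of the trace) $\Tr\lbr BPA\rbr\geq 0$ for all such $A,B$.

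To match this with condition~(2) I would pass between scalar positivity and the membership $BPA\in\EB(\cL_k,\cL_k)$ using two elementary facts about Lorentz cones: first, $\EB(\cL_k,\cL_k)=\Pos(\cL_k,\cL_k)^*$, which is the general identity $\EB(\cC_B,\cC_A)=\Pos(\cC_A,\cC_B)^*$ specialised to $\cC_A=\cC_B=\cL_k$; and second, every $M\in\EB(\cL_k,\cL_k)$ has $\Tr\lbr M\rbr\geq 0$, since it is a convex combination of rank-one maps $xf$ with $x\in\cL_k$, $f\in\cL_k^*$, and $f(x)\geq 0$. For $(1)\Rightarrow(2)$, fix $A\in\Pos(\cL_k,\cC_A)$ and $B\in\Pos(\cC_B,\cL_k)$, take an arbitrary $N\in\Pos(\cL_k,\cL_k)$, and use cyclicity to write $\Tr\lbr(BPA)N\rbr=\Tr\lbr P\,(ANB)\rbr$; since $ANB=A\circ(NB)$ with $NB\in\Pos(\cC_B,\cL_k)$ is a generator of $\LorFact(\cC_B,\cC_A)=\LorEB(\cC_A,\cC_B)^*$, the trace is nonnegative, and as $N$ was arbitrary we get $BPA\in\Pos(\cL_k,\cL_k)^*=\EB(\cL_k,\cL_k)$. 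For $(2)\Rightarrow(1)$, any $R\in\LorFact(\cC_B,\cC_A)$ is a convex combination $R=\sum_i\lambda_i A_iB_i$, so $\Tr\lbr R P\rbr=\sum_i\lambda_i\Tr\lbr B_iPA_i\rbr\geq 0$ by the nonnegative-trace fact, whence $P\in\LorFact(\cC_B,\cC_A)^*=\LorEB(\cC_A,\cC_B)$.

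I do not anticipate a genuine obstacle here: the substantive content — replacing the unbounded range of auxiliary Lorentz cones $\cL_k$ in the definition of $\LorEB$ by the single dimension $\min(\dim\cV_A,\dim\cV_B)-1$ — has already been carried out inside Theorem~\ref{thm:traceDualLorEBAndClosure} via the retract property of Lemma~\ref{lem:RetractPropLor} and a compactness/Caratheodory argument, and this corollary merely transports that statement across trace-duality. The only point requiring care is bookkeeping with adjoints and the self-duality $\cL_k\simeq\cL_k^*$, so that the compositions $BPA$ and $ANB$ are typed consistently; once a convention is fixed this is entirely routine.
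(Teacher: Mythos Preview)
Your proposal is correct and follows essentially the same approach as the paper, which simply records the corollary as ``immediate using that the Lorentz factorizable maps are the dual cone of the Lorentz-entanglement breaking maps''; you have merely written out in full the duality bookkeeping that the paper leaves implicit. One small remark: the direction $(1)\Rightarrow(2)$ is even more immediate than your argument suggests, since it is just the specialisation of the preceding lemma (equivalence of (1) and (2) there) to a single value of $k$, while $(2)\Rightarrow(1)$ is where Theorem~\ref{thm:traceDualLorEBAndClosure} actually does the work.
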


\subsection{Structure of central maps} Consider finite-dimensional normed spaces $X$ and $Y$. We aim to understand the central Lorentz-entanglement breaking maps between $\cC_X$ and $\cC_Y$. To do so, we start by proving a result about the structure of certain Lorentz factorizable maps from $\cC_{\ell^n_1}$ to $\cC_{\ell^m_\infty}$.

Let $u:\ell^m_1\ra \ell^n_\infty$ denote a linear map. By definition, we have $\gamma_2(u)\leq 1$ if and only if there exists $k\in\N$ and contractions $v_2:\ell^k_2\ra \ell^n_\infty$ and $v_1:\ell^m_1\ra \ell^k_2$ such that $u=v_2v_1$. We can specify the linear maps $v_1$ and $v_2$ by sets of unit vectors $x_1,\ldots , x_m, y_1,\ldots ,y_n\in B_{\ell^k_2}$ given by $x_j=v_1(e_j)$ for $j\in\lset 1,\ldots ,m\rset$ and $y_i=v_2^*(e_i)$ for $i\in\lset 1,\ldots ,n\rset$. This shows that $\gamma_2(v)\leq 1$ if and only if there exist unit vectors $x_1,\ldots , x_m, y_1,\ldots ,y_n\in B_{\ell^k_2}$ such that $u_{ij}=\braket{y_i}{x_j}$. We will use this well-known fact to prove the following lemma:

\begin{lem}\label{lem:LorentzFactorizableDiag}
For an extremal dual-normalized map $Q\in \Pos(\cL_k,\cC_{\ell^n_\infty})$ and an extremal positive map $P\in \Pos(\cL_k,\cC_{\ell^m_\infty})$ we have 
\[
QP^* = \begin{pmatrix} t & * \\ * & v\end{pmatrix} ,
\]
for some $t\in\R$ and $v:\ell^m_1\ra \ell^n_\infty$ satisfying $\gamma_2(v)\leq t$.
\end{lem}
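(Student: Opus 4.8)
The plan is to write $QP^{*}$ out as an $(n+1)\times(m+1)$ matrix: $t$ is its $(0,0)$-entry, $v$ is its lower-right $n\times m$ block, and for $i\in\{1,\dots,m\}$, $j\in\{1,\dots,n\}$ one has $v_{ji}=(QP^{*})_{ji}=\langle \xi_j,\eta_i\rangle$ (Euclidean inner product on $\R^{k+1}$), where $\xi_j\in\R^{k+1}$ is the $j$-th row of $Q$ and $\eta_i\in\R^{k+1}$ the $i$-th row of $P$. Since $Q$ is positive and dual-normalized its $0$-th row is $e_0^{T}$, and the positivity inequality $|\langle \xi_j,x\rangle|\le\langle e_0,x\rangle$ for $x\in\cL_k$ shows that each row $\xi_j=(\gamma_j,\beta_j)$ with $j\ge 1$ satisfies $\|\beta_j\|_2\le 1-|\gamma_j|$; the same holds for the rows of any positive dual-normalized map into a cone $\cC_{\ell^{\bullet}_\infty}$. (The argument below only uses that $Q$ is positive and dual-normalized, and that $P$ is positive.)

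First I would treat the case $\rk(P)\le 1$. A rank-one positive map is $P=zf^{T}$ with $z\in\cC_{\ell^m_\infty}$ and $f\in\cL_k^{*}=\cL_k$, so $QP^{*}=(Qf)z^{T}$ with $Qf\in\cC_{\ell^n_\infty}$; writing $Qf=(a_0,\hat a)$ and $z=(b_0,\hat b)$, the block $v=\hat a\hat b^{T}$ satisfies $\gamma_2(v)=\|\hat a\hat b^{T}:\ell^m_1\to\ell^n_\infty\|=\|\hat a\|_\infty\|\hat b\|_\infty\le a_0 b_0=t$. In the remaining case I claim $P^{*}(e_0)\in\text{int}(\cL_k)$: if $P^{*}(e_0)=(p_0,b)$ were a nonzero boundary point ($\|b\|_2=p_0>0$), then pairing the boundary functional $(p_0,-b)\in\cL_k$ with $P^{*}(e_0)\pm(\text{row }i\text{ of }P)\in\cL_k$ (and using Cauchy--Schwarz) forces every row of $P$ to be a scalar multiple of $(1,b/p_0)$, hence $\rk(P)\le 1$. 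So assume $P^{*}(e_0)\in\text{int}(\cL_k)$. By symmetry of $\cL_k$ pick $C\in\Aut(\cL_k)$ with $\tilde P:=PC^{*}$ dual-normalized (still positive), so $P=\tilde P L$ with $L:=(C^{*})^{-1}\in\Aut(\cL_k)$. Using~\eqref{equ:LorAutRepre}, $L=c\,(1\oplus u_1)P_\alpha(1\oplus u_2)$ with $u_1,u_2\in O(k)$ and $c>0$; absorbing the rotation $1\oplus u_2$ into $Q$ on the right and $1\oplus u_1$ into $\tilde P$ on the right (both operations preserve positivity and dual-normalization) gives $QP^{*}=c\,Q''P_\alpha(\tilde P')^{*}$ with $Q'',\tilde P'$ positive dual-normalized and $P_\alpha$ the standard boost. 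A short computation then gives $t=(QP^{*})_{00}=c\cosh\alpha$ and, for $i,j\ge 1$, $v_{ji}=c\,\langle \xi_j,P_\alpha\eta_i\rangle$, where $\xi_j$ and $\eta_i$ are the rows of $Q''$ and $\tilde P'$.

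The decisive observation is that the standard boost $P_\alpha$ is symmetric and positive definite (eigenvalues $e^{\pm\alpha}$ and $1$), so $P_\alpha=(P_\alpha^{1/2})^{2}$ and $v_{ji}=\langle\sqrt{c}\,P_\alpha^{1/2}\xi_j,\ \sqrt{c}\,P_\alpha^{1/2}\eta_i\rangle$ is a genuine Gram representation of $v$ in Hilbert space. By the factorization description of $\gamma_2$ for maps $\ell^m_1\to\ell^n_\infty$ recalled before this lemma, $\gamma_2(v)\le\big(\max_j\|\sqrt{c}\,P_\alpha^{1/2}\xi_j\|_2\big)\big(\max_i\|\sqrt{c}\,P_\alpha^{1/2}\eta_i\|_2\big)$, so it suffices to prove $c\,\langle\xi,P_\alpha\xi\rangle\le c\cosh\alpha=t$ for every row $\xi$ of a positive dual-normalized map into a cone $\cC_{\ell^{\bullet}_\infty}$ (and the same for the $\eta$'s). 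For the $0$-th row $\xi=e_0$ this is an equality; for the others, writing $\xi=(\gamma,\beta_1,\beta')$ with $\beta_1$ the component along the boost direction and $\beta_1^{2}+\|\beta'\|^{2}=\|\beta\|_2^{2}\le(1-|\gamma|)^{2}$, one gets $\langle\xi,P_\alpha\xi\rangle=\cosh\alpha\,(\gamma^{2}+\beta_1^{2})+2\sinh\alpha\,\gamma\beta_1+\|\beta'\|^{2}$, which is bounded by $\cosh\alpha\,(|\gamma|^{2}+p^{2})+2|\sinh\alpha|\,|\gamma|\,p+r^{2}$ with $p=|\beta_1|$, $r=\|\beta'\|$, $p^{2}+r^{2}\le(1-|\gamma|)^{2}$. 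After saturating the constraint this last expression becomes a convex quadratic in $p$ on $[0,1-|\gamma|]$, hence maximized at an endpoint; the two endpoint values are $\cosh\alpha\,|\gamma|^{2}+(1-|\gamma|)^{2}$ and $\cosh\alpha\big(|\gamma|^{2}+(1-|\gamma|)^{2}\big)+2|\sinh\alpha|\,|\gamma|(1-|\gamma|)$, and both are $\le\cosh\alpha$ using only $\cosh\alpha\ge 1$, $|\gamma|\le 1$, and $|\sinh\alpha|<\cosh\alpha$. This yields $c\langle\xi,P_\alpha\xi\rangle\le t$, hence $\gamma_2(v)\le t$.

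The main obstacle is the non-compact (boost) part of the automorphism hidden inside $P$: for a map built only from its rotation/sign part (a ``Sinkhorn-type'' map), the naive Euclidean Gram representation $v_{ji}=\langle\xi_j,\eta_i\rangle$ already has $\|\xi_j\|_2,\|\eta_i\|_2\le 1$ and gives $\gamma_2(v)\le c=t$ at once, but a boost inflates the Euclidean norms of the rows and breaks this. The remedy — the technical heart of the argument — is to replace the Euclidean inner product by the one induced by the positive-definite matrix $P_\alpha$: this simultaneously accounts for the larger value $t=c\cosh\alpha$ and makes the row-norm estimate work out, via the elementary one-variable inequality above.
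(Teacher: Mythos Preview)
Your proof is correct and takes a genuinely different route from the paper's. The paper exploits the explicit classification of extremal maps in $\Pos_w(\cL_k,\cC_{\ell^\bullet_\infty})$ (Appendix~A, Lemma~\ref{lem:ExtremePointsTPLkCinfty} and Theorem~\ref{thm:ExtrPosLkClinfty}): every relevant row of $Q$ is either $(\pm 1,0)$ or $(0,a^T)$ with $\|a\|_2\le 1$, and every relevant row of $P$ is either $\pm(1,w)$ or $(\braket{b}{w},b^T)$ with $\|b\|_2\le 1$; from this it writes down, case by case, explicit Gram vectors $y_i,x_j\in B_{\ell_2^{k+1}}$ with $v_{ij}=\braket{y_i}{x_j}$. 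Your argument bypasses extremality entirely: after disposing of the rank-one case you use transitivity of $\Aut(\cL_k)$ on $\text{int}(\cL_k)$ to dual-normalize $P$, peel off the compact factors in the decomposition $L=c(1\oplus u_1)P_\alpha(1\oplus u_2)$, and observe that the positive-definite boost $P_\alpha$ supplies a Hilbert-space inner product in which $v_{ji}=\braket{\sqrt{c}\,P_\alpha^{1/2}\xi_j}{\sqrt{c}\,P_\alpha^{1/2}\eta_i}$; the elementary inequality $\langle\xi,P_\alpha\xi\rangle\le\cosh\alpha$ for any row of a dual-normalized positive map into $\cC_{\ell^\bullet_\infty}$ then gives $\gamma_2(v)\le c\cosh\alpha=t$. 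Your approach is more conceptual and in fact proves the stronger statement for \emph{all} dual-normalized $Q$ and \emph{all} positive $P$ (not just extremal ones), at the cost of invoking the structure of the Lorentz group and the interior/boundary dichotomy; the paper's approach is more elementary and self-contained but leans on the appendix's case analysis of extreme points.
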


\begin{proof}
Without loss of generality we may assume $t=1$ after suitable normalization, and we consider the following matrix forms 
\[
Q = \begin{pmatrix} 1 & 0 \\ s_1 & a^T_1 \\ \vdots & \vdots \\ s_n & a^T_n \end{pmatrix}, \quad P=\begin{pmatrix} 1 & w^T \\ r_1 & b^T_1 \\ \vdots & \vdots \\ r_m & b^T_m \end{pmatrix} ,
\]
with $s_1,\ldots ,s_n,r_1,\ldots ,r_m\in\R$ and $w,a_1,\ldots ,a_n, b_1,\ldots ,b_m\in \R^k$ with $\|w\|_2\leq 1$. By Lemma \ref{lem:ExtremePointsTPLkCinfty} we have $s_i\in\lset -1,0,1\rset$ and $\|a_i\|_2\leq 1$ for every $i\in\lset 1,\ldots ,n\rset$. Moreover, we have $a_i=0$ whenever $s_i\neq 0$. By Theorem \ref{thm:ExtrPosLkClinfty} we have either $r_j=\braket{b_j}{w}$ and $\|b_j\|_2 \leq 1$, or $r_j\in\lset -1,1\rset$ and $b_j=w$. Depending on the various cases we define vectors $x_1,\ldots ,x_m,y_1,\ldots ,y_n\in \R^{k+1}$ as follows: We set 
\begin{align*}
y^T_i &= \begin{cases} (0,a^T_i), &\text{ if }s_i=0 \\
s_i(\sqrt{1-\|w\|^2_2},w^T), &\text{ otherwise }, 
 \end{cases} \\
x_j &= \begin{cases} \pm(\sqrt{1-\|w\|^2_2},w^T), &\text{ if } (r_j,b^T_j)=\pm(1,w),\\
(0,b^T_j), &\text{ if }(r_j,b^T_j)=(\braket{b_j}{w},b^T_j) ,
 \end{cases}
\end{align*}
for $i\in\lset 1,\ldots ,n\rset$ and $j\in\lset 1,\ldots ,m\rset$. It is easy to verify that $\|y_i\|_2,\|x_j\|_2\leq 1$ and $v_{ij}=\braket{y_i}{x_j}$ showing that $\gamma_2(v)\leq 1$.

\end{proof}

We can now prove our main results. We start with the following theorem.

\begin{thm}\label{thm:LorEBviaGamma2Star}
For finite-dimensional normed spaces $X,Y$ and $v:X\ra Y$ the following are equivalent:
\begin{enumerate}
    \item We have $\gamma^*_2(v)\leq 1$.
    \item We have $1\oplus v\in \LorEB(\cC_X,\cC_Y)$.
\end{enumerate}
\end{thm}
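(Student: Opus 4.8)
The plan is to prove the two implications separately, using the reduction to maps between Lorentz cones (Corollary~\ref{cor:LorEBCriteriaReducedDim}) together with the factorization~\eqref{equ:gamma2Star} of $\gamma_2^*$ and the diagonal lemmas already established for the $2$-summing norm.

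\textbf{From $\gamma_2^*(v)\leq 1$ to Lorentz-entanglement breaking.} Using~\eqref{equ:gamma2Star}, write $v = v_2 v_1$ with $v_1:X\ra \ell^k_2$ and $v_2:\ell^k_2\ra Y$ such that $\pi_2(v_1)\leq 1$ and $\pi_2(v_2^*)\leq 1$. Then $1\oplus v = (1\oplus v_2)(1\oplus v_1)$, and I would like to conclude that $1\oplus v_1\in \LorEA_2(\cC_X,\cL_k)$ by Theorem~\ref{thm:EquivalencesPi2}, while the adjoint condition $\pi_2(v_2^*)\leq 1$ should play a symmetric role for $1\oplus v_2$. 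The natural statement to isolate is: if $\pi_2(v_1)\leq 1$ then $1\oplus v_1\in \LorEA_2(\cC_X,\cL_k)$, and if $\pi_2(v_2^*)\leq 1$ then $1\oplus v_2\in \LorEA_2(\cC_{Y^*},\cL_k)^*$ appropriately, and then invoke the already-proven fact (the last displayed theorem in the introduction, $B^*A\in\LorEB$ for $A\in\LorEA_2(\cC_A,\cL_m)$ and $B\in \LorEA_2(\cC_B^*,\cL_m)$) with $A = 1\oplus v_1$ and $B = 1\oplus v_2^*$, so that $B^*A = (1\oplus v_2)(1\oplus v_1) = 1\oplus v\in \LorEB(\cC_X,\cC_Y)$. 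This is the clean route: it reduces the whole first implication to the ideal-composition theorem plus Theorem~\ref{thm:EquivalencesPi2}, with essentially no new computation.

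\textbf{From Lorentz-entanglement breaking to $\gamma_2^*(v)\leq 1$.} Here I would use Corollary~\ref{cor:LorEBCriteriaReducedDim} and trace duality. Since $1\oplus v\in \LorEB(\cC_X,\cC_Y)$, for every $A\in \Pos(\cL_k,\cC_X)$ and $B\in \Pos(\cL_{k'},\cC_Y)$ (with $k,k'$ as in the corollary, or more simply for all $k,k'\in\N$) we have $B(1\oplus v)A\in \EB(\cL_k,\cL_{k'})$, i.e., $\Nuc$ of the corresponding central map is controlled. Equivalently, by Theorem~\ref{thm:traceDualLorEBAndClosure}, $\Tr[(1\oplus v)M]\geq 0$ for every $M\in \LorFact(\cC_Y,\cC_X)$. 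I would specialize $M = 1\oplus w$ with $w:Y\ra X$ a central Lorentz factorizable map; by part (1) of Theorem~\ref{thm:Main1}, which is proved from Theorem~\ref{thm:LorEBviaGamma2Star} — so I must be careful not to create a circular dependence — it would suffice instead to use directly that Lorentz factorizable central maps $1\oplus w$ with $\gamma_2(w)\leq 1$ lie in $\LorFact(\cC_{Y},\cC_{X})$, which follows from the preceding Lemma~\ref{lem:LorentzFactorizableDiag} combined with Proposition~\ref{prop:denseSetLor} applied to the extremal maps into $\cC_{\ell^n_\infty}$. Then $0\leq \Tr[(1\oplus v)(1\oplus w)] = 1 + \Tr[vw]$ for all $w$ with $\gamma_2(w)\leq 1$, and since $\gamma_2$ and $\gamma_2^*$ are trace-dual operator ideal norms, this gives $\gamma_2^*(v)\leq 1$.

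\textbf{Main obstacle.} The delicate point is the output side: reducing general finite-dimensional normed spaces $X,Y$ to the $\ell^n_1$/$\ell^m_\infty$ situation of Lemma~\ref{lem:LorentzFactorizableDiag}. One direction of the reduction is easy because $\cC_X$ always receives a quotient-type positive map from some $\cC_{\ell^n_\infty}$ and maps onto some $\cC_{\ell^m_1}$ via the unit-ball structure, but threading the extremality hypotheses of Lemma~\ref{lem:LorentzFactorizableDiag} (extremal dual-normalized $Q$, extremal $P$) through an approximation argument — using Proposition~\ref{prop:denseSetLor} and Lemma~\ref{lem:Continuity} to pass from interior maps to boundary maps and from general positive maps to compositions with automorphisms and diagonal maps — is where the real work lies. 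I expect the bulk of the proof to be this bookkeeping: verifying that the $\gamma_2$-factorization extracted for the diagonal building blocks glues back together, respecting the norms $\|v_1\|,\|v_2\|$, after composing with the (non-isometric but contractive) maps relating $X,Y$ to their $\ell_\infty^n$ and $\ell_1^m$ companions. The inequalities $\gamma_2^*(v)\le 1 \Leftrightarrow 1\oplus v\in\LorEB$ themselves are, modulo that reduction, formal consequences of trace duality and the ideal-composition theorem.
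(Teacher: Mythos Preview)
Your route for $(1)\Rightarrow(2)$ via the factorization theorem $B^*A\in\LorEB$ together with Theorem~\ref{thm:EquivalencesPi2} is correct and genuinely different from the paper's. The paper instead reduces to $X=\ell^n_\infty$, $Y=\ell^m_1$ and invokes Lemma~\ref{lem:LorentzFactorizableDiag} directly via trace duality; your approach bypasses that lemma entirely for this implication (and there is no hidden circularity: the proof of Theorem~\ref{thm:LorEBviaFactorization} uses only Theorem~\ref{thm:PropsEALor} and Proposition~\ref{prop:CharactLorEAIntoLk}, so it is logically prior even though it appears later in the text).

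For $(2)\Rightarrow(1)$ you have overcomplicated matters. The implication $\gamma_2(w)\leq 1 \Rightarrow 1\oplus w\in\LorFact(\cC_Y,\cC_X)$ is immediate from the definitions: factor $w=w_2w_1$ through $\ell^k_2$ with contractions $w_1,w_2$, and then $1\oplus w=(1\oplus w_2)(1\oplus w_1)$ is visibly a Lorentz factorization. No appeal to Lemma~\ref{lem:LorentzFactorizableDiag} or to the hard direction of Corollary~\ref{cor:gamma2Lor} is needed, and the circularity you worry about does not arise. (The paper does this direction even more directly: for contractions $u:\ell^k_2\to X$ and $w:Y\to\ell^k_2$ one has $1\oplus wvu\in\EB(\cL_k,\cL_k)$, hence $|\Tr[vuw]|\leq\Nuc(wvu)\leq 1$.)

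Consequently, your ``main obstacle'' paragraph describes difficulties belonging to the paper's approach, not your own: in your argument the $\ell_\infty/\ell_1$ reduction and the extremality bookkeeping of Lemma~\ref{lem:LorentzFactorizableDiag} never enter. Your proof is thus cleaner than you seem to realize, at the cost of importing the $\pi_2$ machinery (Theorems~\ref{thm:HSAndEALor} and~\ref{thm:EquivalencesPi2}), which the paper's proof of this particular theorem avoids.
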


\begin{proof}
Assume first that $1\oplus v\in \LorEB(\cC_X,\cC_Y)$. For $k\in\N$, consider contractions $u:\ell^k_2\ra X$ and $w:Y\ra \ell^k_2$ and note that $1\oplus wvu\in \EB(\cL_k,\cL_k)$. Then, we have 
\[
|\Tr\lbr vuw\rbr| \leq \text{Nuc}\lb wvu\rb \leq 1 .
\]
Since this holds for all contractions $u:\ell^k_2\ra X$ and $w:Y\ra \ell^k_2$, and every $k\in\N$, we conclude that $\gamma^*_2(v)\leq 1$.

For the other direction, let us assume first that $X=\ell^n_\infty$ and $Y=\ell^m_1$ for some $n,m\in\N$. Let $v:\ell^n_\infty\ra \ell^m_1$ satisfy $\gamma^*_2(v)\leq 1$. By a continuity argument and Lemma \ref{lem:Continuity} it suffices to show that $(1\oplus v)Q\in \EB(\cL_k,\cC_{\ell^m_1})$ for any dual-normalized $Q\in \Pos(\cL_k,\cC_{\ell^n_\infty})$. Using duality this is equivalent to
\[
\Tr\lbr P^*(1\oplus v)Q\rbr = \Tr\lbr (1\oplus v)QP^*\rbr \geq 0 ,
\]
for any $P\in \Pos(L_k,\cC_{\ell^m_\infty})$, which follows from Lemma \ref{lem:LorentzFactorizableDiag} by duality of $\gamma_2$ and $\gamma^*_2$. 

To show the case for general finite-dimensional normed spaces $X,Y$ note that $v:X\ra Y$ satisfies $\gamma^*_2(v)\leq 1$ if and only if it factorizes as $v=\beta v'\alpha$ where $\alpha:X\ra \ell^n_\infty$ and $\beta:\ell^m_1\ra Y$ are contractions and $v':\ell^n_\infty\ra \ell^m_1$ satisfies $\gamma^*_2(v')\leq 1$. Since $1\oplus v=(1\oplus \alpha)(1\oplus v')(1\oplus \beta)$ with positive maps $1\oplus \alpha$ and $1\oplus \beta$ and $1\oplus v'\in \LorEB(\cC_{\ell^n_\infty}, \cC_{\ell^m_1})$ we conclude that $1\oplus v\in \LorEB(\cC_{X}, \cC_{Y})$.
\end{proof}

As a corollary we have the following:

\begin{cor}\label{cor:gamma2Lor}
For finite-dimensional normed spaces $X,Y$ and $v:X\ra Y$ the following are equivalent:
\begin{enumerate}
    \item We have $\gamma_2(v)\leq 1$.
    \item We have $1\oplus v\in \LorFact(\cC_X,\cC_Y)$.
\end{enumerate}
\end{cor}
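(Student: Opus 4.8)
The plan is to deduce Corollary~\ref{cor:gamma2Lor} from Theorem~\ref{thm:LorEBviaGamma2Star} by trace duality, exploiting that $\LorFact$ and $\LorEB$ are trace-dual cones (Theorem~\ref{thm:traceDualLorEBAndClosure}) and that $\gamma_2$ and $\gamma^*_2$ are trace-dual operator ideal norms.

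For the implication $\gamma_2(v)\leq 1\Rightarrow 1\oplus v\in\LorFact(\cC_X,\cC_Y)$ I would argue directly from the definition of $\LorFact$. Assuming $v\neq 0$ (the case $v=0$ being trivial), write $v=v_2v_1$ with $v_1\colon X\to\ell^k_2$, $v_2\colon\ell^k_2\to Y$ and $\|v_1\|\|v_2\|\leq 1$; the infimum defining $\gamma_2$ is attained in finite dimensions, or one may instead take near-optimal factorizations and invoke that $\LorFact(\cC_X,\cC_Y)$ is closed by Theorem~\ref{thm:traceDualLorEBAndClosure}. After rescaling we may assume $\|v_1\|\leq 1$ and $\|v_2\|\leq 1$. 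Since a central map $\lambda\oplus u$ is positive with respect to the relevant cones precisely when $\|u\|\leq\lambda$ (see~\cite[Proposition 2.25.]{lami2018non}), this gives $1\oplus v_1\in\Pos(\cC_X,\cL_k)$ and $1\oplus v_2\in\Pos(\cL_k,\cC_Y)$, so that $1\oplus v=(1\oplus v_2)(1\oplus v_1)$ is one of the generators of $\LorFact(\cC_X,\cC_Y)$.

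For the converse, suppose $1\oplus v\in\LorFact(\cC_X,\cC_Y)$. By Theorem~\ref{thm:traceDualLorEBAndClosure} this cone equals $\LorEB(\cC_Y,\cC_X)^*$, so that $\Tr[(1\oplus v)W]\geq 0$ for every $W\in\LorEB(\cC_Y,\cC_X)$. I would then specialize $W$ to central maps: for any $w\colon Y\to X$ with $\gamma^*_2(w)\leq 1$ we also have $\gamma^*_2(-w)\leq 1$, so Theorem~\ref{thm:LorEBviaGamma2Star} applied with the roles of $X$ and $Y$ exchanged gives $1\oplus(\pm w)\in\LorEB(\cC_Y,\cC_X)$. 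Since the composition $(1\oplus v)(1\oplus(\pm w))$ equals $1\oplus(\pm vw)$ and hence has trace $1\pm\Tr[vw]$, we obtain $|\Tr[vw]|\leq 1$. As this holds for every $w\colon Y\to X$ with $\gamma^*_2(w)\leq 1$, and trace duality of operator ideal norms is involutive in finite dimensions, i.e.\ $\gamma_2=(\gamma^*_2)^*$, we conclude $\gamma_2(v)\leq 1$.

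The whole argument is essentially bookkeeping once Theorems~\ref{thm:traceDualLorEBAndClosure} and~\ref{thm:LorEBviaGamma2Star} are available, and I do not expect a genuine obstacle. The one point worth emphasizing is that in the converse direction it suffices to test $1\oplus v$ against the \emph{central} elements of the large cone $\LorEB(\cC_Y,\cC_X)$: these already detect the unit ball of $\gamma^*_2$, hence their polar detects the unit ball of $\gamma_2$, and one never needs to analyze the non-central part of $\LorEB(\cC_Y,\cC_X)$.
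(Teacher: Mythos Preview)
Your proof is correct and essentially identical to the paper's. The paper dismisses $(1)\Rightarrow(2)$ with ``It is clear that the first statement implies the second'' (which is exactly your factorization $1\oplus v=(1\oplus v_2)(1\oplus v_1)$), and for $(2)\Rightarrow(1)$ it tests $1\oplus v$ against central maps $1\oplus w$ with $\gamma^*_2(w)\leq 1$ via Theorem~\ref{thm:LorEBviaGamma2Star} to obtain $1+\Tr[wv]\geq 0$, then concludes by trace duality---precisely your argument, with your use of $\pm w$ just making the symmetry of the $\gamma^*_2$-ball explicit.
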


\begin{proof}
It is clear that the first statement implies the second. To show the other direction we consider $w:Y\ra X$ with $\gamma^*_2(w)\leq 1$. Since $1\oplus w\in \LorEB(\cC_Y,\cC_X)$ we find that
\[
1 + \Tr\lbr wv\rbr = \Tr\lbr (1\oplus w)(1\oplus v)\rbr \geq 0 .
\]
Since this holds for all $w:Y\ra X$ with $\gamma^*_2(w)\leq 1$ we conclude that $\gamma_2(v)\leq 1$.
\end{proof}

\subsection{Factorizable maps breaking Lorentz entanglement}

The $2$-dominated norm can be characterized via a factorization through Euclidean spaces, see~\eqref{equ:gamma2Star}. It is a natural question whether the Lorentz-entanglement breaking maps admit a similar characterization. We have the following theorem:

\begin{thm}\label{thm:LorEBviaFactorization}
For proper cones $\cC_A\subseteq \cV_A$ and $\cC_B\subseteq \cV_B$ and $n\in\N$ consider $A\in\LorEA(\cC_A,\cL_n)$ and $B\in\LorEA(\cC^*_B,\cL_n)$. Then, we have $P=B^*A\in\LorEB(\cC_A,\cC_B)$.
\end{thm}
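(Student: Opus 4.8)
The plan is to argue by trace duality. By Theorem~\ref{thm:traceDualLorEBAndClosure} (applied with the roles of $\cC_A$ and $\cC_B$ interchanged) we have $\LorEB(\cC_A,\cC_B)^*=\LorFact(\cC_B,\cC_A)$, and since both cones are closed and convex the bipolar theorem gives $\LorEB(\cC_A,\cC_B)=\LorFact(\cC_B,\cC_A)^*$. Hence it suffices to show that $\Tr[(B^*A)M]\geq 0$ for every $M\in\LorFact(\cC_B,\cC_A)$, and by the description of $\LorFact$ in Theorem~\ref{thm:traceDualLorEBAndClosure} together with linearity of the trace it is enough to treat $M=ST$ with $T\in\Pos(\cC_B,\cL_k)$ and $S\in\Pos(\cL_k,\cC_A)$ for some $k\in\N$ (so that $ST:\cV_B\ra\cV_A$ pairs correctly with $B^*A:\cV_A\ra\cV_B$).

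For such an $M$, cyclicity of the trace gives $\Tr[(B^*A)(ST)]=\Tr[(TB^*)(AS)]$, where $AS:\R^{k+1}\ra\R^{n+1}$ and $TB^*:\R^{n+1}\ra\R^{k+1}$ are maps between the ambient spaces of $\cL_k$ and $\cL_n$. I would then identify both factors as max-entanglement annihilating maps between Lorentz cones. Since $A\in\LorEA_2(\cC_A,\cL_n)$ and $S\in\Pos(\cL_k,\cC_A)$, Proposition~\ref{prop:CharactLorEAIntoLk} gives $AS\in\maxEA_2(\cL_k,\cL_n)$. For the other factor, $T\in\Pos(\cC_B,\cL_k)$ implies $T^*\in\Pos(\cL_k,\cC^*_B)$, so from $B\in\LorEA_2(\cC^*_B,\cL_n)$ and Proposition~\ref{prop:CharactLorEAIntoLk} we obtain $BT^*\in\maxEA_2(\cL_k,\cL_n)$; taking adjoints and invoking Theorem~\ref{thm:PropsEALor}(2) yields $TB^*=(BT^*)^T\in\maxEA_2(\cL_n,\cL_k)$, where I use the identification $\cL_n\simeq\cL^*_n$ and $(T^*)^*=T$.

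Finally, Theorem~\ref{thm:PropsEALor}(3) states $\maxEA_2(\cL_n,\cL_k)^*=\maxEA_2(\cL_k,\cL_n)$, so pairing $TB^*\in\maxEA_2(\cL_n,\cL_k)$ against $AS\in\maxEA_2(\cL_k,\cL_n)$ gives $\Tr[(TB^*)(AS)]\geq 0$. Summing over the terms of $M$ shows $\Tr[(B^*A)M]\geq 0$ for all $M\in\LorFact(\cC_B,\cC_A)$, hence $B^*A\in\LorEB(\cC_A,\cC_B)$. I do not expect a genuine obstacle: the proof is a matter of chaining together the duality statements already established in this section, and the only delicate points are bookkeeping — keeping the adjoints and the Lorentz-cone self-duality identifications straight in the trace manipulation, and checking that $T^*\in\Pos(\cL_k,\cC^*_B)$ and that $AS$, $TB^*$ have the correct source and target cones so that Proposition~\ref{prop:CharactLorEAIntoLk} and Theorem~\ref{thm:PropsEALor} apply verbatim. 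One could alternatively run the argument through Corollary~\ref{cor:LorEBCriteriaReducedDim}, reducing to $T'(B^*A)R'\in\EB(\cL_k,\cL_{k'})$ for positive $R',T'$ and then using that a composition of max-entanglement annihilating maps between Lorentz cones is entanglement breaking (which follows from Theorem~\ref{thm:HSAndEALor}, Proposition~\ref{prop:denseSetLor}, and $\Nuc(v_2v_1)\leq\hs(v_2)\hs(v_1)$ for maps between Euclidean spaces), but the trace-duality route above is shorter.
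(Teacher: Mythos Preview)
Your proof is correct and follows essentially the same route as the paper's: both arguments show $\Tr[(B^*A)\,ST]\geq 0$ for $S\in\Pos(\cL_k,\cC_A)$, $T\in\Pos(\cC_B,\cL_k)$ by using Proposition~\ref{prop:CharactLorEAIntoLk} to place $AS$ and $BT^*$ in $\maxEA_2(\cL_k,\cL_n)$, then invoking Theorem~\ref{thm:PropsEALor}(2)--(3). The only difference is that you spell out the bipolar/$\LorFact$ framing explicitly, whereas the paper leaves the duality with $\LorFact$ implicit and simply verifies $\Tr[Q_2PQ_1]\geq 0$ directly.
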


\begin{proof}
Consider a map $P=B^*A\in\Pos(\cC_A,\cC_B)$ with $A\in\LorEA_2(\cC_A,\cL_n)$ and $B\in\LorEA_2(\cC^*_B,\cL_n)$. For any $Q_1\in \Pos(\cL_k,\cC_A)$ and $Q_2\in \Pos(\cC_B,\cL_k)$ we have 
\[
\Tr\lbr Q_2PQ_1\rbr = \Tr\lbr (BQ^*_2)^*(AQ_1)\rbr \geq 0, 
\]
using the third statement in Proposition \ref{thm:PropsEALor}, since $AQ_1\in \maxEA_2(\cL_k,\cL_n)$ and $(BQ^*_2)^*\in \maxEA_2(\cL_n,\cL_k)$ by the second statement in Proposition \ref{thm:PropsEALor}.
\end{proof}

\section{Specific cases and examples}

\subsection{The cone over the square} Let us consider first the cone $\cC_{\ell^2_\infty}$, i.e., the cone in $\R^3$ with a square base. The normed space $\ell^2_\infty$ has the $2$-summing property (see~\cite{arias1995banach}), i.e., for any linear map $v:\ell^2_\infty\ra \ell^m_2$ we have $\pi_2(v)=\|v\|$. Now, we show an analogue of the $2$-summing property in the setting of cones:

\begin{thm}
For every $m\in\N$ we have $\Pos(\cC_{\ell^2_\infty}, \cL_m) = \LorEA_2(\cC_{\ell^2_\infty}, \cL_m).$
\end{thm}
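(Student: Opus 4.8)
The inclusion $\LorEA_2(\cC_{\ell^2_\infty},\cL_m)\subseteq\Pos(\cC_{\ell^2_\infty},\cL_m)$ is immediate, so the content is the reverse inclusion. For \emph{central} maps it is already clear: by Theorem~\ref{thm:EquivalencesPi2} and the $2$-summing property of $\ell^2_\infty$ (see~\cite{arias1995banach}), $\lambda\oplus u\in\Pos(\cC_{\ell^2_\infty},\cL_m)$ iff $\|u\|\le\lambda$ iff $\pi_2(u)\le\lambda$ iff $\lambda\oplus u\in\LorEA_2(\cC_{\ell^2_\infty},\cL_m)$. To handle an arbitrary $P\in\Pos(\cC_{\ell^2_\infty},\cL_m)$ I would use Proposition~\ref{prop:CharactLorEAIntoLk}(3): since $\dim(\R\times\ell^2_\infty)=3$, it suffices to show that $PQ\in\maxEA_2(\cL_2,\cL_m)$ for every $Q\in\Pos(\cL_2,\cC_{\ell^2_\infty})$. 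By the equivalence of (1) and (4) in Theorem~\ref{thm:maxEALorCriteria}, this is the statement $(PQ\otimes PQ)(\hat{J}_2)\in\cL_m\otimes_{\min}\cL_m$; and since $(PQ\otimes PQ)(\hat{J}_2)=(P\otimes P)\big((Q\otimes Q)(\hat{J}_2)\big)$ while $P\otimes P$ maps $\cC_{\ell^2_\infty}\otimes_{\min}\cC_{\ell^2_\infty}$ into $\cL_m\otimes_{\min}\cL_m$ (as $P$ is positive), everything reduces to the following key claim: $(Q\otimes Q)(\hat{J}_2)\in\cC_{\ell^2_\infty}\otimes_{\min}\cC_{\ell^2_\infty}$ for every $Q\in\Pos(\cL_2,\cC_{\ell^2_\infty})$.

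To prove the key claim, first restrict to $Q$ in the interior of $\Pos(\cL_2,\cC_{\ell^2_\infty})$ (the general case then follows, since $\cC_{\ell^2_\infty}\otimes_{\min}\cC_{\ell^2_\infty}$ is closed) and use Lemma~\ref{lem:Continuity} to factor $Q=Q'A$ with $A\in\Aut(\cL_2)$ and $Q'$ dual-normalized. Because $AJ_2A^T=J_2$ for every $A\in\Aut(\cL_2)$ one has $(A\otimes A)(\hat{J}_2)=\hat{J}_2$ and hence $(Q\otimes Q)(\hat{J}_2)=(Q'\otimes Q')(\hat{J}_2)$, so we may assume $Q$ is dual-normalized. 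Then $Q=\begin{pmatrix}1 & 0\\ \alpha & W\end{pmatrix}$ with $\alpha\in\R^2$ and $W\in M_2(\R)$, positivity of $Q$ is equivalent to $|\alpha_j|+\|w_j\|_2\le1$ for $j=1,2$ (where $w_j$ is the $j$-th row of $W$), and a short computation gives
\[
(Q\otimes Q)(\hat{J}_2)=\begin{pmatrix}1 & \alpha^T\\ \alpha & \alpha\alpha^T-WW^T\end{pmatrix}\in\R^3\otimes\R^3 .
\]

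It remains to show this matrix lies in $\cC_{\ell^2_\infty}\otimes_{\min}\cC_{\ell^2_\infty}$, i.e.\ that it is a nonnegative combination of product tensors $x\otimes y$ with $x,y\in\cC_{\ell^2_\infty}$. Writing each nonzero $x\in\cC_{\ell^2_\infty}$ as a positive multiple of some $(1,\xi)$ with $\xi\in[-1,1]^2$, this is equivalent to the existence of a joint law of two $[-1,1]^2$-valued random vectors $\xi,\eta$ with $\mathbb{E}[\xi]=\mathbb{E}[\eta]=\alpha$ and $\mathrm{Cov}(\xi,\eta)=-WW^T$. The plan is to verify that the moment data $\big(\alpha,\alpha,\alpha\alpha^T-WW^T\big)$ satisfies all defining inequalities of the two-party/two-dichotomic-observable local polytope — the positivity facets and the CHSH facets — so that such a coupling exists; each of these becomes an elementary relation among $\alpha_j$, $\rho_j:=\|w_j\|_2$ and $\langle w_1,w_2\rangle$ (for instance, a positivity facet reads $(1-\alpha_j)^2-\rho_j^2\ge0$, and a CHSH facet reads $|\alpha_1^2+2\alpha_1\alpha_2-\alpha_2^2-\rho_1^2-2\langle w_1,w_2\rangle+\rho_2^2|\le2$), and all of them follow from $|\alpha_j|+\rho_j\le1$.

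This last verification is the main obstacle. It is exactly where the $2$-summing property of $\ell^2_\infty$ is used: the argument is special to the $2$-dimensional ball, and the analogous statement with $\cC_{\ell^2_\infty}$ replaced by $\cC_{\ell^n_\infty}$ for $n\ge3$ is false (for $n\ge3$ the relevant local polytope acquires further, non-CHSH, facets that can be violated).
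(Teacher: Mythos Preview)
Your reduction is sound and the key claim is true, so the approach works. The route, however, differs from the paper's in the final step, and the step you flag as ``the main obstacle'' is genuinely where the work lies.

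The paper also reduces to dual-normalized $Q$ but then passes to the \emph{extremal} dual-normalized maps via Lemma~\ref{lem:ExtremePointsTPLkCinfty}, which forces each row of $Q$ to be either $(\pm 1,0)$ or $(0,a)$ with $\|a\|_2\le 1$. This splits into three trivial-to-handle cases according to how many rows are of the first type. In the purely central case $Q=1\oplus A$ the paper does \emph{not} prove your stronger statement $(Q\otimes Q)(\hat J_2)\in\cC_{\ell^2_\infty}\otimes_{\min}\cC_{\ell^2_\infty}$; it only shows $QJ_nQ^T=1\oplus(-AA^*)\in\LorEB(\cC_{\ell^2_1},\cC_{\ell^2_\infty})$ (this is where the $2$-summing property enters, via $\gamma_2^*(AA^*)\le\pi_2(A^*)^2=\|A^*\|^2\le 1$ and Theorem~\ref{thm:LorEBviaGamma2Star}), and then invokes the equivalence $(1)\Leftrightarrow(3)$ of Theorem~\ref{thm:maxEALorCriteria} rather than $(1)\Leftrightarrow(4)$. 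So the paper trades your Bell-polytope computation for the $\gamma_2^*$/LorEB machinery developed earlier.

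Your route is more self-contained: once you invoke Fine's theorem (positivity $+$ CHSH exhaust the facets in the $(2,2,2)$ scenario), everything reduces to elementary inequalities in $\alpha_1,\alpha_2,\rho_1,\rho_2,c$. Positivity is immediate from $(1+s\alpha_i)(1+t\alpha_j)\ge\rho_i\rho_j\ge|c|$. The CHSH check, which you leave open, does go through: after symmetrizing, one must show $|(\alpha_1^2-\rho_1^2)-(\alpha_2^2-\rho_2^2)+2(\alpha_1\alpha_2-c)|\le 2$; the extremum in $c$ is $c=\mp\rho_1\rho_2$, and then a short boundary analysis on the box $|\alpha_j|+\rho_j\le 1$ (e.g.\ substituting $\rho_j=1-|\alpha_j|$) collapses the expression to something visibly $\le 2$. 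It is a page of calculus rather than a one-liner, which is presumably why the paper chose the extremal-point route instead. One small correction: the $2$-summing property is not literally ``used'' in your CHSH verification; rather, the fact that CHSH alone suffices is \emph{equivalent} to it, and your direct check would amount to reproving that equivalence by hand.
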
 

\begin{proof}
One inclusion is clear. For the other inclusion consider a positive map $P\in \Pos(\cC_{\ell^2_\infty}, \cL_m)$. We will show that $PQ\in \maxEA_2(\cL_n,\cL_m)$ for every $Q\in \Pos(\cL_n,\cC_{\ell^2_\infty})$, which by Proposition \ref{prop:CharactLorEAIntoLk} implies that $P\in \LorEA_2(\cC_{\ell^2_\infty}, \cL_m)$. By a continuity argument and Lemma \ref{lem:Continuity} it is enough to consider dual-normalized $Q\in \Pos(\cL_n,\cC_{\ell^2_\infty})$ and we can restrict further to extreme points of the dual-normalized positive maps. By Lemma \ref{lem:ExtremePointsTPLkCinfty} these are of the form 
\[
Q = \begin{pmatrix} 1 & 0 \\ s & 0 \\ 0 & A\end{pmatrix} ,
\]
where $s\in\lset \pm 1\rset^{k_1}$ and $A:\ell^n_2\ra \ell^{k_2}_\infty$ an extremal contraction, and $k_1+k_2=2$. We have three cases: If $k_1=2$ and $k_2=0$, then the maps $Q$ are entanglement breaking and hence we have $PQ\in \maxEA_2(\cL_n,\cL_m)$. If $k_1=k_2=1$, then $A$ is a functional of norm $1$ and without loss of generality we can consider $s=1$. It can then be verified that 
\[
(Q\otimes Q)(\hat{J}_n) = \frac{1}{2}\begin{pmatrix} 1 \\ 1\\ -1\end{pmatrix}\otimes \begin{pmatrix} 1 \\ 1\\ 1\end{pmatrix} + \frac{1}{2}\begin{pmatrix} 1 \\ 1\\ 1\end{pmatrix}\otimes \begin{pmatrix} 1 \\ 1\\ -1\end{pmatrix} \in \cC_{\ell^2_\infty}\otimes_{\min} \cC_{\ell^2_\infty}.
\]
Hence, we conclude that $PQ\in \maxEA_2(\cL_n,\cL_m)$ by Theorem \ref{thm:maxEALorCriteria}. Finally, we have the case where $k_1=0$ and $k_2=2$. For any contraction $A:\ell^n_2\ra \ell^2_\infty$ we have $\pi_2(A^*:\ell^2_1\ra \ell^n_2)=\|A^*\|_{1\ra 2}\leq 1$ by the $2$-summing property since $\ell^2_1\simeq \ell^2_\infty$ isometrically. Therefore, we have $\gamma^*_2(AA^*)\leq \pi_2(A^*)^2\leq 1$. By Theorem \ref{thm:LorEBviaGamma2Star} we find that for $Q=1\oplus A$ we have $QJ_nQ^T = 1\oplus (-AA^*)\in \LorEB(\cC_{\ell^2_1},\cC_{\ell^2_\infty})$. Using Theorem \ref{thm:maxEALorCriteria} we conclude that $PQ\in \maxEA_2(\cL_n,\cL_m)$ since $PQJ_nQ^TP^T\in \EB(\cL_m,\cL_m)$.

\end{proof}

Using Theorem \ref{thm:LorEBviaFactorization} we immediately obtain the following corollary:

\begin{cor}\label{cor:LorEBLorFactCellinfty1}
We have $\LorFact(C_{\ell^2_\infty}, C_{\ell^2_1})\subseteq \LorEB(C_{\ell^2_\infty}, C_{\ell^2_1})$.
\end{cor}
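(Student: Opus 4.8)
The plan is to reduce to the generators of $\LorFact(\cC_{\ell^2_\infty},\cC_{\ell^2_1})$ and feed them into Theorem~\ref{thm:LorEBviaFactorization}, exploiting the ``$2$-summing property'' of the square cone established in the previous theorem together with the selfduality of the Lorentz cones. First I would recall, from the definition~\eqref{equ:LorFactDef} (or, with the dimension bound $k=2$, from Theorem~\ref{thm:traceDualLorEBAndClosure}), that $\LorFact(\cC_{\ell^2_\infty},\cC_{\ell^2_1})$ is the convex cone generated by maps of the form $MN$ with $N\in\Pos(\cC_{\ell^2_\infty},\cL_k)$ and $M\in\Pos(\cL_k,\cC_{\ell^2_1})$. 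Since $\LorEB(\cC_{\ell^2_\infty},\cC_{\ell^2_1})$ is a closed convex cone, it suffices to show that each such generator $MN$ lies in it.

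Next I would rewrite $MN$ as a composition of the type appearing in Theorem~\ref{thm:LorEBviaFactorization}. Setting $A:=N$, the previous theorem gives $A\in\Pos(\cC_{\ell^2_\infty},\cL_k)=\LorEA_2(\cC_{\ell^2_\infty},\cL_k)$. Setting $B:=M^*$, and using the identity $\cC_X^*=\cC_{X^*}$, the isometry $(\ell^2_1)^*=\ell^2_\infty$, and the identification $\cL_k\simeq\cL_k^*$, I would conclude $B=M^*\in\Pos\big((\cC_{\ell^2_1})^*,\cL_k\big)=\Pos(\cC_{\ell^2_\infty},\cL_k)=\LorEA_2(\cC_{\ell^2_\infty},\cL_k)=\LorEA_2\big((\cC_{\ell^2_1})^*,\cL_k\big)$, again by the previous theorem. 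Since $B^*A=M^{**}N=MN$, applying Theorem~\ref{thm:LorEBviaFactorization} with $\cC_A=\cC_{\ell^2_\infty}$, $\cC_B=\cC_{\ell^2_1}$ and $n=k$ then yields $MN=B^*A\in\LorEB(\cC_{\ell^2_\infty},\cC_{\ell^2_1})$. Taking convex combinations and using that $\LorEB(\cC_{\ell^2_\infty},\cC_{\ell^2_1})$ is a convex cone finishes the argument.

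I do not expect a genuine obstacle here: the corollary is essentially a bookkeeping consequence of the preceding theorem and Theorem~\ref{thm:LorEBviaFactorization}. The only point that needs a little care is the duality bookkeeping — correctly identifying $(\cC_{\ell^2_1})^*$ with $\cC_{\ell^2_\infty}$, and observing that the equality $\Pos(\cC_{\ell^2_\infty},\cL_k)=\LorEA_2(\cC_{\ell^2_\infty},\cL_k)$ feeds \emph{both} slots of the factorization $P=B^*A$: directly for the factor $N$ into $\cL_k$, and, after passing to the adjoint, for the factor $M$ out of $\cL_k$ as well. (Choosing $k=2$ via Theorem~\ref{thm:traceDualLorEBAndClosure} is convenient but not essential; the argument runs verbatim for every $k\in\N$, since the previous theorem holds for all Lorentz cones.)
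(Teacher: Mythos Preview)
Your proposal is correct and is precisely the argument the paper has in mind: the paper states that the corollary follows ``immediately'' from Theorem~\ref{thm:LorEBviaFactorization} together with the preceding theorem $\Pos(\cC_{\ell^2_\infty},\cL_m)=\LorEA_2(\cC_{\ell^2_\infty},\cL_m)$, and you have simply spelled out the duality bookkeeping (identifying $(\cC_{\ell^2_1})^*$ with $\cC_{\ell^2_\infty}$ so that both factors $N$ and $M^*$ land in $\LorEA_2(\cC_{\ell^2_\infty},\cL_k)$) that makes this immediate.
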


By Corollary \ref{cor:LorEBLorFactCellinfty1} any map in $\LorFact(C_{\ell^2_\infty}, C_{\ell^2_1})$ breaks entanglement with Lorentz cones. As maps in $\LorFact(C_{\ell^2_\infty}, C_{\ell^2_1})$, by definition, factor through Lorentz cones, we have the following: 

\begin{cor}
We have $\LorFact(C_{\ell^2_\infty}, C_{\ell^2_1})\subseteq \maxEA_2(C_{\ell^2_\infty}, C_{\ell^2_1})$.
\end{cor}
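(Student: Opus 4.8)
The plan is to combine the factorization structure of $\LorFact(\cC_{\ell^2_\infty},\cC_{\ell^2_1})$ with the identity $\cC_{\ell^2_1}^{*}=\cC_{\ell^2_\infty}$ (which holds since $(\ell^2_1)^{*}=\ell^2_\infty$) and with the Theorem above identifying $\Pos(\cC_{\ell^2_\infty},\cL_m)$ and $\LorEA_2(\cC_{\ell^2_\infty},\cL_m)$. The point will be that \emph{both} legs of a Lorentz factorization of a map in $\LorFact(\cC_{\ell^2_\infty},\cC_{\ell^2_1})$ become, after passing to adjoints, positive maps from $\cC_{\ell^2_\infty}$ into a Lorentz cone, and hence Lorentz-entanglement annihilating --- this is the cone analogue of the $2$-summing property at work, and it is also what underlies Corollary~\ref{cor:LorEBLorFactCellinfty1}.

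First I would use Theorem~\ref{thm:traceDualLorEBAndClosure} to write an arbitrary $P\in\LorFact(\cC_{\ell^2_\infty},\cC_{\ell^2_1})$ as a finite convex combination $P=\sum_i\lambda_iB_iA_i$ with $A_i\in\Pos(\cC_{\ell^2_\infty},\cL_2)$ and $B_i\in\Pos(\cL_2,\cC_{\ell^2_1})$. Since $P$ is positive, it suffices to show that $(P\otimes P)(x)\in\cC_{\ell^2_1}\otimes_{\min}\cC_{\ell^2_1}$ for every $x\in\cC_{\ell^2_\infty}\otimes_{\max}\cC_{\ell^2_\infty}$, which, passing to the dual cone, amounts to $\braket{y}{(P\otimes P)(x)}\geq 0$ for all $y\in(\cC_{\ell^2_1}\otimes_{\min}\cC_{\ell^2_1})^{*}=\cC_{\ell^2_\infty}\otimes_{\max}\cC_{\ell^2_\infty}$. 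Representing $x$ and $y$ through the Choi--Jamiolkowski correspondence~\eqref{equ:ChoiJamiolkowski} by positive maps $\tilde x,\tilde y\in\Pos(\cC_{\ell^2_1},\cC_{\ell^2_\infty})$ and unwinding the pairing exactly as in the proof of Lemma~\ref{lem:ProdCentralHSLor}, I would reduce the claim to the inequality $\sum_{i,j}\lambda_i\lambda_j\,\Tr\lbr(A_j\tilde xA_i^{*})(B_i^{*}\tilde yB_j)\rbr\geq 0$, a sum of traces of products of positive maps from $\cL_2$ to itself.

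The key step is to recognise each such factor as Lorentz max-entanglement annihilating. The composite $\tilde xA_i^{*}$ is a positive map $\cL_2\to\cC_{\ell^2_\infty}$ and $A_j\in\Pos(\cC_{\ell^2_\infty},\cL_2)=\LorEA_2(\cC_{\ell^2_\infty},\cL_2)$ by the Theorem above, so the equivalence of statements (1) and (2) in Proposition~\ref{prop:CharactLorEAIntoLk} yields $A_j(\tilde xA_i^{*})\in\maxEA_2(\cL_2,\cL_2)$; symmetrically, $B_i^{*}\in\Pos(\cC_{\ell^2_1}^{*},\cL_2)=\Pos(\cC_{\ell^2_\infty},\cL_2)=\LorEA_2(\cC_{\ell^2_\infty},\cL_2)$ --- this is the step where $\cC_{\ell^2_1}^{*}=\cC_{\ell^2_\infty}$ is used --- and $\tilde yB_j$ is positive $\cL_2\to\cC_{\ell^2_\infty}$, so the same proposition gives $B_i^{*}(\tilde yB_j)\in\maxEA_2(\cL_2,\cL_2)$. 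Since $\maxEA_2(\cL_2,\cL_2)$ is self-trace-dual by Theorem~\ref{thm:PropsEALor}(3), every summand is nonnegative and the proof is complete. I expect the only genuinely delicate part to be the bookkeeping of adjoints and dual cones: verifying $\cC_{\ell^2_1}^{*}=\cC_{\ell^2_\infty}$, checking that every composite in the expansion of $\braket{y}{(P\otimes P)(x)}$ types correctly so that $A_j\tilde xA_i^{*}$ and $B_i^{*}\tilde yB_j$ really are positive maps from $\cL_2$ to itself, and carrying out the translation of the pairing into a trace with the conventions already fixed in the proof of Lemma~\ref{lem:ProdCentralHSLor}.
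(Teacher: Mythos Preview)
Your argument is correct. It unfolds the trace-duality explicitly: each cross-term $A_j\tilde xA_i^{*}$ lies in $\maxEA_2(\cL_2,\cL_2)$ because $A_j\in\Pos(\cC_{\ell^2_\infty},\cL_2)=\LorEA_2(\cC_{\ell^2_\infty},\cL_2)$ and Proposition~\ref{prop:CharactLorEAIntoLk} applies; the same holds for $B_i^{*}\tilde yB_j$ via $\cC_{\ell^2_1}^{*}=\cC_{\ell^2_\infty}$; and self-duality of $\maxEA_2(\cL_2,\cL_2)$ finishes it. The bookkeeping you flag as delicate is routine once the identifications $\cL_2^{*}\simeq\cL_2$ and $\cC_{\ell^2_1}^{*}=\cC_{\ell^2_\infty}$ are in place.

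The paper's route is shorter but uses the same ingredients one level up. It invokes Corollary~\ref{cor:LorEBLorFactCellinfty1} (whose proof is Theorem~\ref{thm:LorEBviaFactorization}, itself built on exactly the Proposition~\ref{prop:CharactLorEAIntoLk} and Theorem~\ref{thm:PropsEALor} steps you carry out) to get that each simple factor $B_jA_j$ lies in $\LorEB(\cC_{\ell^2_\infty},\cC_{\ell^2_1})$. Then for $x\in\cC_{\ell^2_\infty}\otimes_{\max}\cC_{\ell^2_\infty}$ one writes $(B_iA_i\otimes B_jA_j)(x)=(B_i\otimes\ident)\bigl((\ident\otimes B_jA_j)((A_i\otimes\ident)(x))\bigr)$: the inner map lands in $\cL_2\otimes_{\max}\cC_{\ell^2_\infty}$, the Lorentz-entanglement breaking property of $B_jA_j$ pushes it into $\cL_2\otimes_{\min}\cC_{\ell^2_1}$, and the outer positive $B_i$ sends it to $\cC_{\ell^2_1}\otimes_{\min}\cC_{\ell^2_1}$. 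Summing over $i,j$ gives the result. Your approach is essentially this argument with the proof of Corollary~\ref{cor:LorEBLorFactCellinfty1} inlined; the paper's has the advantage of separating the ``$\LorFact\subseteq\LorEB$'' step as a reusable fact, while yours makes the role of the $2$-summing-type identity $\Pos=\LorEA_2$ for $\cC_{\ell^2_\infty}$ on \emph{both} legs more transparent.
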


\subsection{Cones of positive semidefinite matrices}\label{sec:PSDcones}

In this section, we aim to construct Lorentz-entanglement breaking maps between matrix algebras. One could be tempted to construct such a map from a Lorentz-entanglement breaking central map between cones $\cC_X$ and $\cC_Y$ over finite-dimensional normed spaces $X$ and $Y$. Unfortunately, the most immediate potential construction of this type will always result in an entanglement breaking map as shown in the following theorem.

\begin{thm}\label{thm:PSDFactorizationEB}
Let $X$ and $Y$ denote finite-dimensional normed spaces and $v:X\ra Y$ a linear operator with $\gamma^*_2(v)\leq \lambda$. For any positive maps $\alpha\in\Pos\lb M_n(\C)^+,\cC_{X}\rb$ and $\beta\in\Pos\lb \cC_{Y}, M_m(\C)^+\rb$ the composition $\beta(\lambda\oplus v)\alpha$ is entanglement breaking. 
\end{thm}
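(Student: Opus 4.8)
The plan is to reduce the statement to the known fact that central Lorentz-entanglement breaking maps factor through a Euclidean-space factorization, and then to exploit that factoring a positive map through a Lorentz cone and through $M_n(\C)^+$ simultaneously gives a map through a \emph{finite-dimensional von Neumann algebra}, where every positive map that also breaks entanglement in the required sense is automatically completely positive and entanglement breaking in the matrix sense. Concretely, I would first normalize to $\lambda=1$. By Theorem~\ref{thm:LorEBviaGamma2Star} the hypothesis $\gamma^*_2(v)\leq 1$ is equivalent to $1\oplus v\in\LorEB(\cC_X,\cC_Y)$, and by the ideal property of $\LorEB$ (stated in the introduction) together with $\alpha\in\Pos(M_n(\C)^+,\cC_X)$ and $\beta\in\Pos(\cC_Y,M_m(\C)^+)$ we get $\beta(1\oplus v)\alpha\in\LorEB(M_n(\C)^+,M_m(\C)^+)$. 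So the claim reduces to: \emph{every map in $\LorEB(M_n(\C)^+,M_m(\C)^+)$ is entanglement breaking.}

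The key step is thus to prove $\LorEB(M_n(\C)^+,M_m(\C)^+)=\EB(M_n(\C)^+,M_m(\C)^+)$, or at least the relevant inclusion. Here I would use that $M_n(\C)^+$ contains (as a retract, in the sense of the retract notion recalled before Lemma~\ref{lem:RetractPropLor}) a Lorentz cone of sufficiently large dimension: the Bloch-type embedding $\cL_k\hookrightarrow M_n(\C)^+$ realizing the spin factor, available once $k$ is at most roughly $n$. More precisely, for a positive semidefinite input the relevant facet structure of $M_n(\C)^+$ is captured by its $3$-dimensional faces, each isomorphic to $M_2(\C)^+\cong\cL_3$. Using Corollary~\ref{cor:LorEBCriteriaReducedDim}, membership in $\LorEB$ is tested by sandwiching with positive maps $A\in\Pos(\cL_k,M_n(\C)^+)$ and $B\in\Pos(\cL_k,M_m(\C)^+)$ and asking that $BPA^{\,T}\in\EB(\cL_k,\cL_k)$ for $k=\min(n^2,m^2)-1$. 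I would then argue that, because $M_n(\C)^+$ and $M_m(\C)^+$ are \emph{both} retracts of Lorentz cones only in low dimensions but are \emph{generated} by their $\cL_3$-faces, one can recover $P$ on all of $M_n(\C)_{sa}$ from its restrictions to these faces, and on each such face $P$ restricts to a map in $\LorEB(\cL_3,\cL_3)=\EB(\cL_3,\cL_3)$ by Theorem~\ref{thm:maxEALorCriteria}-type reasoning (any Lorentz-entanglement breaking map between $\cL_3$'s is entanglement breaking). Assembling the rank-one decompositions obtained face-by-face, and using that a positive map on $M_n(\C)_{sa}$ that is entanglement breaking on every $M_2(\C)^+$-face and respects positivity globally must itself be a convex combination of maps of the form $\rho\cdot f$ with $\rho\in M_m(\C)^+$ and $f\in (M_n(\C)^+)^*$, yields $\beta(1\oplus v)\alpha\in\EB(M_n(\C)^+,M_m(\C)^+)$.

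An alternative, possibly cleaner, route: factor $1\oplus v=(1\oplus\alpha')(1\oplus v')(1\oplus\beta')$ through $\ell^a_\infty$ and $\ell^b_1$ as in the proof of Theorem~\ref{thm:LorEBviaGamma2Star}, so that $\beta(1\oplus v)\alpha = \big(\beta(1\oplus\alpha')\big)(1\oplus v')\big((1\oplus\beta')\alpha\big)$ with $1\oplus v'\in\LorEB(\cC_{\ell^a_\infty},\cC_{\ell^b_1})$. Then observe that any positive map from $M_n(\C)^+$ into a cone over $\ell^b_1$ is automatically entanglement breaking, since $\cC_{\ell^b_1}$ has a simplicial-type structure on the relevant faces; dually the same holds for $\cC_{\ell^a_\infty}$. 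One of the three factors then lies in $\EB$, and since $\EB$ is a mapping ideal the whole composition does.

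The main obstacle I anticipate is the step $\LorEB(M_n(\C)^+,M_n(\C)^+)\subseteq\EB$ (equivalently, showing the intermediate factor is genuinely entanglement breaking): it is not merely formal, because $M_n(\C)^+$ for $n\geq 3$ is \emph{not} a retract of any single Lorentz cone, so one cannot directly pull the whole map back to a map between Lorentz cones and invoke $\LorEB(\cL_k,\cL_k)=\EB(\cL_k,\cL_k)$. The face-by-face reconstruction has to be done carefully — one needs that the rank-one decompositions on overlapping $\cL_3$-faces are compatible and glue to a single global decomposition with $\rho_i\in M_m(\C)^+$, $f_i\in(M_n(\C)^+)^*$. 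This is exactly where the theorem's restriction to \emph{central} $v$ (rather than an arbitrary $\LorEB$ map, which the authors explicitly leave open) does the work: the factorization through $\ell^a_\infty$/$\ell^b_1$ in the alternative route sidesteps the gluing issue entirely, and I expect the actual proof to take that path.
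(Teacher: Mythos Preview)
Your first route collapses at the key reduction: the inclusion $\LorEB(M_n(\C)^+,M_m(\C)^+)\subseteq\EB(M_n(\C)^+,M_m(\C)^+)$ is \emph{false}. The example constructed immediately after this theorem in the paper exhibits an explicit map in $\LorEB(M_3(\C)^+,M_3(\C)^+)$ that is not entanglement breaking. So the statement you are trying to prove is strictly weaker than ``$\LorEB\subseteq\EB$ for PSD cones'', and the face-by-face gluing argument cannot work: no amount of care in reconstructing rank-one decompositions on $\cL_3$-faces will prove a false statement. You correctly sensed this obstacle, but it is not merely a technical difficulty --- it is a genuine obstruction.

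Your alternative route starts correctly (the reduction to $X=\ell^a_\infty$, $Y=\ell^b_1$ is exactly what the paper does) but then fails for the same underlying reason: $\cC_{\ell^b_1}$ is \emph{not} simplicial for $b\geq 2$, so positive maps $M_n(\C)^+\to\cC_{\ell^b_1}$ are not automatically entanglement breaking, and neither is any of the three factors. The special structure of the central map $1\oplus v'$ must be exploited in a way that goes beyond mere membership in $\LorEB$.

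The paper's actual argument, after the reduction to $\ell^{k_1}_\infty\to\ell^{k_2}_1$, is a direct Choi-matrix computation. Writing $\alpha$ and $\beta$ via selfadjoint matrices $A_0,\dots,A_{k_1}$ and $B_0,\dots,B_{k_2}$ with $A_0\pm A_i\geq 0$ and $B_0\pm B_j\geq 0$, the Choi operator of $P=\beta(1\oplus v)\alpha$ is $C_P=A_0\otimes B_0+\sum_{i,j}v_{ij}A_i\otimes B_j$. After conjugating by $A_0^{-1/2}\otimes B_0^{-1/2}$, one tests positivity against a unit vector $\ket{\psi}$; the relevant expression becomes $1+\sum_{i,j}v_{ij}\braket{a_i}{b_j}$ where $\ket{a_i}=(A_i'\otimes\one)\ket{\psi}$ and $\ket{b_j}=(\one\otimes B_j')\ket{\psi}$ are vectors of norm at most one. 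The Gram-type matrix $M_{ij}=\braket{a_i}{b_j}$ therefore satisfies $\gamma_2(M)\leq 1$, and the duality $|\Tr(vM)|\leq\gamma^*_2(v)\gamma_2(M)\leq 1$ gives $C_P\geq 0$. Since composing with any positive map $W$ yields another map of the same form, every such composition is completely positive, hence $P$ is entanglement breaking. This is where the \emph{specific} structure of the central map (entries $v_{ij}$ with $\gamma^*_2$-control) is used --- not through any abstract $\LorEB$ property.
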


\begin{proof}
Without loss of generality we may restrict to $\lambda=1$. Combining \eqref{equ:gamma2Star} and \eqref{equ:2NuclearFact} shows that any $v:X\ra Y$ with $\gamma^*_2(v)\leq 1$ admits a factorization $v=w_2uw_1$ with contractions $w_1:X\ra \ell^{k_1}_\infty$ and $w_2:\ell^{k_2}_1\ra Y$ and a linear operator $u:\ell^{k_1}_\infty\ra \ell^{k_2}_1$ satisfying $\gamma^*_2(u)\leq 1$. We can therefore restrict to $X=\ell^{k_1}_\infty$ and $Y=\ell^{k_2}_1$. 

Consider positive maps $\alpha\in\Pos\lb  M_n(\C)^+,\cC_{\ell^{k_1}_\infty}\rb$ and $\beta\in\Pos\lb \cC_{\ell^{k_2}_1}, M_m(\C)^+\rb$ and note that there are collections of selfadjoint matrices $A_0,\ldots, A_{k_1}\in  M_n(\C)_{sa}$ and $B_0,\ldots, B_{k_2}\in  M_m(\C)_{sa}$ satisfying $A_0\pm A_i\geq 0$ for all $i\in\lset 1,\ldots ,k_1\rset$ and $B_0\pm B_j\geq 0$ for all $j\in\lset 1,\ldots ,k_2\rset$ such that
\[
\alpha(x) = \begin{pmatrix} \Tr\lbr A^T_0 x\rbr \\ \Tr\lbr A^T_1 x\rbr \\ \vdots \\ \Tr\lbr A^T_{k_1} x\rbr\end{pmatrix} \quad\text{ and }\quad \beta(y) = \sum^{k_2}_{j=0} y_i B_i ,
\]
for any $x\in  M_n(\C)_{sa}$ and any $y\in \R^{k_2+1}$. For a linear operator $v:\ell^{k_1}_\infty\ra \ell^{k_2}_1$ with $\gamma^*_2(v)\leq 1$ consider the positive map $P=\beta(1\oplus v)\alpha\in \Pos( M_n(\C)^+, M_m(\C)^+)$. To show that $P$ is entanglement breaking, we will first show that it is completely positive. By~\cite[Theorem 2]{choi1975completely} the map $P$ is completely positive if and only if the Choi operator
\[
C_P = A_0\otimes B_0 + \sum^{k_1}_{i=1}\sum^{k_2}_{j=1} v_{ij} A_i\otimes B_j ,
\]
is positive semidefinite. By a continuity argument, we may assume without loss of generality that $A_0$ and $B_0$ are positive definite. Then, the operator $C_P$ is positive semidefinite if and only if the operator   
\[
C' = \one_n\otimes \one_m + \sum^{k_1}_{i=1}\sum^{k_2}_{j=1} v_{ij} A'_i\otimes B'_j ,
\] 
is positive semidefinite for $A'_i=A^{-1/2}_0A_i A_0^{-1/2}$ and $B'_j=B^{-1/2}_0B_j B_0^{-1/2}$. Consider a normalized vector $\ket{\psi}\in \C^n\otimes \C^m$ and note that 
\[
\bra{\psi}C'\ket{\psi} = 1 + \sum^{k_1}_{i=1}\sum^{k_2}_{j=1} v_{ij}\braket{a_i}{b_j} ,
\] 
where $\ket{a_i}=(A'_i\otimes \one_m)\ket{\psi}\in \C^n\otimes \C^m$ and $\ket{b_j} = (\one_n\otimes B'_j)\ket{\psi}\in \C^n\otimes \C^m$. Since $A'_i$ and $B'_j$ are contractions, we have $\braket{a_i}{a_i},\braket{b_j}{b_j}\leq 1$ for all $i$ and $j$. Finally, note that the inner products $\braket{a_i}{b_j}$ are real for all $i$ and $j$. Therefore, we may introduce normalized real vectors 
\[
\ket{\tilde{a}_i} = \begin{pmatrix} \text{Re}\lb \ket{a_i}\rb \\ \text{Im}\lb \ket{a_i}\rb\end{pmatrix}, \quad\text{ and }\quad\ket{\tilde{b}_j} = \begin{pmatrix} \text{Re}\lb \ket{b_j}\rb \\ \text{Im}\lb \ket{b_j}\rb\end{pmatrix} ,
\]
satisfying $\braket{\tilde{a}_i}{\tilde{b}_j} = \braket{a_i}{b_j}$. Hence, the operator $M:\ell^{k_2}_1\ra \ell^{k_1}_\infty$ with $\braket{e_i}{M(e_j)}=\braket{a_i}{b_j}$ satisfies $\gamma_2(M)\leq 1$. By duality of the norms $\gamma_2$ and $\gamma^*_2$ we find that 
\[
|\sum^{k_1}_{i=1}\sum^{k_2}_{j=1} v_{ij}\braket{a_i}{b_j}|\leq \gamma^*_2(v)\leq 1 ,
\]
and hence that $\bra{\psi}C'\ket{\psi}\geq 0$. We conclude that $C'$ and thereby also $C_P$ is positive semidefinite. We have shown that $\beta^*(1\oplus v)\alpha$ is completely positive for any choice of positive maps $\alpha\in\Pos\lb  M_n(\C)^+,\cC_{\ell^{k_1}_\infty}\rb$ and $\beta\in\Pos\lb  M_m(\C)^+,\cC_{\ell^{k_2}_\infty}\rb$, and a linear operator $v:\ell^{k_1}_\infty\ra \ell^{k_2}_1$ with $\gamma^*_2(v)\leq 1$. Composing such a map with any positive map $W\in \Pos( M_m(\C)^+, M_n(\C)^+)$ leads to a map $\beta^*(1\oplus v)\alpha W = \beta^*(1\oplus v)\alpha'$ of the same form, where $\alpha'=\alpha W$. Since all such compositions are completely positive as well, we find that the maps of the considered form are entanglement breaking.
\end{proof}

We will now construct an explicit example of a map in $\LorEB\lb  M_3(\C)^+,  M_3(\C)^+\rb$ that is not entanglement breaking. 

\begin{example} We start by defining a completely positive map $T:M_3(\C)_{sa}\ra M_3(\C)_{sa}$ by setting 
\[
T = \frac{3257}{6884} \text{Ad}_{K_1} + \frac{450}{1721} \text{Ad}_{K_2} + \frac{450}{1721} \text{Ad}_{K_3} + \frac{27}{6884} \text{Ad}_{K_4},
\]
where $\text{Ad}_{K}$ denotes the linear map $X\mapsto KXK^\dagger$ and 
\begin{align*}
K_1 &= \frac{1}{\sqrt{2}}\begin{pmatrix} 1 & 0 & 0 \\ 0 & 1 & 0\\ 0 & 0 & 0\end{pmatrix} , \\ 
K_2 &= \frac{1}{2}\begin{pmatrix} 0 & \frac{1}{6}\sqrt{\frac{131}{2}} & 0 \\ \frac{1}{6}\sqrt{\frac{131}{2}} & 0 & -\frac{3}{5}\\ \frac{1}{30} & 0 & 0\end{pmatrix} ,\\
K_3 &= \frac{1}{2}\begin{pmatrix} \frac{1}{6}\sqrt{\frac{131}{2}} & 0 & \frac{3}{5} \\ 0 & -\frac{1}{6}\sqrt{\frac{131}{2}} & 0\\ 0 & \frac{1}{30} & 0\end{pmatrix}, \\
K_4 &= \frac{1}{\sqrt{3}}\begin{pmatrix} 0 & 1 & 0 \\ -1 & 0 & 0\\ 0 & 0 & 1\end{pmatrix}.
\end{align*}
These parameters and operators are chosen such that the Choi operator $C_T$ is the quantum state constructed in~\cite{vertesi2014disproving} providing a counterexample to Peres' conjecture. In particular, the map $T$ is completely positive and completely copositive, i.e., the composition $\vartheta_3T$ of $T$ with the transpose map $\vartheta_3:M_3(\C)_{sa}\ra M_3(\C)_{sa}$ (with respect to any fixed basis) is completely positive. Next, we consider the normalized vectors  
\begin{align*}
\ket{a_1} &= \frac{1}{5}\lb -\ket{0} + \sqrt{3}\ket{1} + \sqrt{21}\ket{2}\rb \\ 
\ket{a_2} &= \frac{1}{5}\lb 2\ket{0} + \sqrt{21}\ket{2}\rb ,\\
\ket{a_3} &= \frac{1}{5}\lb -\ket{0} - \sqrt{3} + \sqrt{21}\ket{2}\rb ,
\end{align*}
and define contractions $A_1,A_2,A_3\in M_3(\C)_{sa}$ by setting $A_i = 2\proj{a_i}{a_i}-\one_3$ for every $i\in\lset 1,2,3\rset$. This set of vectors is found in~\cite{vertesi2014disproving} and used to specify a particular collection of projective measurements exhibiting nonlocality of the quantum state $C_T$. The operators $A_i$ are the observables corresponding to those measurements. We use them to define a positive map $A\in \Pos(\cC_{\ell^3_1},M_3(\C)^+)$ by 
\[
A(x) = x_0 \one_3 + x_1 A_1 + x_2 A_2 + x_3 A_3 .
\]
Note that the composition $TA\in \Pos(\cC_{\ell^3_1},M_3(\C)^+)$ is not entanglement breaking since there is a map $B\in \Pos(M_3(\C)^+,\cC_{\ell^3_1})$ given by 
\[
B(X) = \Tr\lbr B_0 X\rbr e_0 + \Tr\lbr B_1 X\rbr e_1+ \Tr\lbr B_2 X\rbr e_2 + \Tr\lbr B_3 X\rbr e_3 ,
\]
with $B_0=\one_3$ and 
\begin{align*}
B_1 &= \begin{pmatrix} 1/2 & 28/97 & -28/97 \\ 
28/97 & 1/6 & -1/6 \\ 
-28/97 & -1/6 & -1/3\end{pmatrix} , \\
B_2 &= \begin{pmatrix} 0 & 0 & 0 \\ 
0 & 2/3 & 1/3 \\ 
0 & 1/3 & -1/3\end{pmatrix} ,\\
B_3 &= \begin{pmatrix} 1/2 & -28/97 & 28/97 \\ 
-28/97 & 1/6 & -1/6 \\ 
28/97 & -1/6 & -1/3\end{pmatrix} ,
\end{align*}
satisfying $\Tr\lbr BTA\rbr <0$. By cyclicity of the trace (and since entanglement breaking maps have non-negative trace) we find that $TAB\in\Pos\lb M_3(\C)^+,M_3(\C)^+\rb$ is not entanglement breaking. Finally, we prove that $TAB\in \LorEB(M_3(\C)^+,M_3(\C)^+)$ by showing that the map $TA\in \LorEB(\cC_{\ell^3_1},M_3(\C)^+)$. Identifying $\cL_3$ and $M_2(\C)^+$ using the Bloch ball representation (see, e.g.,~\cite[Section 2.1.2.]{aubrun2017alice}) and using Corollary \ref{cor:LorEBCriteriaReducedDim} it is enough to show that $QTA\in \EB(\cC_{\ell^3_1},M_2(\C)^+)$ for any $Q\in \Pos\lb M_3(\C)^+,M_2(\C)^+\rb$. By~\cite{woronowicz1976positive}, any $Q\in \Pos\lb M_3(\C)^+,M_2(\C)^+\rb$ can be written as a sum $Q=S_1 + \vartheta_2S_2$ with completely positive maps $S_1,S_2:M_3(\C)_{sa}\ra M_2(\C)_{sa}$ and the transpose map $\vartheta_2:M_2(\C)_{sa}\ra M_2(\C)_{sa}$ (in any fixed basis). Since $T:M_3(\C)_{sa}\ra M_3(\C)_{sa}$ is completely positive and completely copositive, it breaks entanglement with the cone $M_2(\C)^+$ as shown in~\cite[Theorem III.1.]{christandl2019composed}. We conclude that the compositions $S_1T$ and $S_2T$ are both entanglement breaking. Finally, this implies that the composition $QT$ is entanglement breaking for any $Q\in \Pos\lb M_3(\C)^+,M_2(\C)^+\rb$ and we conclude that $QTA\in \EB(\cC_{\ell^3_1},M_2(\C)^+)$. 
\end{example}

Since the map $P=TAB\in \LorEB(M_3(\C)^+,M_3(\C)^+)$ from the previous example has negative trace, we conclude by duality that the identity map $\ident_3\in \Pos(M_3(\C)^+,M_3(\C)^+)$ does not factor through a Lorentz cone (see~\eqref{equ:LorFactDef}). This could have also been deduced from results in~\cite{fawzi2019representing} albeit in a more involved way.

\section*{Acknowledgments}

We thank Guillaume Aubrun for pointing out the retraction property of Lorentz cones (Lemma~\ref{lem:RetractPropLor}) and that the closure is not needed in the Lorentzian tensor product. We thank Roy Araiza for helpful discussions and Emilie Mai Elkiær for helpful comments on ellipses. The authors acknowledge funding from The Research Council of Norway (project 324944). 

\section*{Data availability}

No data was used for the research described in the article.

\appendix

\section{Structure of \texorpdfstring{$\Pos(\cL_k,\cC_{\ell^n_\infty})$}{Pos(Lk,Cellinfty)}}

After normalizing the upper left corner, we may think of the maps $P\in \Pos(\cL_k,\cC_{\ell^n_\infty})$ as matrices
\begin{equation}\label{equ:Pform}
P=\begin{pmatrix} 1 & w^T \\ x & B\end{pmatrix} ,
\end{equation}
with $w\in \R^k$, $x\in \R^{n}$ and $B\in \R^{n\times k}$. We have $P\in\Pos(\cL_k,\cC_{\ell^n_\infty})$ if and only if $(1,w)\pm(x_i,b_i)\in \cL_k$ for every $i\in\lset 1,\ldots ,n\rset$ and where $b_i\in\R^k$ denote the rows of $B$. In particular, we have $1\pm x_i\geq 0$ for any $i\in\lset 1,\ldots ,n\rset$. Let us fix a vector $w\in \R^k$ and consider $\Pos_w(\cL_k,\cC_{\ell^n_\infty})$ to be the closed and convex set of positive maps with the specified vector $w$ appearing in the first row. Since $(1,w)\in \cL_k$ we have $\|w\|_2\leq 1$. We aim to find the extreme points of the set $\Pos_w(\cL_k,\cC_{\ell^n_\infty})$. 

Let us first consider the case where $\|w\|_2 = 1$ implying that $(1,w)$ is extremal in $\cL_k$. If $(x,b)$ is a row occuring in a matrix $P\in \Pos_w(\cL_k,\cC_{\ell^n_\infty})$, then we have $(1,w)\pm (x,b)\in \cL_k$. Since  
\[
(1,w) = \frac{1}{2}((1,w)+ (x,b)) + \frac{1}{2}((1,w)- (x,b)) ,
\]
we conclude that $(x,b)=\alpha(1,w)$ for some $\alpha\in \lbr -1,1\rbr$. If $P$ is extremal in $\Pos_w(\cL_k,\cC_{\ell^n_\infty})$, then we have $\alpha=1$ or $\alpha=-1$. We conclude that $P\in \Pos_w(\cL_k,\cC_{\ell^n_\infty})$ for $\|w\|_2=1$ is extremal if and only if $(x_i,b_i)\in\lset \pm(1,w)\rset$ for all $i\in\lset 1,\ldots ,n\rset$.

Let us now consider the case where $\|w\|_2<1$. We start by defining the set 
\[
\mathcal{K}_w = \lset (x,b)\in \R\times \R^k ~:~ 1\pm x \geq \|b\pm w\|_2\rset .
\]
Clearly, we have $P\in \text{Pos}_w(\cL_k,\cC_{\ell^n_\infty})$ if and only if it is of the form \eqref{equ:Pform} with rows $(x_i,b_i)\in \mathcal{K}_w$. To understand the structure of the set $\mathcal{K}_w$ we consider the two affine ice-cream cones 
\begin{align*}
\mathcal{A}_w &:= \lset (x,b)\in \R\times \R^k ~:~ 1 - x \geq \|b - w\|_2\rset , \\
\mathcal{B}_w &:= \lset (x,b)\in \R\times \R^k ~:~ 1 + x \geq \|b + w\|_2\rset.
\end{align*}
Clearly, $\mathcal{K}_w=\mathcal{A}_w\cap \mathcal{B}_w$. Consider the hyperplane $\cH\subset \R\times \R^k$ perpendicular to $(1,-w)$. We have the following lemma:

\begin{lem}\label{lem:intersectHyperplane}
$\mathcal{A}_w\cap \cH = \mathcal{B}_w\cap \cH = \mathcal{K}_w\cap \cH$.
\end{lem}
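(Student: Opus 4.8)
The plan is to deduce the whole lemma from one short algebraic identity valid on $\cH$. First I would fix notation: $\cH=\lset (x,b)\in\R\times\R^k : x=\langle b,w\rangle\rset$ is the linear hyperplane with normal $(1,-w)$. Two preliminary reductions remove most of the work. Since $\mathcal{K}_w=\mathcal{A}_w\cap\mathcal{B}_w$, once we know $\mathcal{A}_w\cap\cH=\mathcal{B}_w\cap\cH$ it follows immediately that $\mathcal{K}_w\cap\cH$ equals both of them, so it suffices to prove $\mathcal{A}_w\cap\cH=\mathcal{B}_w\cap\cH$. For this I would observe that the linear involution $\sigma\colon(x,b)\mapsto(-x,-b)$ maps $\cH$ onto itself and interchanges $\mathcal{A}_w$ and $\mathcal{B}_w$ (because $\|{-b}\pm w\|_2=\|b\mp w\|_2$), hence interchanges $\mathcal{A}_w\cap\cH$ and $\mathcal{B}_w\cap\cH$; applying $\sigma$ to one inclusion produces the other, so it is enough to prove the single inclusion $\mathcal{A}_w\cap\cH\subseteq\mathcal{B}_w\cap\cH$.

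The heart of the matter is the observation that on $\cH$ the two quadratic defining quantities coincide: expanding $\|b\mp w\|_2^2=\|b\|_2^2\mp 2\langle b,w\rangle+\|w\|_2^2$ and using $\langle b,w\rangle=x$ gives
\[
(1-x)^2-\|b-w\|_2^2 \;=\; 1+x^2-\|b\|_2^2-\|w\|_2^2 \;=\; (1+x)^2-\|b+w\|_2^2 .
\]
Now take $(x,b)\in\mathcal{A}_w\cap\cH$, so that $1-x\geq\|b-w\|_2\geq 0$; squaring this (legitimate since both sides are nonnegative) and feeding it into the identity yields $(1+x)^2\geq\|b+w\|_2^2$, that is, $|1+x|\geq\|b+w\|_2$.

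The only remaining point is to upgrade $|1+x|\geq\|b+w\|_2$ to $1+x\geq\|b+w\|_2$, i.e.\ to check $1+x\geq 0$, and this is the one place where the standing hypothesis $\|w\|_2<1$ enters. If $w=0$ then $x=\langle b,0\rangle=0$ and there is nothing to prove. If $w\neq 0$, I would apply Cauchy--Schwarz to $\langle b-w,w\rangle=x-\|w\|_2^2$ together with $\|b-w\|_2\leq 1-x$ to obtain $1-x\geq\|w\|_2-x/\|w\|_2$, and then rearrange using $0<\|w\|_2<1$ to conclude $x\geq-\|w\|_2$, whence $1+x\geq 1-\|w\|_2>0$. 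This places $(x,b)$ in $\mathcal{B}_w$, which finishes the inclusion and hence the lemma. The whole argument is elementary and has no genuine obstacle: the only content is spotting the quadratic identity, and the only step needing a little care is the sign check just described, since the squared identity by itself cannot distinguish the two nappes of the cones $\mathcal{A}_w$ and $\mathcal{B}_w$.
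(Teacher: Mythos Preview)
Your proof is correct and follows essentially the same route as the paper's: both use the quadratic identity relating $(1-x)^2-\|b-w\|_2^2$ and $(1+x)^2-\|b+w\|_2^2$ on $\cH$ (the paper writes it as $\|b+w\|_2^2=\|b-w\|_2^2+4\langle b,w\rangle$), and both appeal to Cauchy--Schwarz together with $\|w\|_2<1$ to secure the sign of $1+x$. Your use of the involution $\sigma$ to reduce to a single inclusion is a small presentational nicety absent from the paper, but the substance is identical.
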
 

\begin{proof}
For $(x,b)\in \mathcal{A}_w\cap \cH$ we have $1 - x \geq \|b - w\|_2$ and $x=\braket{b}{w}$. Using the Cauchy-Schwarz inequality we have
\[
\|w\|^2-\braket{b}{w} \leq \|w\| \|w-b\| \leq \|w\|(1-\braket{b}{w}),
\] 
which implies
\begin{equation}\label{equ:crucial}
0\leq (1-\|w\|)(\|w\|+\braket{b}{w})\leq (1-\|w\|)(1+\braket{b}{w}),
\end{equation}
where we used that $\|w\|<1$. We conclude that $1+x=1+\braket{b}{w}\geq 0$. Finally, we compute 
\[
\|b+w\|^2_2 = \|b-w\|^2_2 + 4\braket{b}{w} \leq (1-x)^2+4x = (1+x)^2.
\]
Since $1+x\geq 0$, we find that $(x,b)\in \mathcal{B}_w\cap \cH$. The other inclusion works the same way showing that $\mathcal{A}_w\cap \cH = \mathcal{B}_w\cap \cH$. Since $\mathcal{K}_w$ is the intersection of $\mathcal{A}_w$ and $\mathcal{B}_w$ the proof is finished.
\end{proof}

The first inequality in \eqref{equ:crucial} and its analogue for the other choice of sign imply in particular that $|\braket{b}{w}|\leq \|w\|_2$ whenever $(\braket{b}{w},b)\in \mathcal{K}_w\cap \cH$. A consequence of this is the following lemma. 

\begin{lem}\label{lem:normb}
If $b\in\R^k$ is such that $(\braket{b}{w},b)\in \mathcal{K}_w\cap \cH$, then we have $\|b\|_2\leq 1$.
\end{lem}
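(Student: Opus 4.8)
The plan is to combine the two facts already in hand: that a point of $\mathcal{K}_w\cap\cH$ satisfies $x=\braket{b}{w}$ together with the defining inequality $1-x\geq\|b-w\|_2$ of $\mathcal{A}_w$ (equivalently the defining inequality of $\mathcal{B}_w$, by Lemma~\ref{lem:intersectHyperplane}), and the bound $|\braket{b}{w}|\leq\|w\|_2$ recorded immediately after~\eqref{equ:crucial}.

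First I would note that for $(\braket{b}{w},b)\in\mathcal{K}_w\cap\cH$ membership in $\mathcal{A}_w$ gives $1-\braket{b}{w}\geq\|b-w\|_2\geq 0$, so both sides may be squared. Writing $x=\braket{b}{w}$ and expanding $\|b-w\|_2^2=\|b\|_2^2-2\braket{b}{w}+\|w\|_2^2=\|b\|_2^2-2x+\|w\|_2^2$, the inequality $(1-x)^2\geq\|b-w\|_2^2$ becomes $1-2x+x^2\geq\|b\|_2^2-2x+\|w\|_2^2$, which reduces, after cancelling the common term $-2x$, to $\|b\|_2^2\leq 1+x^2-\|w\|_2^2$.

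Second, I would substitute the bound $x^2=\braket{b}{w}^2\leq\|w\|_2^2$ coming from the remark following~\eqref{equ:crucial}. This yields $\|b\|_2^2\leq 1+x^2-\|w\|_2^2\leq 1$, which is exactly the claim. There is no genuine obstacle here: the statement follows from a single algebraic manipulation together with facts already established, and the only point requiring a moment's care is checking that $1-\braket{b}{w}\geq 0$ so that squaring is legitimate, which is immediate from membership in $\mathcal{A}_w$. (Geometrically this just records that the base of the affine ice-cream cone $\mathcal{A}_w$ cut out by $\cH$ lies inside the unit ball; the direct computation above is the cleanest way to see it.)
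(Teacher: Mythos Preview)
Your proof is correct and follows essentially the same approach as the paper's: square one of the defining inequalities of $\mathcal{K}_w$, simplify to obtain $\|b\|_2^2\leq 1+\braket{b}{w}^2-\|w\|_2^2$, and then invoke the bound $|\braket{b}{w}|\leq\|w\|_2$ established after~\eqref{equ:crucial}. The paper reaches the identical intermediate inequality (in the form $\|b\|_2^2\leq 1+(\alpha^2-1)\|w\|_2^2$ with $\alpha=\braket{b}{w}/\|w\|_2$) via an orthogonal decomposition $b=\alpha\tfrac{w}{\|w\|_2}+\beta v$ and the $\mathcal{B}_w$-inequality; your direct expansion using the $\mathcal{A}_w$-inequality is marginally cleaner but not a different argument.
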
 

\begin{proof}
Let $\alpha,\beta\in\R$ be such that 
\[
b = \alpha\frac{w}{\|w\|_2} + \beta v ,
\]
for some unit vector $v\in \R^k$ perpendicular to $w$. Note that $\braket{b}{w}=\alpha \|w\|_2$. By definition of $\mathcal{K}_w$ we have 
\[
\alpha^2 + \beta^2 + 2\alpha\|w\| + \|w\|^2 = \|b + w\|^2_2 \leq (1 + \alpha \|w\|)^2 = 1 + 2\alpha\|w\| + \alpha^2\|w\|^2 ,
\]
and hence 
\[
\|b\|^2_2 = \alpha^2+\beta^2 \leq 1 + (\alpha^2-1)\|w\|^2 .
\]
Since $|\braket{b}{w}|\leq \|w\|_2$ by the comment preceeding the statement of the lemma, we find that $|\alpha|=|\braket{b}{w}|/\|w\|_2 \leq 1$. We conclude that $\|b\|^2_2\leq 1$.
\end{proof}

We can now characterize the extreme points of $\mathcal{K}_w$.

\begin{lem}
For $\|w\|_2<1$, the extreme points of the closed convex set $\mathcal{K}_w$ are the two points $(1,w)$ and $(-1,-w)$, and the extreme points of $\mathcal{K}_w\cap \cH$. 
\end{lem}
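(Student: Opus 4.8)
The key observation is that $\mathcal{K}_w=\mathcal{A}_w\cap\mathcal{B}_w$ is the intersection of two affine images of the Lorentz cone $\cL_k$: the substitution $(x,b)\mapsto(1-x,b-w)$ identifies $\mathcal{A}_w$ with $(1,w)-\cL_k$, and $(x,b)\mapsto(1+x,b+w)$ identifies $\mathcal{B}_w$ with $(-1,-w)+\cL_k$. Since $\cL_k$ is pointed, the unique extreme point of $\mathcal{A}_w$ is its apex $(1,w)$ and the unique extreme point of $\mathcal{B}_w$ is its apex $(-1,-w)$. Because $\|w\|_2<1$ one checks directly that $(1,w),(-1,-w)\in\mathcal{K}_w$ (e.g.\ $2=1+1\ge\|2w\|_2$) and moreover $(1,w)\in\text{int}(\mathcal{B}_w)$ while $(-1,-w)\in\text{int}(\mathcal{A}_w)$. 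As an extreme point of a convex set is extreme in every subset containing it, and $(1,w)$ is extreme in $\mathcal{A}_w\supseteq\mathcal{K}_w$ while $(-1,-w)$ is extreme in $\mathcal{B}_w\supseteq\mathcal{K}_w$, both are extreme points of $\mathcal{K}_w$. I would also record the elementary fact that $\partial\mathcal{A}_w\cap\partial\mathcal{B}_w\subseteq\cH$: if $1-x=\|b-w\|_2$ and $1+x=\|b+w\|_2$, squaring and subtracting yields $4x=4\langle b,w\rangle$, i.e.\ $x=\langle b,w\rangle$; this is the manipulation already used in the proof of Lemma~\ref{lem:intersectHyperplane}.

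With these two facts in hand, the remaining content is: (i) every extreme point of $\mathcal{K}_w\cap\cH$ is extreme in $\mathcal{K}_w$, and (ii) every extreme point of $\mathcal{K}_w$ other than $(1,w),(-1,-w)$ is extreme in $\mathcal{K}_w\cap\cH$. For (ii), suppose $p=(x_p,b_p)$ is extreme in $\mathcal{K}_w$ and $p\in\text{int}(\mathcal{A}_w)$. Then a ball $B(p,\delta)\subseteq\mathcal{A}_w$; if $p$ were not extreme in $\mathcal{B}_w$ it would be the proper midpoint of a segment in $\mathcal{B}_w$, and shrinking that segment into $B(p,\delta)$ exhibits $p$ as a proper midpoint of a segment in $\mathcal{A}_w\cap\mathcal{B}_w=\mathcal{K}_w$ — contradiction. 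Hence $p$ would be extreme in $\mathcal{B}_w$, i.e.\ $p=(-1,-w)$, contrary to assumption. So $p\in\partial\mathcal{A}_w$, and symmetrically $p\in\partial\mathcal{B}_w$, whence $p\in\cH$ by the recorded fact; then $p\in\mathcal{K}_w\cap\cH$ and, being extreme in $\mathcal{K}_w$, it is a fortiori extreme in the smaller set $\mathcal{K}_w\cap\cH$.

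For (i), let $p=(x_p,b_p)$ be extreme in $\mathcal{K}_w\cap\cH=\mathcal{A}_w\cap\cH$ (Lemma~\ref{lem:intersectHyperplane}). I would first argue $\|b_p-w\|_2=1-x_p$: otherwise the inequality defining $\mathcal{A}_w$ is strict at $p$, and then, by the computation in the proof of Lemma~\ref{lem:intersectHyperplane} together with $|x_p|=|\langle b_p,w\rangle|\le\|w\|_2<1$ (the comment following that lemma, which also gives $1+x_p>0$), the inequality defining $\mathcal{B}_w$ is strict at $p$ as well; hence $p\in\text{int}(\mathcal{K}_w)$, so $p$ is a relative interior point of $\mathcal{K}_w\cap\cH$ inside $\cH$ and cannot be extreme there since $\dim\cH=k\ge1$. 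Consequently also $\|b_p+w\|_2=1+x_p$. Now if $p=\tfrac12(q_1+q_2)$ with $q_i=(x_i,b_i)\in\mathcal{K}_w$, then $\|b_i-w\|_2\le 1-x_i$ and the triangle inequality give
\[
1-x_p=\|b_p-w\|_2=\left\|\frac{(b_1-w)+(b_2-w)}{2}\right\|_2\le\frac{\|b_1-w\|_2+\|b_2-w\|_2}{2}\le\frac{(1-x_1)+(1-x_2)}{2}=1-x_p,
\]
so equality holds throughout and $\|b_i-w\|_2=1-x_i$, i.e.\ $q_i\in\partial\mathcal{A}_w$; symmetrically $q_i\in\partial\mathcal{B}_w$, hence $q_i\in\cH$ by the recorded fact, so $q_i\in\mathcal{K}_w\cap\cH$ and extremality of $p$ there forces $q_1=q_2=p$. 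The main point to be careful about is precisely this last step: one must verify that a midpoint lying on $\partial\mathcal{A}_w\cap\partial\mathcal{B}_w$ pins both endpoints onto the same pair of boundaries (the equality case of the triangle inequality above) before invoking $\partial\mathcal{A}_w\cap\partial\mathcal{B}_w\subseteq\cH$; the rest is bookkeeping on which inequalities are strict.
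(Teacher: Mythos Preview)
Your proof is correct, but it follows a genuinely different route from the paper's. The paper argues geometrically by ray-shooting from an apex: for any $(x,b)\in\mathcal{K}_w$ with $x>\langle b,w\rangle$, the line from $(1,w)$ through $(x,b)$ meets $\cH$ in a point of $\mathcal{A}_w\cap\cH=\mathcal{K}_w\cap\cH$, so $(x,b)$ is a proper convex combination of $(1,w)$ and a point of the section (and symmetrically below $\cH$). This immediately gives the bicone decomposition $\mathcal{K}_w=\conv\big(\{(1,w),(-1,-w)\}\cup(\mathcal{K}_w\cap\cH)\big)$, and the converse direction is then dispatched by noting that the segment between the two apices meets $\cH$ at $(0,0)$, which is a relative interior point of the section. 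You instead run a boundary analysis: an extreme point of $\mathcal{A}_w\cap\mathcal{B}_w$ lying in $\text{int}(\mathcal{A}_w)$ would have to be extreme in $\mathcal{B}_w$ and hence equal to its apex, forcing every non-apex extreme point onto $\partial\mathcal{A}_w\cap\partial\mathcal{B}_w\subseteq\cH$; and for the converse you use the equality case of the triangle inequality to pin both endpoints of any decomposing segment onto $\partial\mathcal{A}_w\cap\partial\mathcal{B}_w$. The paper's argument is shorter and exposes the bicone structure of $\mathcal{K}_w$ directly, whereas your approach is more self-contained and avoids the convex-hull decomposition altogether; in particular, your treatment of direction~(i) is more explicit than the paper's one-sentence appeal to $(0,0)$ being interior.
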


\begin{proof}
The points $(1,w)$ and $(-1,-w)$ are the apex points of the affine cones $\mathcal{A}_w$ and $\mathcal{B}_w$, respectively. They are clearly contained in $\mathcal{K}_w$ and, since $\mathcal{K}_w=\mathcal{A}_w\cap \mathcal{B}_w$, they are extremal. Consider any point $(x,b)\in \mathcal{K}_w$. Let us assume that $x > \braket{b}{w}$. We may define a point $(x',b')\in \cH$ as the intersection of $\cH$ with the line from $(1,w)$ through $(x,b)$. Since $(x,b)$ and $(1,w)$ both belong to the affine cone $\mathcal{A}_w$ we conclude that $(x',b')\in \mathcal{A}_w\cap \cH = \mathcal{K}_w\cap \cH$ by Lemma \ref{lem:intersectHyperplane}. Now, we can express $(x,b)$ as a convex combination of $(1,w)$ and the point $(x',b')\in \mathcal{K}_w\cap \cH$. The case where $x < \braket{b}{w}$ works analogously, and we have shown that the extreme points of $\mathcal{K}_w$ different from $(1,w)$ and $(-1,-w)$ are contained in the set of extreme points of $\mathcal{K}_w\cap \cH$. Finally, we note that the line connecting $(1,w)$ and $(-1,-w)$ intersects $\cH$ in the point $(0,0)$, which is an interior point of $\mathcal{K}_w\cap \cH$ as $\|w\|_2<1$. Hence, all extremal points of $\mathcal{K}_w\cap \cH$ are also extremal in $\mathcal{K}_w$.
\end{proof}

Since $\mathcal{A}_w$ and $\mathcal{B}_w$ are affine ice-cream cones with $(0,0)$ in their respective interior, the convex set $\mathcal{K}_w\cap \cH= \mathcal{A}_w\cap \cH = \mathcal{B}_w\cap \cH$ is an ellipsoid. The following theorem identifies an important property of the extreme points of $\mathcal{K}_w$. 

\begin{thm}
For $\|w\|_2<1$, the extreme points of $\mathcal{K}_w$ are of the form
\[
\begin{cases} (1,w) \\ (-1,-w) \\ (\braket{b}{w},b) \end{cases} ,
\]
where in the last case $b\in\R^{k}$ satisfies $\|b\|_2\leq 1$.
\end{thm}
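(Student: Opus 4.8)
The plan is to assemble the statement directly from the two preceding lemmas together with the explicit description of the hyperplane $\cH$. Recall that $\cH$ is the hyperplane through the origin perpendicular to $(1,-w)$, so that $(x,b)\in\cH$ if and only if $x=\braket{b}{w}$; this is exactly the identity isolated in the proof of Lemma~\ref{lem:intersectHyperplane}. In particular, every element of $\mathcal{K}_w\cap\cH$ is automatically of the form $(\braket{b}{w},b)$ for some $b\in\R^{k}$.

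First I would invoke the preceding lemma, which states that the extreme points of the closed convex set $\mathcal{K}_w$ are precisely the apex points $(1,w)$ and $(-1,-w)$ of the affine ice-cream cones $\mathcal{A}_w$ and $\mathcal{B}_w$, together with the extreme points of the slice $\mathcal{K}_w\cap\cH$. The first two of these are already in the claimed list, so it remains only to describe the extreme points of $\mathcal{K}_w\cap\cH$.

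By the description of $\cH$ above, any point of $\mathcal{K}_w\cap\cH$ — extreme or not — has the form $(\braket{b}{w},b)$, and since such a point lies in $\mathcal{K}_w\cap\cH$, Lemma~\ref{lem:normb} yields $\|b\|_2\leq 1$. Combining the three cases gives exactly the asserted description of the extreme points of $\mathcal{K}_w$.

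There is essentially no obstacle here: the substantive work was already carried out in Lemmas~\ref{lem:intersectHyperplane} and~\ref{lem:normb} and in the preceding lemma, and the only point to keep in mind is that the third case of the theorem is a \emph{necessary} form for an extreme point rather than a claim that every $(\braket{b}{w},b)$ with $\|b\|_2\leq 1$ is extremal (in fact only the relative-boundary points of the ellipsoid $\mathcal{K}_w\cap\cH$ are). If a sharper statement were desired, one could add that $\mathcal{K}_w\cap\cH$ is an ellipsoid and hence equal to the convex hull of its relative boundary, but this refinement is not needed for the statement as phrased.
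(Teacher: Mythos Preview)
Your proposal is correct and matches the paper's approach: the theorem is stated in the paper without a separate proof, precisely because it follows immediately from the preceding lemma (extreme points of $\mathcal{K}_w$ are $(1,w)$, $(-1,-w)$, and the extreme points of $\mathcal{K}_w\cap\cH$) combined with Lemma~\ref{lem:normb} and the description of $\cH$ as the set of $(x,b)$ with $x=\braket{b}{w}$. Your observation that the third case gives only a necessary form, not a complete characterization of which $(\braket{b}{w},b)$ are extreme, is also exactly right.
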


Using the connection to maps in $\text{Pos}_w(L_k,C_{\ell^n_\infty})$, we have the following:

\begin{thm}\label{thm:ExtrPosLkClinfty}
If the linear map $P:\R^{k+1}\ra \R^{n+1}$ is an extreme point of $\text{Pos}_w(\cL_k,\cC_{\ell^n_\infty})$, then either of the following conditions hold:
\begin{enumerate}
\item We have $\|w\|_2=1$ and there is vector $s\in \lset\pm 1\rset^n$ such that
\[
P = \begin{pmatrix} 1 & w^T \\ s & sw^T\end{pmatrix} .
\]
\item We have $\|w\|_2<1$, there is a partition $n=n_1+n_2$ for some $n_1,n_2\in\N$, a vector $s\in \lset \pm 1\rset^{n_1}$, and a contraction $B:\ell^k_2\ra \ell^{n_2}_\infty$ such that
\[
P = \begin{pmatrix} 1 & w^T \\ s & sw^T \\ Bw & B\end{pmatrix} ,
\]
up to a permutation of rows.
\end{enumerate}  
\end{thm}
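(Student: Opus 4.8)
The plan is to reduce the statement to the structure of the admissible rows, which has already been worked out in this appendix. The positivity condition defining $\Pos_w(\cL_k,\cC_{\ell^n_\infty})$ decouples over the rows: by the criterion recorded at the start of the appendix, a map $P=\begin{pmatrix}1 & w^T\\ x & B\end{pmatrix}$ lies in $\Pos_w(\cL_k,\cC_{\ell^n_\infty})$ precisely when every pair $(x_i,b_i)$ — with $x_i$ the $i$-th entry of $x$ and $b_i\in\R^k$ the $i$-th row of $B$ — satisfies $(1,w)\pm(x_i,b_i)\in\cL_k$, i.e.\ lies in the set $\mathcal{K}_w$. Consequently $P\mapsto\big((x_1,b_1),\ldots,(x_n,b_n)\big)$ is an affine isomorphism of $\Pos_w(\cL_k,\cC_{\ell^n_\infty})$ with the $n$-fold Cartesian product $\mathcal{K}_w^{\,n}$ (for $\|w\|_2=1$ the set $\mathcal{K}_w$ is just the segment $\lset\alpha(1,w):\alpha\in[-1,1]\rset$, as observed in the discussion of that case above).

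Next I would invoke the elementary fact that an extreme point of a Cartesian product of convex sets has each of its coordinates extreme in the corresponding factor: if some coordinate $e_i$ equaled a nontrivial midpoint $\tfrac12(a+c)$ with $a,c\in\mathcal{K}_w$, then perturbing only that coordinate would exhibit the whole tuple as a nontrivial convex combination in $\mathcal{K}_w^{\,n}$. Thus every row of an extreme point $P$ of $\Pos_w(\cL_k,\cC_{\ell^n_\infty})$ is an extreme point of $\mathcal{K}_w$, and plugging in the known list of these finishes the proof. If $\|w\|_2=1$, the extreme points of $\mathcal{K}_w$ are $\pm(1,w)$, so every row is $s_i(1,w)$ for a sign $s_i\in\lset\pm1\rset$, giving form (1). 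If $\|w\|_2<1$, the theorem immediately preceding the present statement lists the extreme points of $\mathcal{K}_w$ as $\pm(1,w)$ together with the points $(\braket{b}{w},b)$ with $\|b\|_2\leq1$; reordering the rows (permissible, since permuting factors permutes the extreme points of the product) so that those of the first type come first, collecting their signs into a vector $s\in\lset\pm1\rset^{n_1}$, and stacking the remaining rows $b_i$ into a matrix $B$, one arrives at the block form in (2): the $\ell^k_2\to\ell^{n_2}_\infty$ operator norm of $B$ equals $\max_i\|b_i\|_2\leq1$, so $B$ is a contraction, and the zeroth entry of its $i$-th row is $\braket{b_i}{w}=(Bw)_i$, so the leading column of that block is $Bw$.

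I do not anticipate a real obstacle: once the two ingredients are in place — that positivity is a row-by-row condition, and the extreme-point structure of $\mathcal{K}_w$ established above — the rest is bookkeeping. The points that require a little care are verifying that the positivity constraint genuinely separates over the rows (immediate from the membership criterion for $\Pos(\cL_k,\cC_{\ell^n_\infty})$) and reading the matrix form off correctly, in particular that the zeroth column of the elliptical block is forced to equal $Bw$ rather than being a free vector. One could additionally record the converse — that every map of the two displayed forms is an extreme point, by the same product principle — although this is not required for the stated implication.
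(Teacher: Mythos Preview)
Your proposal is correct and matches the paper's approach exactly: the paper presents this theorem without a standalone proof, as an immediate consequence of the preceding analysis showing that $\Pos_w(\cL_k,\cC_{\ell^n_\infty})$ is affinely the product $\mathcal{K}_w^{\,n}$ and identifying the extreme points of $\mathcal{K}_w$. You have simply made explicit the product-of-convex-sets argument and the bookkeeping that the paper leaves to the reader.
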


We close this section by discussing the important special case where $w=0$ of Theorem \ref{thm:ExtrPosLkClinfty}. Note that each cone $\cC_{X}$ over a finite-dimensional normed space $X$ has a distinguished base cut out by the functional $\phi_X:\R\times X\ra \R$ given by $\phi_X(t,x)=t$. We call a map $L:\R\times X\ra \R\times Y$ \emph{dual-normalized} if $L\circ\phi_X=\phi_Y$. It is then easy to see that the set of dual-normalized maps in $\Pos\lb\cC_{\ell^n_\infty},\cL_k\rb$ coincides with the set $\Pos_0(\cL_k,\cC_{\ell^n_\infty})$ defined above. Theorem \ref{thm:ExtrPosLkClinfty} gives the following characterization of its extreme points.

\begin{lem}\label{lem:ExtremePointsTPLkCinfty}
A linear map $Q:\R^{k+1}\ra \R^{n+1}$ is extremal in the set of dual-normalized positive maps, i.e., in the set $\Pos_0(\cL_k,\cC_{\ell^n_\infty})$, if and only if there exists a partition $n=n_1+n_2$ for $n_1,n_2\in\N$ and a vector $s\in \lset \pm 1\rset^{n_1}$ and an extremal contraction $A:\ell^k_2\ra \ell^{n_2}_\infty$ such that, up to a permutation of rows, we have
\[
Q = \begin{pmatrix} 1 & 0 \\ s & 0 \\ 0 & A\end{pmatrix} = \begin{pmatrix} 1 & 0 \\ s & 0 \\ 0 & \one_{n_2}\end{pmatrix} \begin{pmatrix} 1 & 0 \\ 0 & A\end{pmatrix}.
\]
\end{lem}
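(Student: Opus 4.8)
The plan is to realize $\Pos_0(\cL_k,\cC_{\ell^n_\infty})$ as a direct product of $n$ copies of a single convex set -- one factor for each output coordinate -- and then to use that the extreme points of a finite product of convex sets are precisely the tuples of extreme points of the factors.

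First I would rewrite the positivity condition row by row. A dual-normalized linear map $Q:\R^{k+1}\to\R^{n+1}$ has matrix form $\begin{pmatrix}1&0^T\\x&B\end{pmatrix}$, and for $(t,u)\in\cL_k$ its $i$-th nontrivial output coordinate equals $x_it+\langle b_i,u\rangle$, where $(x_i,b_i)\in\R\times\R^k$ collects the $i$-th entry of $x$ and the $i$-th row of $B$. Thus $Q\in\Pos_0(\cL_k,\cC_{\ell^n_\infty})$ if and only if for every $i$ one has $|x_it+\langle b_i,u\rangle|\le t$ for all $(t,u)\in\cL_k$; by self-duality of the Lorentz cone this is equivalent to $(1,0)\pm(x_i,b_i)\in\cL_k$, i.e.\ to $(x_i,b_i)\in\mathcal K_0$, where $\mathcal K_0$ denotes the set $\mathcal K_w$ of the appendix specialized to $w=0$. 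Since these constraints are imposed independently on the $n$ rows, the map $Q\mapsto\big((x_1,b_1),\dots,(x_n,b_n)\big)$ is an affine isomorphism $\Pos_0(\cL_k,\cC_{\ell^n_\infty})\cong\mathcal K_0^{\times n}$.

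Next I would read off the extreme points of $\mathcal K_0$ from the appendix's analysis of $\mathcal K_w$: its extreme points are $(1,w)$, $(-1,-w)$, and the extreme points of the ellipsoid $\mathcal K_w\cap\cH$. For $w=0$ the hyperplane $\cH$ is $\{x=0\}$ and $\mathcal K_0\cap\cH=\{0\}\times B_{\ell^k_2}$ is the Euclidean unit ball, whose extreme points are the unit vectors; hence the extreme points of $\mathcal K_0$ are the two points $(\pm1,0)$ together with the points $(0,b)$ with $\|b\|_2=1$. Consequently an extreme point $Q$ of $\Pos_0(\cL_k,\cC_{\ell^n_\infty})$ is exactly a matrix each of whose rows is either $(\pm1,0)$ or $(0,b)$ with $b$ a unit vector of $\ell^k_2$. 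Grouping the rows of the first kind into a sign vector $s\in\{\pm1\}^{n_1}$ and those of the second kind into a matrix $A$ -- all of whose rows are unit vectors, which is precisely the condition for $A:\ell^k_2\to\ell^{n_2}_\infty$ to be an extreme point of the unit ball of $L(\ell^k_2,\ell^{n_2}_\infty)$ (a product of $n_2$ Euclidean balls), i.e.\ an extremal contraction -- yields the asserted normal form up to a permutation of rows, allowing either block to be empty; the displayed factorization of $Q$ is then an immediate matrix identity.

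This argument also recovers, and sharpens, the $w=0$ case of Theorem~\ref{thm:ExtrPosLkClinfty}, which only forces the lower block to be \emph{some} contraction. I expect the only steps that need any care to be the equivalence of positivity of $Q$ with the per-row membership $(x_i,b_i)\in\mathcal K_0$ -- this is exactly what turns the feasible set into a product, so that extreme points can be computed factorwise -- and the identification of extremal contractions into $\ell^{n_2}_\infty$ with matrices having unit-norm rows; both are routine, and everything after them is bookkeeping built on the already-established extreme-point structure of $\mathcal K_w$.
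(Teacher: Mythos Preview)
Your proposal is correct and follows the same approach as the paper: both reduce the question to the extreme-point structure of $\mathcal K_w$ (specialized to $w=0$) via the row-by-row product decomposition $\Pos_0(\cL_k,\cC_{\ell^n_\infty})\cong\mathcal K_0^{\times n}$, which the paper uses implicitly when passing from the analysis of $\mathcal K_w$ to Theorem~\ref{thm:ExtrPosLkClinfty}. Your write-up is in fact slightly more complete, since you make the product structure explicit and supply the identification of extremal contractions $\ell^k_2\to\ell^{n_2}_\infty$ with matrices having unit-norm rows, which is what upgrades the one-sided statement of Theorem~\ref{thm:ExtrPosLkClinfty} to the ``if and only if'' asserted in the lemma.
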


\section{The set \texorpdfstring{$\maxEA_2(\cC_{\ell^2_\infty},\cL_3)$}{maxEA2(Cellinfty,L3)} is not convex}\label{app:maxEANotConv}

We have the following theorem characterizing the central elements of the max-entanglement annihilating maps $\maxEA_2(\cC_{\ell^2_\infty},\cL_k)$ for any $k\in\N$.

\begin{thm}\label{thm:appConvexityEA}
Let $\phi:\ell^2_\infty\ra \ell^n_2$ be a linear map and denote $\phi_i=\phi(e_i)\in \ell^n_2$ for the standard basis vectors $\lset e_1,e_2\rset\subseteq \ell^2_\infty$. For $\lambda\in\R$ we have $\lambda\oplus\phi\in\maxEA_2(\cC_{\ell^2_\infty},\cL_k)$ if and only if 
\[
\alpha(\phi) := \max\big\lbrace \| \phi_1 + \phi_2\|_{\ell^n_2},\| \phi_1 - \phi_2\|_{\ell^n_2}, \sqrt{\|\phi H \phi^* \|_1}, H\in\mathcal{H}  \big\rbrace \leq \lambda ,
\]
where the set $\mathcal{H}$ is given by
\[
    \mathcal{H}=\Big\lbrace \begin{pmatrix} -1 & 1 \\ 1 & 1\end{pmatrix},\begin{pmatrix} 1 & -1 \\ 1 & 1\end{pmatrix},\begin{pmatrix} 1 & 1 \\ -1 & 1\end{pmatrix},\begin{pmatrix} 1 & 1 \\ 1 & -1\end{pmatrix}\Big\rbrace .
\]
\end{thm}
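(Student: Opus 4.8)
The plan is to reduce the membership $\lambda\oplus\phi\in\maxEA_2(\cC_{\ell^2_\infty},\cL_k)$ to testing the map $P\otimes P$ (with $P:=\lambda\oplus\phi$) on the extreme rays of $\cC_{\ell^2_\infty}\otimes_{\max}\cC_{\ell^2_\infty}$, and to recognise the resulting inequalities as the four quantities defining $\alpha(\phi)$. First, positivity: $\lambda\oplus\phi\in\Pos(\cC_{\ell^2_\infty},\cL_k)$ iff $\|\phi:\ell^2_\infty\to\ell^k_2\|\le\lambda$, and since $x\mapsto\|\phi(x)\|_2$ is convex on the square $B_{\ell^2_\infty}$ its supremum is attained at a vertex $\pm e_1\pm e_2$, so this operator norm equals $\max\{\|\phi_1+\phi_2\|_{\ell^k_2},\|\phi_1-\phi_2\|_{\ell^k_2}\}$; thus the first two quantities in $\alpha(\phi)$ encode exactly $\lambda\oplus\phi\in\Pos(\cC_{\ell^2_\infty},\cL_k)$, a prerequisite for being max-entanglement annihilating. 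Assuming positivity, since $P\otimes P$ is linear and $\cL_k\otimes_{\min}\cL_k$ is a closed convex cone, $P\in\maxEA_2(\cC_{\ell^2_\infty},\cL_k)$ iff $(P\otimes P)(z)\in\cL_k\otimes_{\min}\cL_k$ for every $z$ generating an extreme ray of $\cC_{\ell^2_\infty}\otimes_{\max}\cC_{\ell^2_\infty}$.

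Identifying $\cC_{\ell^2_\infty}\otimes_{\max}\cC_{\ell^2_\infty}$ with $\Pos(\cC_{\ell^2_\infty}^*,\cC_{\ell^2_\infty})\cong\Pos(\cC_{\ell^2_1},\cC_{\ell^2_\infty})$ via \eqref{equ:ChoiJamiolkowski}, the product tensors $f\otimes f'$ with $f,f'$ extreme in $\cC_{\ell^2_\infty}$ (equivalently, the rank-one positive maps) need not be checked: $(P\otimes P)(f\otimes f')=P(f)\otimes P(f')\in\cL_k\otimes_{\min}\cL_k$ automatically. For the remaining, ``entangled'', extreme rays I would perform a facial analysis of $\Pos(\cC_{\ell^2_1},\cC_{\ell^2_\infty})$, in the spirit of Appendix~A (compare Theorem~\ref{thm:ExtrPosLkClinfty} and Lemma~\ref{lem:ExtremePointsTPLkCinfty}). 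Using the group $D_4\subseteq\Aut(\cC_{\ell^2_\infty})$ of signed coordinate permutations of the square base (which is compatible with $\otimes$, preserves the class of central maps, and leaves $\alpha(\phi)$ invariant) together with the flip of the two tensor factors, I expect these entangled extreme rays to be, up to this symmetry and positive scaling, exactly the four tensors
\[
z_H \;=\; e_0\otimes e_0 \;-\;\sum_{i,j\in\{1,2\}} H_{ij}\, e_i\otimes e_j , \qquad H\in\mathcal{H} ;
\]
each $z_H$ lies in $\cC_{\ell^2_\infty}\otimes_{\max}\cC_{\ell^2_\infty}$ because every entry of $H$ equals $\pm1$ (the corresponding positive map $\cC_{\ell^2_1}\to\cC_{\ell^2_\infty}$ sends each extreme ray of $\cC_{\ell^2_1}$ onto an extreme ray of $\cC_{\ell^2_\infty}$, and one checks it is extremal), and $z_H$ has no components of mixed type $e_0\otimes e_i$ or $e_i\otimes e_0$.

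For such a $z_H$, since $P(e_0)=(\lambda,0)$ and $P(e_i)=(0,\phi_i)$ for $i\in\{1,2\}$,
\[
(P\otimes P)(z_H)\;=\;\lambda^2\,e_0\otimes e_0 \;-\; W_H ,
\]
where $W_H\in\ell^k_2\otimes\ell^k_2$ is the tensor corresponding to the $k\times k$ matrix $\phi H\phi^*$. It then remains to establish the elementary equivalence, for any $W\in\ell^k_2\otimes\ell^k_2$,
\[
\lambda^2\,e_0\otimes e_0 + W \in \cL_k\otimes_{\min}\cL_k \quad\Longleftrightarrow\quad \|W\|_1\le\lambda^2 ,
\]
with $\|\cdot\|_1$ the trace norm: for ``$\Leftarrow$'', decompose $W$ by its singular value decomposition and absorb each rank-one summand, together with a suitable fraction of $\lambda^2\, e_0\otimes e_0$, into a planar $\cL_1$-tensor of the form $\tfrac{\mu}{2}\big((1,u)\otimes(1,v)+(1,-u)\otimes(1,-v)\big)$ with $\|u\|_2,\|v\|_2\le1$; for ``$\Rightarrow$'', pair against the Choi tensors $e_0\otimes e_0+\sum_{i=1}^k e_i\otimes(Oe_i)$ of the Lorentz automorphisms $1\oplus O$, $O\in O(k)$ (which lie in $\cL_k\otimes_{\max}\cL_k$), and use $\inf_{O\in O(k)}\Tr(OW)=-\|W\|_1$. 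Applying this with $W=-W_H$, and noting $\|W_H\|_1=\|\phi H\phi^*\|_1$, the condition on $z_H$ becomes $\sqrt{\|\phi H\phi^*\|_1}\le\lambda$. Collecting the above, $\lambda\oplus\phi\in\maxEA_2(\cC_{\ell^2_\infty},\cL_k)$ iff all four quantities are $\le\lambda$, i.e.\ $\alpha(\phi)\le\lambda$.

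The main obstacle is the enumeration in the second paragraph: showing that, after setting aside the rank-one (product) rays and positivity and quotienting by the $D_4$/flip symmetry, the entangled extreme rays of $\cC_{\ell^2_\infty}\otimes_{\max}\cC_{\ell^2_\infty}$ are exhausted by the $z_H$ with $H\in\mathcal{H}$ — in particular that no extremal positive map $\cC_{\ell^2_1}\to\cC_{\ell^2_\infty}$ produces a genuinely new inequality (for instance one mixing $\phi_1,\phi_2$ both linearly and quadratically) that is not already implied by positivity and the four $\mathcal{H}$-conditions. Since $\cC_{\ell^2_1}$ and $\cC_{\ell^2_\infty}$ are both cones over polygons, $\Pos(\cC_{\ell^2_1},\cC_{\ell^2_\infty})$ is a polyhedral cone cut out by $16$ linear inequalities, so this is a finite (if tedious) computation; the remaining steps are the routine evaluation and the two-dimensional lemma above.
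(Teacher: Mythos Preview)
Your approach is correct and essentially coincides with the paper's proof. Both arguments (i) identify the positivity condition with $\max\{\|\phi_1+\phi_2\|_2,\|\phi_1-\phi_2\|_2\}\le\lambda$, (ii) state that the entangled extreme rays of $\cC_{\ell^2_\infty}\otimes_{\max}\cC_{\ell^2_\infty}$ are (up to sign/symmetry) the tensors corresponding to the central maps $1\oplus H$ with $H\in\mathcal{H}$, and (iii) reduce the image $(P\otimes P)(z_H)$ to the condition $\|\phi H\phi^*\|_1\le\lambda^2$.

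The only noteworthy difference is in step (iii): the paper simply cites the known fact that a central map $\mu\oplus w$ between Lorentz cones is entanglement breaking iff $\Nuc(w)\le\mu$ (here $\Nuc=\|\cdot\|_1$), whereas you re-prove this equivalence directly via the SVD decomposition and pairing with $1\oplus O$, $O\in O(k)$. Regarding the enumeration in step (ii), the paper, like you, does not spell it out; your observation that $\Pos(\cC_{\ell^2_1},\cC_{\ell^2_\infty})$ is polyhedral and hence the check is finite is exactly the right justification, and your hedging (``I expect'') is unnecessary --- the $D_4\times D_4\rtimes\mathrm{flip}$ symmetry you invoke genuinely reduces the eight entangled extreme rays $1\oplus(\pm H)$, $H\in\mathcal{H}$, to the four representatives you list.
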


\begin{proof}
The map $\lambda\oplus\phi$ is in $\Pos(\cC_{\ell^2_\infty},L_k)$ if and only if $\|\phi\|_{\ell^2_\infty\ra \ell^n_2}\leq \lambda$, which is equivalent to
\[
\max\Big\lbrace \| \phi_1 + \phi_2\|_{\ell^n_2},\| \phi_1 - \phi_2\|_{\ell^n_2}\Big\rbrace \leq 1.
\]
The extremal rays of the maximal tensor product $\cC_{\ell^2_\infty}\otimes_{\max} \cC_{\ell^2_\infty}$ either belong to the minimal tensor product $\cC_{\ell^2_\infty}\otimes_{\min} \cC_{\ell^2_\infty}$ or correspond to central maps $1\oplus H$ with $H:\ell^2_1\ra \ell^2_\infty$ such that $\pm H\in \mathcal{H}$. The map $\lambda\oplus\phi\in \Pos(\cC_{\ell^2_\infty},\cL_k)$ belongs to $\maxEA_2(\cC_{\ell^2_\infty},\cL_k)$ if and only if $\lambda^2 \oplus \phi H \phi^* \in \EB(\cL_k,\cL_k)$ for all $H\in\mathcal{H}$, or equivalently if $\|\phi H \phi^* \|_1 \leq \lambda^2$ for all $H\in\mathcal{H}$. Combining the condition for positivity and for being max-entanglement annihilating finishes the proof.
\end{proof}

Consider the linear maps $\phi^{(1)},\phi^{(2)}:\ell^2_\infty\ra \ell^2_2$ given by 
\[
\phi^{(1)} = \begin{pmatrix} 1 & 0 \\ 0 & 0.2 \end{pmatrix},
\]
and 
\[
\phi^{(2)} = \begin{pmatrix} 1 & 0.3 \\ 0 & 0.5 \end{pmatrix}.
\]
We find that
\[
\alpha(\phi^{(1)})+\alpha(\phi^{(2)})< \alpha(\phi^{(1)}+\phi^{(2)}) .
\]
By Theorem \ref{thm:appConvexityEA} we conclude that 
\[
\alpha(\phi^{(1)})\oplus \phi^{(1)} + \alpha(\phi^{(2)})\oplus \phi^{(2)} \notin \maxEA_2(\cC_{\ell^2_\infty},\cL_3) ,
\]
although $\alpha(\phi^{(1)})\oplus \phi^{(1)}\in \maxEA_2(\cC_{\ell^2_\infty},\cL_3)$ and $\alpha(\phi^{(2)})\oplus \phi^{(2)}\in \maxEA_2(\cC_{\ell^2_\infty},\cL_3)$. Hence, the set $\maxEA_2(\cC_{\ell^2_\infty},\cL_3)$ is not convex.

\section{The Lorentzian tensor product is not associative}\label{sec:NonAssoc}

Consider the tensors $z_\lambda\in \R^{n+1}\otimes \R^{n+1}\otimes \R^{n+1}$ given by 
\[
z_{\lambda} = \lambda e_0\otimes e_0\otimes e_0 + \sum^n_{i=1} e_i\otimes e_i\otimes e_i ,
\] 
with a parameter $\lambda\in \R_0^+$. 

\begin{prop}
The following statements hold:
\begin{enumerate}
\item We have $z_{\lambda}\in \cL_n\otimes_L \lb \cL_n\otimes_L \cC_{\ell^n_1}\rb$ if and only if $\lambda\geq 1$.
\item If $z_{\lambda}\in \lb \cL_n\otimes_L \cL_n\rb \otimes_L \cC_{\ell^n_1}$, then $\lambda\geq \sqrt{n}$.
\end{enumerate}
In particular, the Lorentzian tensor product is not associative.
\end{prop}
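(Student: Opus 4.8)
The plan is to treat the two bracketings separately, in each case peeling off the outermost tensor leg.

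\textbf{Part (1).} For the \emph{if} direction I would exhibit $z_\lambda$ as a single generator of the outer Lorentzian tensor product: $z_\lambda=(\ident_{n+1}\otimes C)(\hat{I}_n)$, where $C:\R^{n+1}\to\R^{n+1}\otimes\R^{n+1}$ is the diagonal map $C(e_0)=\lambda\,e_0\otimes e_0$ and $C(e_i)=e_i\otimes e_i$ for $1\le i\le n$. Since $\ident_{n+1}\in\Pos(\cL_n,\cL_n)$, it suffices to check $C\in\Pos(\cL_n,\cL_n\otimes_L\cC_{\ell^n_1})$. For $x\in\cL_n$ the tensor $C(x)$, read inside $\cL_n\otimes_L\cC_{\ell^n_1}$ via the identification of this cone with $\LorFact(\cL_n,\cC_{\ell^n_1})$, is the central map $(\lambda x_0)\oplus\mathrm{diag}(x_1,\dots,x_n):\cC_{\ell^n_2}\to\cC_{\ell^n_1}$; by Corollary~\ref{cor:gamma2Lor} and homogeneity it is Lorentz factorizable iff $\gamma_2\big(\mathrm{diag}(x_1,\dots,x_n):\ell^n_2\to\ell^n_1\big)\le\lambda x_0$. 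A short computation shows the $\gamma_2$-norm of a diagonal operator into an $\ell_1$-space equals the Euclidean norm of its diagonal (upper bound: factor $\mathrm{diag}(d)=\mathrm{diag}(d)\circ\ident$ with $\|\mathrm{diag}(d):\ell^n_2\to\ell^n_1\|=\|d\|_2$; lower bound: $\gamma_2\ge\|\cdot\|$), so the condition reduces to $\|(x_1,\dots,x_n)\|_2\le\lambda x_0$, which holds for all $(x_0,\vec x)\in\cL_n$ exactly when $\lambda\ge1$. For the \emph{only if} direction: applying $\otimes_{\min}\subseteq\otimes_L\subseteq\otimes_{\max}$ twice shows $z_\lambda$ lies in $\cL_n\otimes_{\max}\cL_n\otimes_{\max}\cC_{\ell^n_1}$, hence pairs non-negatively with every element of the dual cone $\cL_n\otimes_{\min}\cL_n\otimes_{\min}\cC_{\ell^n_\infty}$; testing against $(e_0+e_1)\otimes(e_0+e_1)\otimes(e_0-e_1)$ (a product of vectors in $\cL_n$, $\cL_n$ and $\cC_{\ell^n_\infty}$, respectively) gives the value $\lambda-1\ge0$.

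\textbf{Part (2).} Here I would slice off the $\cC_{\ell^n_1}$-leg. Using commutativity of $\otimes_L$ and the permutation symmetry of $z_\lambda$, membership $z_\lambda\in(\cL_n\otimes_L\cL_n)\otimes_L\cC_{\ell^n_1}$ is equivalent, via the identification $\cC_A\otimes_L\cC_B\cong\LorFact(\cC_A^*,\cC_B)$ used to define the Lorentzian tensor product, to $M\in\LorFact(\cC_{\ell^n_\infty},\cL_n\otimes_L\cL_n)$, where $M(e_0)=\lambda\,e_0\otimes e_0$ and $M(e_i)=e_i\otimes e_i$. The image of $M$ lies in the diagonal subspace $\Delta=\mathrm{span}\{e_j\otimes e_j:0\le j\le n\}$, and applying Corollary~\ref{cor:gamma2Lor} to $\mathrm{diag}(c_1,\dots,c_n):\ell^n_2\to\ell^n_2$ (for which $\gamma_2=\max_i|c_i|$) identifies $\Delta\cap(\cL_n\otimes_L\cL_n)$ with $\cC_{\ell^n_\infty}$ through $\sum_j c_j e_j\otimes e_j\mapsto(c_0,c_1,\dots,c_n)$. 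The step I expect to be the main obstacle is the lemma that the pinching $E:\sum_{jk}c_{jk}e_j\otimes e_k\mapsto\sum_j c_{jj}e_j\otimes e_j$ is positive on $\cL_n\otimes_L\cL_n$ with range inside $\Delta\cap(\cL_n\otimes_L\cL_n)$. To prove it I would take a generator $(A\otimes B)(\hat{I}_k)$ of $\cL_n\otimes_L\cL_n$ with $A,B\in\Pos(\cL_k,\cL_n)$; its pinching has $j$-th diagonal coefficient $c_j=\langle a_j,b_j\rangle$, where $a_j,b_j$ are the $j$-th rows of $A,B$. Positivity of $A,B$ forces $a_0\pm a_i\in\cL_k^*=\cL_k$ and $b_0\pm b_i\in\cL_k$, so by self-duality of $\cL_k$ one gets $c_0=\langle a_0,b_0\rangle\ge0$ and $c_0\pm c_i=\tfrac12\big(\langle a_0+a_i,b_0\pm b_i\rangle+\langle a_0-a_i,b_0\mp b_i\rangle\big)\ge0$, i.e.\ $(c_0,\dots,c_n)\in\cC_{\ell^n_\infty}$. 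Granting the lemma, $M=E\circ M$ factors the Lorentz-factorizable $M$ through the positive map $E$, so $M\in\LorFact(\cC_{\ell^n_\infty},\Delta\cap(\cL_n\otimes_L\cL_n))\cong\LorFact(\cC_{\ell^n_\infty},\cC_{\ell^n_\infty})$; under the identification $M$ becomes the central map $\lambda\oplus\ident_{\ell^n_\infty}$, whence Corollary~\ref{cor:gamma2Lor} gives $\gamma_2(\ident_{\ell^n_\infty})\le\lambda$. Since $\gamma_2(\ident_X)$ equals the Banach--Mazur distance $d(X,\ell^n_2)$ and $d(\ell^n_\infty,\ell^n_2)=\sqrt n$, this yields $\lambda\ge\sqrt n$.

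The non-associativity conclusion is then immediate: $z_1$ lies in $\cL_n\otimes_L(\cL_n\otimes_L\cC_{\ell^n_1})$ by part (1) but, for $n\ge2$, fails to lie in $(\cL_n\otimes_L\cL_n)\otimes_L\cC_{\ell^n_1}$ by part (2) (as $1<\sqrt n$), so the two bracketings define genuinely different proper cones.
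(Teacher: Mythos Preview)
Your proof is correct, and for Part~(2) it follows essentially the same strategy as the paper: apply a diagonal-pinching map to the first two tensor legs, land in a cone isomorphic to $\cC_{\ell^n_\infty}$, and conclude via Corollary~\ref{cor:gamma2Lor} that $\lambda$ must dominate $\gamma_2$ of an identity map, hence $\sqrt n$. The main difference is in how the key positivity of the pinching is established. The paper appeals to \cite[Lemma~6.2]{aubrun2023annihilating} to show that $DS(\cL_n\otimes_{\max}\cL_n)\subseteq\cC_{\ell^n_\infty}$ (a stronger statement than you need), passing through the identification $\ell^n_2\otimes_\epsilon\ell^n_2\simeq S^n_\infty$. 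Your argument, by contrast, works directly with $\cL_n\otimes_L\cL_n$: you observe that rows $a_i,b_i$ of $A,B\in\Pos(\cL_k,\cL_n)$ satisfy $a_0\pm a_i,\,b_0\pm b_i\in\cL_k$ and combine these via self-duality of $\cL_k$. This is more elementary and self-contained, at the cost of being specific to the Lorentzian (rather than maximal) tensor product.

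For Part~(1) the two arguments differ more visibly. The paper uses the structural fact $\cL_n\otimes_L\cC=\cL_n\otimes_{\max}\cC$ for any proper $\cC$, so that $\cL_n\otimes_L(\cL_n\otimes_L\cC_{\ell^n_1})$ collapses to the associative $\cL_n\otimes_{\max}\cL_n\otimes_{\max}\cC_{\ell^n_1}$; membership of $z_\lambda$ is then checked by slicing with arbitrary $\alpha\in\cC_{\ell^n_\infty}$. You instead exhibit $z_\lambda=(\ident_{n+1}\otimes C)(\hat I_n)$ explicitly and verify $C\in\Pos(\cL_n,\cL_n\otimes_L\cC_{\ell^n_1})$ by computing $\gamma_2(\mathrm{diag}(x):\ell^n_2\to\ell^n_1)=\|x\|_2$. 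Both are valid; the paper's route is shorter once one knows that $\cL_n\otimes_L=\cL_n\otimes_{\max}$, while yours avoids that fact entirely and stays within the machinery of Corollary~\ref{cor:gamma2Lor}.
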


\begin{proof}
Since $\cL_n\otimes_L \cC = \cL_n\otimes_{\max} \cC$ for any proper cone $\cC$, we have
\[
\cL_n\otimes_L \lb \cL_n\otimes_L \cC_{\ell^n_1}\rb = \cL_n\otimes_{\max} \cL_n\otimes_{\max} \cC_{\ell^n_1}, 
\] 
by associativity of the maximal tensor product. We have
\[
(\ident_{n+1}\otimes \ident_{n+1}\otimes \alpha)(z_\lambda) = \alpha_0 \lambda e_0\otimes e_0 + \sum^n_{i=1} \alpha_i e_i\otimes e_i \in \cL_n\otimes_{\max} \cL_n ,
\]
for every $\alpha\in \cC_{\ell^n_\infty} = \cC^*_{\ell^n_1}$ if and only if $\lambda\geq 1$. This proves the first statement. 

To prove the second statement consider the linear symmetrization map $S:\R^{n+1}\otimes \R^{n+1}\ra \R^{n+1}\otimes \R^{n+1}$ given by 
\[
S(e_i\otimes e_j) =\begin{cases} e_i\otimes e_i, &\text{ if } i=j \\ 0, &\text{ otherwise,} \end{cases} 
\] 
and the linear map $D:\R^{n+1}\otimes \R^{n+1}\ra \R^{n+1}$ given by 
\[
D(e_i\otimes e_j) =\begin{cases} e_i, &\text{ if } i=j \\ 0, &\text{ otherwise.} \end{cases} 
\]
By~\cite[Lemma 6.2.]{aubrun2023annihilating}, we have $S(\cL_n\otimes_{\max} \cL_n) \simeq \cC_{\ell^n_2\otimes_{\epsilon} \ell^n_2}$. Since $\ell^n_2\otimes_{\epsilon} \ell^n_2\simeq S^{n}_\infty$, the space of $n\times n$ matrices equipped with the operator norm, we have $D(\cC_{\ell^n_2\otimes_{\epsilon} \ell^n_2})\subseteq \cC_{\ell^n_\infty}$. We conclude that $DS(\cL_n\otimes_{L} \cL_n) \subseteq \cC_{\ell^n_\infty}$. By definition of the Lorentzian tensor product we find that 
\[
(DS\otimes \ident_{n+1})\lb \lb \cL_n\otimes_L \cL_n\rb \otimes_L \cC_{\ell^n_1}\rb \subseteq \cC_{\ell^n_\infty} \otimes_L \cC_{\ell^n_1}.
\]
Now, assume that $z_\lambda\in \lb \cL_n\otimes_L \cL_n\rb \otimes_L \cC_{\ell^n_1}$ implying that 
\[
(DS\otimes \ident_{n+1})\lb z_\lambda \rb = \lambda e_0\otimes e_0 + \sum^n_{i=1} e_i\otimes e_i \in \cC_{\ell^n_\infty}\otimes_{L} \cC_{\ell^n_1}.
\]
We can identify this tensor with the central map $\lambda\oplus \ident_{\ell^n_1\ra \ell^n_1}\in \LorFact\lb \cC_{\ell^n_1},\cC_{\ell^n_1}\rb$. Using Theorem \ref{thm:Main1}, we find that $\lambda\geq \gamma_2(\ident_{\ell^n_1\ra \ell^n_1})=d(\ell^n_1,\ell^n_2) = \sqrt{n}$, where $d$ denotes the Banach-Mazur distance (see for example~\cite{tomczak1989banach}).  

\end{proof}

\bibliographystyle{alpha}
\bibliography{mybibliography.bib}

\end{document}